\newcommand\ifempty[3]{\expandafter\ifstrempty{#1}{#2}{#3}}
\newcommand\ifnempty[2]{\ifempty{#1}{}{#2}}
\newcommand\ifthenabove[1]{%
    \ifnempty{#1}{^{#1}}%
}
\newcommand\mpar@math[3]{\ensuremath{\left(#1\right)\ifnempty{#2}{_{#2}}\ifnempty{#3}{^{#3}}}}
\newcommand\mpar@text[3]{\ensuremath{(#1)\ifnempty{#2}{_{#2}}\ifnempty{#3}{^{#3}}}}
\newcommand\mpar@belowabove[3]{\mathchoice{\mpar@math{#1}{#2}{#3}}{\mpar@text{#1}{#2}{#3}}{\mpar@text{#1}{#2}{#3}}{\mpar@text{#1}{#2}{#3}}}
\def\mpar@grabbelow#1#2#3_#4{%
  \mpar@grab{\ifnempty{#1}{#1, }#4}{#2}{#3}%
}
\def\mpar@grabbove#1#2#3^#4{%
  \mpar@grab{#1}{\ifnempty{#2}{#2, }#4}{#3}%
}
\newcommand\mpar@grab[3]{%
  \@ifnextchar_{\mpar@grabbelow{#1}{#2}{#3}}%
  {\@ifnextchar^{\mpar@grababove{#1}{#2}{#3}}%
  {\mpar@belowabove{#3}{#1}{#2}}}%
}
\newcommand\mpar{\mpar@grab{}{}}
\newcommand\funp{\mpar}
\newcommand\funu[1]{\ifnempty{#1}{\funp{#1}}}
\newcommand\fund[2]{\ifnempty{#1#2}{\funp{\sep{#1}{#2}}}}
\newcommand\funt[3]{\ifnempty{#1#2#3}{\funp{\sept{#1}{#2}{#3}}}}
\newcommand\sep[2]{#1\ifnempty{#1}{\ifnempty{#2}{,}}#2}
\newcommand\sept[3]{\sep{#1}{#2}\ifnempty{#1#2}{\ifnempty{#3}{,}}#3}
\newcommand{\eqdef}{\triangleq}
\newcommand\cons[2]{#1; #2}
\newcommand\emptylist{[]}
\newcommand\mset[2][]{%
    \ifempty{#2#1}%
	    {\emptyset}%
	    {\left\{%
            #1%
            \ifnempty{#1}{\middle|}%
		    #2%
	    \right\}}%
}
\newcommand\st{\,\middle|\,}
\newcommand\myvecbase[3][]{\ifnempty{#1}{#1\funu}{#2{1}}..\ifnempty{#1}{#1\funu}{#2{#3}}}
\newcommand\mandatorytofacultative[2]{#1[#2]{}}
\newcommand\myvec[2][]{\myvecbase[#1]{\mandatorytofacultative{#2}}}
\newcommand\myvecp[3][]{\mpar{\myvec[#1]{#2}{#3}}}
\newcommand\myvecs[3][]{\mset{\myvec[#1]{#2}{#3}}}
\newcommand\mandatorytoindex[2]{#1\ifnempty{#2}{_{#2}}}
\newcommand\myveci[2][]{\myvecbase[#1]{\mandatorytoindex{#2}}}
\newcommand\myvecip[3][]{\mpar{\myveci[#1]{#2}{#3}}}
\newcommand\myvecmap@fun[6]{\ifnempty{#1}{#1\funu}{#2[#6]{}} #3 \ifnempty{#4}{#4\funu}{#5[#6]{}}}
\newcommand\myvecmap[6]{%
    \myvecbase{\myvecmap@fun{#1}{#2}{#3}{#4}{#5}}{#6}}
\newcommand\dom{\mathit{dom}\funu}
\newcommand{\tvar}[1][]{x_{t\ifnempty{#1}{_{#1}}}}
\newcommand{\fvar}[1][]{x_{f\ifnempty{#1}{_{#1}}}}
\DeclareDocumentCommand{\svar}{ m O{} }{#1{\ifnempty{#2}{_{#2}}}}
\newcommand{\ivar}{x_\sigma}
\newcommand{\ovar}{x_o}
\newcommand\Sort[1][\typenv]{\ensuremath{\mathit{Sort}_{#1}\funu}}
\newcommand\sort[1][]{\ensuremath{s\ifnempty{#1}{_{#1}}}}
\newcommand\sortp[1][]{\ensuremath{s'\ifnempty{#1}{_{#1}}}}
\newcommand\Tvar{\ensuremath{\mathit{Tvar}\funu}}
\newcommand\sig{\ensuremath{\mathit{sig}\funu}}
\newcommand{\termenv}{E}
\newcommand{\term}[1][]{t\ifnempty{#1}{_{#1}}}
\newcommand{\progpoint}{\texttt{pp}}
\newcommand\ppe{\epsilon}
\newcommand\ppc[3][]{\ifnempty{#2}{#2}{\cdot}\ifnempty{#3}{\ifempty{#1}{#3}{#3[#1]}}}
\newcommand\cppv[3][]{\ifnempty{#2}{#2}{\rule[0.5ex]{2pt}{0.5pt}}\ifnempty{#3}{\ifempty{#1}{#3}{#3[#1]}}}
\newcommand\syntaxcontraint[1]{\left[#1\right]}
\newcommand\subterm[2]{#1@#2}
\newcommand{\ppenv}{\mathit{HPP}}
\newcommand{\bvar}{V}
\newcommand\rulen[1]{\textsc{#1}}
\newcommand\Rule[3]{\ensuremath{\rulen{#1}{\mpar{#2}} \coloneq #3}}
\newcommand\Rulea[3]{\ensuremath{\rulen{#1}{\mpar{#2}} & \coloneq #3}} 
\newcommand{\Rules}{\ensuremath{\mathit{Rules}}}
\newcommand\derivH{H}
\newcommand\deriv[3]{\derivH\funu{#1\ifnempty{#1}{,}#2,#3}}
\newcommand\branches[2][\bvar]{\mpar{#2}_{#1}}
\newcommand\branchesV{\branches}
\newcommand{\eqp}{\mathrel{?\!\vartriangleright}}
\newcommand\filter[3]{#1\funu{#2}\ifnempty{#3}{\eqp #3}}
\newcommand{\skel}[1][]{S{\ifnempty{#1}{_{#1}}}}
\newcommand\litS{\ensuremath{\mathit{lit}}}
\newcommand\identS{\ensuremath{\mathit{ident}}}
\newcommand\exprS{\ensuremath{\mathit{expr}}}
\newcommand\statS{\ensuremath{\mathit{stat}}}
\newcommand\jsid[2][]{{\textup{\texttt{\small #2}}\ifthenabove{#1}}}
\newcommand\constn{\mathit{const}} 
\newcommand\const[1]{\constn\funu{#1}}
\newcommand\varn{\mathit{var}} 
\newcommand\var[1]{\varn\funu{#1}}
\newcommand\access[1]{!#1}
\newcommand{\erefn}{\ensuremath{\mathit{ref}}}
\newcommand\eref[1]{\erefn\funu{#1}}
\newcommand{\ein}{\ensuremath{\mathit{in}}}
\newcommand{\eoutn}{\ensuremath{\mathit{out}}}
\newcommand{\eout}[1]{\eoutn\mpar{#1}}
\newcommand\sskip{\mathit{skip}}
\newcommand\sthrow{\mathit{throw}}
\newcommand\iname{\mathit{if}}
\newcommand\wname{\mathit{while}}
\newcommand\asnescape[2]{#1\ {:=}\ #2}
\newcommand{\heapwrite}[2]{#1 \leftarrow #2}
\newcommand\seq[2]{{#1};\ #2}
\newcommand\stry[2]{\mathit{try} \, #1 \, \mathit{catch} \, #2}
\newcommand\sif[3]{\iname\ifnempty{#1#2#3}{\,#1\,#2\ifnempty{#3}{\,#3}}}
\newcommand\while[2]{\wname\ifnempty{#1#2}{\,#1\,#2}}
\newcommand{\interpIn}{\Sigma}
\newcommand{\interpOut}{O}
\newcommand\vdefined[1]{{#1\downarrow}}
\DeclareDocumentCommand{\interpDef}{ O{\interpIn} O{I} m }{\ensuremath{\vdefined{\left\llbracket #1, #3 \right\rrbracket^{#2}}}}
\DeclareDocumentCommand{\interp}{ O{\interpIn} O{I} m O{\interpOut}}{\ensuremath{%
    \left\llbracket #3 \right\rrbracket\ifnempty{#2}{^{#2}}\ifnempty{#1}{\mpar{#1}}%
    \ifnempty{#4}{ \Downarrow #4}}}
\DeclareDocumentCommand{\interpbDef}{ O{\interpIn} O{I} m O{\bvar}
}{\ensuremath{\vdefined{\left\llbracket #3
\right\rrbracket^{#2}_{#4}\mpar{#1}}}}
\DeclareDocumentCommand{\interpb}{ O{\interpIn} O{I} m O{\interpOut}
  O{\bvar}}{\ensuremath{\left\llbracket #3 \right\rrbracket\ifnempty{#2}{^{#2}}\ifnempty{#5}{_{#5}}\mpar{#1}
    \Downarrow #4}}
\DeclareDocumentCommand{\merge}{ O{I} m m O{\interpIn'} O{\bvar}
}{\ensuremath{\interpb[#2][#1]{\bigoplus_{#3}}[#4][#5]}}
\newcommand{\outfun}{\mathcal{O}}
\newcommand{\csem}{\ensuremath{\Downarrow}}
\newcommand{\asem}{\ensuremath{\abs{\csem}}}
\DeclareDocumentCommand{\typvar}{ m O{} }{\ensuremath{\tau\ifnempty{#1}{#1}\ifnempty{#2}{_{#2}}}}
\newcommand\typenv{\Gamma}
\newcommand\extsymbol{ + }
\newcommand\extenv[3]{#1\extsymbol#2\mapsto#3}
\newcommand\envrestr[2][\bvar]{\left.#2\right|_{#1}} 
\newcommand\typin{\mathit{in}\funu}
\newcommand\typout{\mathit{out}\funu}
\newcommand{\typsig}{\mathit{fsort}\funu}
\newcommand{\dfvarset}{\ensuremath{\mathcal{D}}}
\DeclareDocumentCommand{\wfinterp}{ O{\typenv,\dfvarset} m
  O{\mpar{\typenv',\dfvarset'}} O{} }{\interpb[#1][\texttt{wf}]{#2}[#3][#4]}
\newcommand\valT{\mathit{val}}
\newcommand\intT{\mathit{int}}
\newcommand\boolT{\mathit{bool}}
\newcommand\locT{\mathit{loc}}
\newcommand\stateT{\mathit{store}}
\newcommand\heapT{\mathit{heap}}
\newcommand\IOstateT{\mathit{state}}
\newcommand\valIOstateT{\mathit{valState}}
\newcommand\excIOstateT{\mathit{excState}}
\newcommand\inT{\mathit{in}}
\newcommand\outT{\mathit{out}}
\newcommand\unit{()}
\newcommand{\concrenv}{\Sigma}
\newcommand{\tripleset}{T}
\DeclareDocumentCommand{\concrinterp}{ O{\concrenv,\tripleset} m
  O{\mpar{\concrenv',\tripleset}} O{} }{ \interpb[#1][]{#2}[#3][#4] }
\DeclareDocumentCommand{\filterinterp}{ O{I} m m O{\Downarrow} m }{\ensuremath{\left\llbracket #2
\right\rrbracket^{#1}\ifnempty{#3}{\funu{#3}\ifnempty{#5}{#4}}#5}}
\newcommand{\concrfilter}[3]{\filterinterp[]{#1}{#2}{#3}}
\newcommand{\cinterpf}[1][]{\immediateconsequence\ifnempty{#1}{^{#1}}\funu}
\newcommand{\cval}[1][]{v\ifnempty{#1}{_{#1}}}
\newcommand{\cvalp}[1][]{v'\ifnempty{#1}{_{#1}}}
\newcommand{\aterm}[1][]{\term^{\#}\ifnempty{#1}{_{#1}}}
\newcommand{\atermp}[1][]{\term^{\#\prime}\ifnempty{#1}{_{#1}}}
\newcommand{\aval}[1][]{v^{\#}\ifnempty{#1}{_{#1}}}
\newcommand{\avalp}[1][]{v\ifnempty{#1}{_{#1}}^{\#\prime}}
\newcommand{\cstate}[1][]{\sigma\ifnempty{#1}{_{#1}}}
\newcommand{\astate}[1][]{\sigma^{\#}\ifnempty{#1}{_{#1}}}
\DeclareDocumentCommand{\abstrinterp}{ O{\absflag} O{\abstrenv,\abstripleset}  m
  O{\absflag} O{\abstrenvp,\abstripleset} O{} }{ \interpb[#1,#2][\#]{#3}[\mpar{#4,#5}][#6] }
\newcommand{\concr}{\gamma}
\newcommand{\abstrenv}[1][]{\Sigma^\#\ifnempty{#1}{_{#1}}}
\newcommand{\abstrenvp}[1][]{\Sigma^{\#\prime}\ifnempty{#1}{_{#1}}}
\newcommand\abstripleset[1][]{{T^\#\ifnempty{#1}{_{#1}}}} 
\newcommand{\abstrfilter}[3]{\filterinterp[\#]{#1}{#2}[=]{#3}}
\newcommand\ainterpf[1][]{{\immediateconsequence^\#}\ifnempty{#1}{^{#1}}\funu}
\newcommand{\stSplit}{\mathit{Sp}\funu}
\newcommand\sinterpf[1]{\stSplit{\ainterpf{\stSplit{#1}}}}
\def\abot@complete#1[#2]{\bot{\ifempty{#2}{{\ifnempty{#1}{_{#1}}}}{_{#1[#2]}}}}
\newcommand\abot[1]{\@ifnextchar[{\abot@complete{#1}}{\abot@complete{#1}[]}}
\newcommand\OKst[1][]{\mathit{OKst}\ifnempty{#1}{_{#1}}\fund}
\newcommand\OKout[1][]{\mathit{OKout}\ifnempty{#1}{_{#1}}\fund}
\newcommand{\ppgen}[1][t_0]{\ppenv_{#1}\funt}
\newcommand{\eqset}{\mathcal{C}}
\newcommand{\eqdfvar}[1][N]{\mathcal{D}_{\rulen{#1}}\funu}
\newcommand{\eqgen}[3][\rulen{N},\progpoint,\eqset]{\interp[#1][c]{#2}[#3]}
\newcommand{\eqgenb}[3][\rulen{N},\progpoint,\eqset]{\interpb[#1][c]{#2}[#3]}
\newcommand{\eqfilter}[3]{\filterinterp[c]{#1}{#2}[=]{#3}}
\newcommand\Gen{\mathit{Gen}\funu}
\newcommand\Gent{\mathit{PP}\funu}
\newcommand{\solution}{\mathcal{S}}
\newcommand\@absstarnothing[1]{\ensuremath{\mathrel{#1^{\#}}}}
\newcommand\@absnostarnothing[1]{\ensuremath{{#1^{\#}}}}
\def\@absstarbrack#1[#2]{\ensuremath{\mathrel{#1^{\#}\left[#2\right]}}}
\def\@absnostarbrack#1[#2]{\ensuremath{{#1^{\#}\left[#2\right]}}}
\newcommand\@absstarnounder[1]{%
    \@ifnextchar[{\@absstarbrack{#1}}{\@absstarnothing{#1}}}
\newcommand\@absnostarnounder[1]{%
    \@ifnextchar[{\@absnostarbrack{#1}}{\@absnostarnothing{#1}}}
\def\@absstarunder#1_#2{\ensuremath{\mathrel{#1_{#2}^{\#}}}}
\def\@absnostarunder#1_#2{\ensuremath{{#1_{#2}^{\#}}}}
\newcommand\@absstarchoice[1]{%
    \@ifnextchar_{\@absstarunder{#1}}{\@absstarnounder{#1}}}
\newcommand\@absnostarchoice[1]{%
    \@ifnextchar_{\@absnostarunder{#1}}{\@absnostarnounder{#1}}}
\newcommand\abs{\@ifstar\@absstarchoice\@absnostarchoice}
\newcommand\absint[1]{\llbracket #1 \rrbracket^{\#}}
\newcommand\itvl[2]{[ #1 , #2]}
\newcommand\ptrue{\textit{true}}
\newcommand\pfalse{\textit{false}}
\newcommand\btrue{\ptrue}
\newcommand\bfalse{\pfalse}
\newcommand\eread[2]{#1\funu{#2}}
\newcommand\ewrite[3]{#2\left[#1 \leftarrow #3\right]}
\newcommand\immediateconsequence{\mathcal{H}}
\newcommand\evalr{\ensuremath{\Downarrow}}
\newcommand\extsig[3]{#1\extsymbol #2 \mapsto #3}
\DeclareDocumentCommand{\extsigvec}{ m O{} m O{} m O{n} }{\ifnempty{#1}{#1\extsymbol}\myvecmap{#2}{#3}{\mapsto}{#4}{#5}{#6}}
\newcommand{\While}{\textsc{While}}
\newcommand{\halt}{\bot}
\newcommand{\cont}{\top}
\newcommand{\absflag}{f}
\newcommand\K{\ensuremath{\mathbb{K}}}
\begin{document}

\theoremstyle{definition}
\newtheorem{requirement}[theorem]{Requirement}

\title[Skeletal Semantics]{Skeletal Semantics and Their Interpretations}

\titlenote{This research has been partially supported by the ANR projects AJACS
  ANR-14-CE28-0008 and CISC ANR-17-CE25-0014-01.
  Bodin and Gardner were partially supported by the EPSRC programme
  grant `REMS: Rigorous Engineering of Mainstream Systems', EP/K008528/1.}


\author{Martin Bodin}
\affiliation{\institution{Imperial College London}\country{United Kingdom}}
\author{Philippa Gardner}
\affiliation{\institution{Imperial College London}\country{United Kingdom}}
\author{Thomas Jensen}
\affiliation{\institution{Inria, Univ Rennes, IRISA}\country{France}}
\author{Alan Schmitt}
\affiliation{\institution{Inria, Univ Rennes, IRISA}\country{France}}

\begin{abstract}
  The development of mechanised language specification based on
  structured operational semantics, with applications to verified
  compilers and sound program analysis, requires huge effort.  General
  theory and frameworks have been proposed to help with this
  effort. However, none of this work provides a systematic way of
  developing concrete and abstract semantics, connected together by a
  general consistency result. We introduce a {\em skeletal semantics}
  of a language, where each skeleton describes the complete semantic
  behaviour of a language construct.  We define a general notion of
  {\em interpretation}, which provides a systematic and
  language-independent way of deriving semantic judgements from the
  skeletal semantics.  We explore four generic interpretations: a
  simple well-formedness interpretation; a concrete interpretation; an
  abstract interpretation; and a constraint generator for
  flow-sensitive analysis.  We prove  general {\em consistency
    results} between interpretations, depending only on simple
  language-dependent lemmas. We illustrate our ideas using a simple
  \While{} language. 
\end{abstract}

\begin{CCSXML}
<ccs2012>
<concept>
<concept_id>10003752.10010124.10010131</concept_id>
<concept_desc>Theory of computation~Program semantics</concept_desc>
<concept_significance>500</concept_significance>
</concept>
</ccs2012>
\end{CCSXML}

\ccsdesc[500]{Theory of computation~Program semantics}

\keywords{programming language, semantics, abstract interpretation} 

\maketitle

\section{Introduction}

Plotkin's Structural Operational Semantics~\cite{Plotkin:81:SOS}
provides a methodology for formally describing a programming language
using a collection of inference rules. It has been widely used to
provide, for example, mechanised language specifications of
substantial parts of ML~\cite{Owens:Esop:08},
C~\cite{Norrish-PhD,Blazy-Leroy-Clight-09} and
JavaScript~\cite{BodinChargueraudFilarettiGardnerMaffeisNaudziunieneSchmittSmith2014}. These
specifications have, in turn, been used to build verified
compilers~\cite{2006-Leroy-compcert,CakeML:POPL:14} and to develop
sound program
analysis~\cite{KleinVerifiedJavaBCV,Cachera:05:Extracting,verasco}.
Such language specifications and their applications require huge
effort, stretching the fundamental theory and tools to their 
limits.  Researchers have therefore spent considerable thought
developing general theories and frameworks where some of this effort
can be unified for a wide class of languages.

Abstract interpretation~\cite{Cousot:77:AbstractInterpretation} is a
well-known general theory for analysing programs. It provides general
definitions for describing when an abstract semantics is consistent
(sound) with respect to a concrete semantics, and even suggests a
methodology for how to construct consistent abstract semantics from
concrete semantics~\cite{Cousot:98:Marktoberdorf,
  MidtgaardJ:08,VanHorn:11:Abstracting}.  We focus on abstract
semantics arising from concrete operational semantics. A prominent
example can be found in the Verasco project~\cite{verasco} which
provides a Coq-certified static analyser based on abstract
interpretation, specifically targeting CompCert's mechanised C
specification. Schmidt~\cite{Schmidt:95:Natural, schmidt1997abstract}
has demonstrated how to build   abstract derivations from  concrete
derivations arising from an operational semantics, illustrating a close
connection between the abstract and concrete semantics.
The concepts are general, but the work does not attempt to be systematic.  Inspired
by Schmidt, Bodin {\em et al.}~\cite{cpp2015} have 
identified a general rule format that can be systematically  instantiated to both
concrete and abstract semantics, with a general consistency
result. However, their general rule format is based on a non-standard
style of operational semantics, called {\em pretty-big-step
  operational semantics}~\cite{chargueraud2013pretty}, introduced to
provide a Coq-mechanised specification of
JavaScript~\cite{BodinChargueraudFilarettiGardnerMaffeisNaudziunieneSchmittSmith2014}. It
does not provide a general systematic approach for constructing an
abstract semantics from a standard operational semantics.


A general framework provides a unifying meta-language for writing
operational inference rules, in order to develop general environments
for analysis~\cite{DBLP:conf/lics/HarperHP87,
  DBLP:conf/cade/PfenningS99,
  rosu-2010-jlap,jung2017iris}.
Much of the work on frameworks does not aim to describe abstract
analysis. 
One notable exception is the Iris
framework~\cite{jung2017iris} for reasoning about concurrent
programs. 
Iris provides a systematic
method for building a concurrent program logic from concrete
operational semantics, proving a general consistency result.  It
starts from a concrete operational semantics and generically builds
the program logic. Consequently, the general consistency result relies
on language-dependent lemmas which require an induction over the
possibly complex constructs of the language.   It does not work with
abstract semantics in general, and the lemmas associated with the 
general consistency result are difficult to prove.

%

We introduce a new approach. We have developed a meta-language, which we call a {\em skeletal
  semantics}, from which it is possible to construct systematically
both concrete and abstract semantics,  and prove a
general consistency result.  Our skeletal semantics comprises:
\begin{itemize}
\item {\em
  skeletons}, where each skeleton describes the complete behaviour of
one language construct;
\item generic {\em interpretations}, which systematically derive 
 semantic  judgements from the 
skeletons: for example, a generic concrete interpretation built using 
the usual concrete judgements of an operational semantics, parameterised by an input state,
command and output state, 
and a generic  abstract interpretation built from more abstract
judgements over  abstract domains;
\item a general  {\em consistency result} between interpretations,
 which depends  on simple language-dependent  lemmas. 
\end{itemize}
Our definitions of skeletal semantics and interpretations have been
mechanised in the Coq theorem prover, and the consistency result
proved.

Skeletal semantics can be used to describe languages specified
using big-step operational semantics and languages specified using an
English standard such as the ECMAScript standard. In this introductory
paper, we
focus on a simple \While{} language as the illustrative example; the
lambda calculus
is given in the Coq artefact. Consider the usual $\sif{}{}{}$ command
of a \While{} language, whose behaviour is typically defined in an
operational semantics using two standard rules for the true and false
case. Instead, in our skeletal semantics, the behaviour of the
$\sif{}{}{}$ command is given by one skeleton comprising: a semantic {\em judgement}, in this case
parameterised by input state, expression and value, and instantiated
via the interpretations with, for example, the usual concrete and
abstract judgements for evaluating expressions; then a {\em branch} of
two paths guarded by {\em filters} for determining the true and false
case, followed by judgements for the appropriate subcommands.
Our $\sif{}{}{}$ skeleton thus describes the
information given in the two normal $\sif{}{}{}$ rules, collected
together under one syntactic construct.

Skeletons provide all the information necessary to give systematically
both concrete and abstract interpretations. Intuitively, our generic
concrete interpretation picks one path from each branching/merging of
the skeleton, whereas our generic abstract interpretation
merges
all the appropriate paths. In fact, our interpretations span many
different types of analysis. The paper contains a simple
well-formedness interpretation for simple sorts, suggesting that we
can give many forms of standard well-formedness result associated with
states and types. We also give an interpretation building a
constraint generator for flow-sensitive analysis. 
We discuss
other forms of analyses in future work.

We have proved general consistency results between interpretations,
which depend on simple language-dependent filter lemmas.  These
filter lemmas only describe properties of the filters of a language,
which are functions on the  language values. The complexity of proving
these filter lemmas thus only depends on the complexity the filters,
which are simple in comparison with the complexity of the whole
language. We explore the instantiation of our consistency result for
our \While{} language, demonstrating 
the consistency of the abstract interpretation with respect to the
concrete interpretation  for a selection of domains, as well as the
consistency between the constraint
generation and the abstract interpretation.

In summary, we have come a long way to answering the challenge of
developing a language-independent framework for relating concrete and
abstract semantics. The real test will come  when we move from the
simple languages explored in this paper to real-world languages such
as OCaml and JavaScript, discussed in the future work.

\section*{Example: The \While{} Language}


We demonstrate our skeletal semantics in action using the simple
conditional statement from the \While{} language. 
Consider the usual concrete rules associated with the conditional statement
 in Figure~\ref{fig:if:concrete}, and  the abstract
rules  in Figure~\ref{fig:if:abstract}, 
supposing that the Booleans are abstracted by the usual four-valued lattice
given by \(\mset{\abs\btrue, \abs\bfalse, \top_{\boolT},
  \bot_{\boolT}}\). 
These abstract rules are intuitively correct,
but they are first built in an {ad hoc} way and 
then shown to be related to the concrete rules using a Galois connection.
More generally,
the systematic construction of abstract rules from concrete rules requires a deep understanding
of how the analysed programming language evaluates expressions:
in a case like a vanilla \While{} language, this is quite
straightforward; for a complex language such as JavaScript~\cite{ECMA2018,Maffeis2008An-Operational-Seman,BodinChargueraudFilarettiGardnerMaffeisNaudziunieneSchmittSmith2014},
the relationship  between the concrete and abstract semantics can be difficult
to get right.


\begin{figure}
    \begin{mathpar}
        \inferrule{
            \sigma, e \evalr \btrue \\
            \sigma, t_1 \evalr \ovar}
            {\sigma, \sif{e}{t_1}{t_2} \evalr \ovar}
        \and
        \inferrule{
            \sigma, e \evalr \bfalse \\
            \sigma, t_2 \evalr \ovar}
            {\sigma, \sif{e}{t_1}{t_2} \evalr \ovar}
    \end{mathpar}
    \caption{Usual concrete rules for the \(\sif{}{}{}\) construct}
    \label{fig:if:concrete}
\end{figure}
\begin{figure}
    \begin{mathpar}
        \inferrule{
            \sigma, e \evalr \abs\btrue \\
            \sigma, t_1 \evalr \ovar}
            {\sigma, \sif{e}{t_1}{t_2} \evalr \ovar}
        \and
        \inferrule{
            \sigma, e \evalr \abs\bfalse \\
            \sigma, t_2 \evalr \ovar}
            {\sigma, \sif{e}{t_1}{t_2} \evalr \ovar}
        \and
        \inferrule{
            \sigma, e \evalr \top_{\boolT} \\
            \sigma, t_1 \evalr \ovar \\
            \sigma, t_2 \evalr \ovar}
            {\sigma, \sif{e}{t_1}{t_2} \evalr \ovar}
        \and
        \inferrule{
            \sigma, e \evalr \bot_{\boolT}}
            {\sigma, \sif{e}{t_1}{t_2} \evalr \bot}
    \end{mathpar}
    \caption{Usual abstract rules for the \(\sif{}{}{}\) construct}
    \label{fig:if:abstract}
\end{figure}
\begin{figure}
  \begin{align*}
      \Rulea{If}{\sif{\tvar[1]}{\tvar[2]}{\tvar[3]}}
      {\left[\deriv{\ivar{}}{\tvar[1]}{\fvar[1]};
      \branchesV[\mset{\ovar{}}]{\begin{aligned}
          &\filter{\texttt{isTrue}}{\fvar[1]}{};
          \deriv{\ivar{}}{\tvar[2]}{\ovar{}}\\
          &\filter{\texttt{isFalse}}{\fvar[1]}{};
          \deriv{\ivar{}}{\tvar[3]}{\ovar{}}
        \end{aligned}}
                                   \right]}
  \end{align*}
    \caption{Skeleton for the \(\sif{}{}{}\) construct}
    \label{fig:if:ours}
\end{figure}

We define the skeletal semantics in Section~\ref{sec:rule-format},
which provides a general meta-theory
for defining language semantics.
Figure~\ref{fig:if:ours} shows the skeleton associated with  the
\(\sif{}{}{}\) construct,  with generic subterms denoted by 
 \(\tvar[1]\), \(\tvar[2]\),
and \(\tvar[3]\),  input state \(\ivar\) and output state \(\ovar\).
Judgements of the form $\deriv{-}{\tvar[i]}{-}$ identify the
required subcomputations associated with  the subterms $\tvar[1], \tvar[2], \tvar[3]$.
 The skeleton stitches these judgements  together,
using the input and output states, the internal  symbolic variable
$\fvar[1]$, and  the branching 
which identifies  paths through the skeleton  using the
filters \(\mathtt{isTrue}\) and \(\mathtt{isFalse}\), resulting in the
output state  \(\ovar\).
Such a skeleton thus explicitly describes both the data flow and the control flow
associated with a language construct,
identifying the common pattern  underlying the concrete and abstract rules.

%
%

%

We provide a general definition of interpretation for our skeletal semantics
in Section~\ref{SEC:INTERP} and study four generic interpretations:
\begin{itemize}
    \item A simple well-formedness interpretation
      (Section~\ref{SEC:INTERP}), which states that
      the stitching of the skeleton in Figure~\ref{fig:if:ours} respects the sorting of
      the basic constructs. 
    \item The concrete interpretation
      (Section~\ref{SEC:CONCRETE-INTERPRETATION}),
which intuitively picks one path from each branching of the skeleton, corresponding to
the two rules of Figure~\ref{fig:if:concrete}.
    \item The abstract interpretation (Section~\ref{SEC:ABSTRACT-INTERPRETATION}),
        whose complex definition 
(Figure~\ref{fig:abstract-interpretation})
  boils down to the intuitive description given by the  rule of Figure~\ref{fig:if:abstract:ours}:
        a rule with optional branches,
        considering all paths compatible with
        the return value of the expression \(e\).
        This rule naturally subsumes the four rules of Figure~\ref{fig:if:abstract}.
    \item A constraint generator for flow-sensitive static analysis
      (Section~\ref{SEC:CONSTRAINTS}).
        Although these constraints are different in nature to the abstract semantics,
        they are expressed in our meta-theory using the same
        mechanism: that is, 
        an interpretation of the skeletal semantics.
        This provides a strong connection between them.
\end{itemize}
We also provide general definitions  of \emph{consistency} between
interpretations (Section~\ref{sec:consistency}), with  general
consistency proofs based on filter lemmas
(Section~\ref{subsec:proving-consistency}).
The shared structure of our different interpretations greatly eases the proof process.
We use our consistency definitions  to show that the abstract interpretation is correct
with respect to the concrete interpretation,
and that any solution to the constraints given by our constraint
generator must give rise to a correct
abstract semantics.

Throughout the paper, we  instantiate  our definitions and results to
the \While{} language as a way of  introducing our ideas and 
demonstrating  how classic proof techniques based on an abstract interpretation of
\While{} can be captured with our approach (Section~\ref{SEC:PROOF-TECHNIQUES}).
We however emphasise that skeletons and interpretations,
as well as their consistency proofs,
are generic and can be applied to any programming language. To begin
to illustrate this,
we extend our \While\; language with exceptions, input/output and a heap in
Section~\ref{sec:while2}.

The definitions and proofs of Sections \ref{sec:rule-format} to
\ref{SEC:ABSTRACT-INTERPRETATION} have been formalised in Coq; those
of  Sections~\ref{SEC:PROOF-TECHNIQUES}
and~\ref{SEC:CONSTRAINTS} have been proven on paper.  They are all
available from the companion
website\footnote{\url{http://skeletons.inria.fr}}.

\begin{figure}
    \begin{mathpar}
        \inferrule{
            \astate, e \evalr \aval \\
            \mpar{\absint{\texttt{isTrue}}\funu{\aval} \implies \astate, t_1 \evalr \astate[o]} \\
            \mpar{\absint{\texttt{isFalse}}\funu{\aval} \implies \astate, t_2 \evalr \astate[o]}}
        {\astate, \sif{e}{t_1}{t_2} \evalr \astate[o]}
    \end{mathpar}
    \caption{The abstract interpretation
      of the \(\sif{}{}{}\) construct: intuitive description.}
    \label{fig:if:abstract:ours}
\end{figure}

\section{Skeletal Semantics}\label{sec:rule-format}

\subsection{Terms}\label{subsec:terms}

Terms \(t\) of a skeletal semantics are built using base terms, term variables,
and constructors. Base terms are left unspecified and correspond to the basic
blocks of the syntax, such as literals or program identifiers. They are
instantiated by interpretations. We assume a countable set of term variables,
ranged over by \(\tvar{}\), and a finite set of constructors, ranged over by
\(c\). A term is thus a base term, a term variable, or a constructor applied to
terms.

We also assume a countable set of {\em sorts}, ranged over by \(s\). The sorts
are separated into \emph{base sorts}, for base terms, and \emph{program sorts},
for terms built using constructors. Any base term belongs to a single base sort.
The \emph{signature} of a constructor \(c\), written \(\sig{c}\), is of the form
\(\myvecp{\sort}{n} \rightarrow \sort\), where \(n\) is the arity of \(c\), the
\(\sort_{i}\) for $i = 1..n$ are sorts, and \(\sort\) is a program sort.

\paragraph{Running Example} For the  \While{} language, the base
sorts are \(\identS\) for the program
  variables and \(\litS\) for the literals. Program sorts are \(\exprS\) for
expressions  and \(\statS\) for statements. The signature of
constructors is given in Figure~\ref{fig:example:constructors}.

\begin{figure}
  \centering
  \begin{equation*}
    \begin{array}[t]{|c|c|}
      \hline
      \text{\(c\)} & \text{Signature} \\
      \hline
      \constn & \litS \rightarrow \exprS \\
      \varn & \identS \rightarrow \exprS\\
      + & (\exprS \times \exprS) \rightarrow \exprS\\
      = &  (\exprS \times \exprS) \rightarrow \exprS\\
      \neg & \exprS \rightarrow \exprS\\
      \hline
    \end{array}
    \quad
    \begin{array}[t]{|c|c|}
      \hline
      \text{\(c\)} & \text{Signature} \\
      \hline
      \sskip & \statS\\
      :=  & (\identS \times \exprS) \rightarrow \statS\\
      ; & (\statS \times \statS) \rightarrow \statS\\
      \iname & (\exprS \times \statS \times \statS) \rightarrow \statS\\
      \wname & (\exprS \times \statS) \rightarrow \statS\\
      \hline
    \end{array}
  \end{equation*}
  \caption{Constructors for \While{}}
  \label{fig:example:constructors}
\end{figure}

Let \(\typenv\) be a mapping from term variables to sorts. Sorted terms are
either base terms, term variables \(\tvar{}\) of sort \(\typenv\funu{\tvar{}}\), or a term
$c\myvecip{t}{n}$ of sort \(\sort\), where \(c\) has signature \(\sig{c} =
\myvecp{\sort}{n} \rightarrow \sort\) and the terms \(\myveci{t}{n}\) have
the appropriate sort. We write \(\Sort{t}\) for the sort of \(t\). Let
\(\termenv\) be a mapping from term variables to terms such that \(\forall
\tvar{} \in \dom{\termenv}, \Sort{\termenv\funu{\tvar{}}} =
\typenv\funu{\tvar{}}\). We extend it to terms as \(\termenv\funu{c
  \myvecip{t}{n}} = c \myvecip[\termenv]{t}{n}\) when defined. We write
\(\Sort[]{t}\) for \(\Sort[\emptyset]{t}\) and \(\Tvar{t}\) for the set of term variables in \(t\).
We say \(t\) is \emph{closed} if \(\Tvar{t} = \emptyset\). In that case, we
write \(t : \sort\) for \(\Sort[]{t} = \sort\).

\begin{lemma}\label{lem:sort-termenv}
  Let \(\term\) a term, \(\termenv\) an environment mapping term variables to
  closed terms, and \(\typenv\) a sorting environment such that \(\Tvar{\term}
  \subseteq \dom{\termenv}\), \(\Tvar{\term} \subseteq \dom{\typenv}\), and for
  any \(\tvar \in \Tvar{\term}\) we have \(\typenv\funu{\tvar} =
  \Sort[]{\termenv\funu{\tvar}}\). Then we have \(\Sort{\term} =
  \Sort[]{\termenv\funu{\term}}\).
\end{lemma}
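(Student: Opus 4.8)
The plan is to proceed by structural induction on the term $\term$, following the three-case definition of sorted terms: a base term, a term variable, or a constructor applied to terms. In each case I will establish both that $\termenv\funu{\term}$ is defined and that $\Sort[]{\termenv\funu{\term}} = \Sort{\term}$, so the equality asserted by the lemma is in fact an equality of defined sorts.

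For a base term $b$, we have $\Tvar{b} = \emptyset$ and $\termenv$ leaves base terms unchanged, so $\termenv\funu{b} = b$; since a base term belongs to a single base sort regardless of the sorting environment, $\Sort{b} = \Sort[]{b} = \Sort[]{\termenv\funu{b}}$. For a term variable $\tvar$, the hypotheses $\Tvar{\term} \subseteq \dom{\termenv}$ and $\Tvar{\term} \subseteq \dom{\typenv}$ give $\tvar \in \dom{\termenv}$ and $\tvar \in \dom{\typenv}$, and the compatibility hypothesis gives $\typenv\funu{\tvar} = \Sort[]{\termenv\funu{\tvar}}$; since $\Sort{\tvar} = \typenv\funu{\tvar}$ by definition and $\termenv\funu{\term} = \termenv\funu{\tvar}$, we conclude directly. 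For a constructor application $\term = c\myvecip{t}{n}$ with $\sig{c} = \myvecp{\sort}{n} \rightarrow \sort$: since $\Sort{\term}$ is defined, each $\term[i]$ is sorted with $\Sort{\term[i]} = \sort[i]$ and $\Sort{\term} = \sort$. For every $i$ we have $\Tvar{\term[i]} \subseteq \Tvar{\term}$, so the same $\termenv$ and $\typenv$ satisfy the hypotheses of the lemma for $\term[i]$; the induction hypothesis then yields that $\termenv\funu{\term[i]}$ is a closed term with $\Sort[]{\termenv\funu{\term[i]}} = \Sort{\term[i]} = \sort[i]$. Hence $\termenv\funu{\term} = c\myvecip[\termenv]{t}{n}$ is an application of $c$ whose arguments have exactly the sorts required by $\sig{c}$, so it is a well-sorted closed term and $\Sort[]{\termenv\funu{\term}} = \sort = \Sort{\term}$.

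I do not expect a genuine obstacle: this is the standard "substitution preserves sorting" statement and the induction is routine. The only points that require attention are partiality — one must carry through the induction the fact that $\termenv\funu{\term}$ is actually defined, which is precisely what discharges the ``when defined'' side condition in the definition of $\termenv$ on constructor terms — and the bookkeeping that the lemma's hypotheses restrict to each subterm, which is immediate from $\Tvar{\term[i]} \subseteq \Tvar{\term}$ together with the fact that $\termenv$ and $\typenv$ are shared unchanged across the recursive calls.
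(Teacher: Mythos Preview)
Your proof is correct and is exactly the standard structural induction one expects for a substitution-preserves-sorting lemma. The paper states this lemma without proof, so there is nothing to compare against; your argument is the routine one the authors presumably had in mind.
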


\subsection{Skeletons}\label{subsec:skeletons}

We assume a countable set of \emph{flow variables}, ranged over by \(\fvar\),
which are used in the skeleton bodies to hold semantic values (states,
intermediate values, \ldots). Among flow variables, we distinguish two of them:
\(\ivar\) holds the semantic state at the start of a skeleton, and
\(\ovar\) is supposed to hold the semantic result at the end of a skeleton.
We let \emph{skeletal variables}, ranged over by \(x\) or \(y\), be the union of
term variables and flow variables. A \emph{skeleton} has the shape
\(\Rule{Name}{c\myvecp{\tvar}{n}}{\skel}\), where \(\rulen{Name}\) is the
skeleton name, \(c\) is a constructor, \(\myvec{\tvar}{n}\) are term variables, and \(\skel\) is the \emph{skeleton body}:
\begin{align*}
\textsc{Skeleton Body} \quad  \skel & \Coloneqq \emptylist \mid \cons{B}{\skel}\\
  \textsc{Bone} \quad  B & \Coloneqq \deriv{x_{f_1}}{t}{x_{f_2}} \mid \filter{F}{\myvec{\svar{x}}{n}}{\myvecp{\svar{y}}{m}}
      \mid \branches{\myvec{\skel}{n}}
\end{align*}
where $\deriv{-}{-}{-}$ is the (terminal) hook constructor and
$\filter{F}{}{}$ ranges over the set of filter functions. 

A skeleton body is a sequence of \emph{bones}. A bone is either a
\emph{hook judgement}
\(\deriv{\fvar[1]}{t}{\fvar[2]}\), built using the constructor
$\deriv{-}{-}{-}$ from an input flow variable
\(\fvar[1]\), a term \(t\) to be hooked during interpretation, and an output
flow variable \(\fvar[2]\); or a \emph{filter}
\(\filter{F}{\myvec{\svar{x}}{n}}{\myvecp{\svar{y}}{m}}\) which tests if the
values bound to its input skeletal variables \(\myvecp{\svar{x}}{n}\) satisfy a
condition specified by \(F\), and in that case outputs values to be bound to
\(\myvecp{\svar{y}}{m}\); or a set of \emph{branches}
\(\branches{\myvec{\skel}{n}}\) which represent the different behavioural
pathways, where \(\bvar\) declares the skeletal variables that are shared and
must be defined by all branches.

A filter with no output skeletal variables is simply written
\(\filter{F}{\myvec{\svar{x}}{n}}{}\). It then acts as a predicate.

\begin{requirement}\label{req:single-skeleton}
  We require that there exists exactly one skeleton
  for any given constructor~\(c\).
\end{requirement}

\paragraph{Running Example}
The skeletons of our \While{} example are given in
Figure~\ref{fig:example:concrete:semantics:translated}.
Requirement~\ref{req:single-skeleton} is trivially satisfied.

\begin{figure}
  \begin{align*}
      \Rulea{Lit}{\const{\tvar}}{\left[\filter{\texttt{litInt}}{\tvar}{\fvar[1]};
    \filter{\texttt{intVal}}{\fvar[1]}{\ovar}\right]}
    \\
      \Rulea{Var}{\var{\tvar}}
      {\left[\filter{\texttt{read}}{\tvar{},\ivar{}}{\ovar{}}\right]}
    \\
      \Rulea{Add}{\tvar[1]{} + \tvar[2]{}}
      {\left[
    \begin{multlined}[][\arraycolsep]
        \deriv{\ivar{}}{\tvar[1]{}}{\fvar[1]{}};
        \filter{\texttt{isInt}}{\fvar[1]{}}{\fvar[1']};
        \deriv{\ivar{}}{\tvar[2]{}}{\fvar[2]{}};\\
        \filter{\texttt{isInt}}{\fvar[2]{}}{\fvar[2']{}};
        \filter{\texttt{add}}{\fvar[1']{}, \fvar[2']{}}{\fvar[3]{}};
        \filter{\texttt{intVal}}{\fvar[3]}{\ovar{}}
      \end{multlined}
      \right]}
    \\
      \Rulea{Eq}{\tvar[1]{} = \tvar[2]{}}
      {\left[
    \begin{multlined}[][\arraycolsep]
        \deriv{\ivar{}}{\tvar[1]{}}{\fvar[1]{}};
        \filter{\texttt{isInt}}{\fvar[1]}{\fvar[1']};
        \deriv{\ivar{}}{\tvar[2]{}}{\fvar[2]{}};\\
        \filter{\texttt{isInt}}{\fvar[2]}{\fvar[2']};
        \filter{\texttt{eq}}{\fvar[1']{}, \fvar[2']{}}{\fvar[3]{}};
        \filter{\texttt{boolVal}}{\fvar[3]}{\ovar{}}
      \end{multlined}
      \right]}
    \\
      \Rulea{Neg}{\neg{\tvar}}
      {\left[ \deriv{\ivar{}}{\tvar}{\fvar[1]};
      \filter{\texttt{isBool}}{\fvar[1]}{\fvar[2]};
      \filter{\texttt{neg}}{\fvar[2]}{\fvar[3]{}};
        \filter{\texttt{boolVal}}{\fvar[3]}{\ovar{}}
    \right]}
    \\
      \Rulea{Skip}{\sskip}
      {\left[ \filter{\texttt{id}}{\ivar{}}{\ovar{}} \right]}
    \\
      \Rulea{Asn}{\asnescape{\tvar[1]}{\tvar[2]}}
      {\left[ \deriv{\ivar{}}{\tvar[2]}{\fvar[1]};
      \filter{\texttt{write}}{\tvar[1]{},\ivar{},\fvar[1]}{\ovar{}} \right]}
    \\
      \Rulea{Seq}{\seq{\tvar[1]}{\tvar[2]}}
      {\left[ \deriv{\ivar{}}{\tvar[1]}{\fvar[1]};
      \deriv{\fvar[1]}{\tvar[2]}{\ovar{}} \right]}
    \\
      \Rulea{If}{\sif{\tvar[1]}{\tvar[2]}{\tvar[3]}}
      {\left[ \deriv{\ivar{}}{\tvar[1]}{\fvar[1]};
      \filter{\texttt{isBool}}{\fvar[1]}{\fvar[1']};
      \branchesV[\mset{\ovar{}}]{\begin{aligned}
          &\filter{\texttt{isTrue}}{\fvar[1']}{};
          \deriv{\ivar{}}{\tvar[2]}{\ovar{}}\\
          &\filter{\texttt{isFalse}}{\fvar[1']}{};
          \deriv{\ivar{}}{\tvar[3]}{\ovar{}}
        \end{aligned}}
                                   \right]}
    \\
      \Rulea{While}{\while{\tvar[1]}{\tvar[2]}}
      {\left[
      \begin{multlined}[][\arraycolsep]
        \deriv{\ivar{}}{\tvar[1]}{\fvar[1]};
        \filter{\texttt{isBool}}{\fvar[1]}{\fvar[1']};\\
        \quad\branchesV[\mset{\ovar{}}]{\begin{aligned}
                &\filter{\texttt{isTrue}}{\fvar[1']}{};
                \deriv{\ivar{}}{\tvar[2]}{\fvar[2]};
                \deriv{\fvar[2]}{\while{\tvar[1]}{\tvar[2]}}{\ovar{}}\\
                &\filter{\texttt{isFalse}}{\fvar[1']}{};
                \filter{\texttt{id}}{\ivar{}}{\ovar{}}
          \end{aligned}
        }
      \end{multlined}
                                   \right]}
  \end{align*}
  \caption{Skeletal semantics for \While{}}\label{fig:example:concrete:semantics:translated}
\end{figure}

\subsection{Flow Sorts}\label{subsec:flow-sorts}

We extend the sorts with \emph{flow sorts}, that are the sorts of values in
interpretations. In our running example, flow sorts are \(\stateT\) for the
variable store, \(\valT\) for values, \(\intT\) for integers, and \(\boolT\) for
Booleans. We relate flow sorts to hooks and filters as follows.

In a hook \(\deriv{\fvar[1]}{t}{\fvar[2]}\), the flow variable \(\fvar[1]\)
stands for an input state that fits with \(t\), and \(\fvar[2]\) stands for a
result. Given a program sort \(s\), we define \(\typin{s}\) as its input flow
sort and \(\typout{s}\) as its output flow sort. In our running example, the
input flow sort of both expressions and statements is \(\stateT\). The output
flow sort of expressions is \(\valT\) and the output flow sort of statements is
\(\stateT\).

Similarly, a filter \(\filter{F}{\myvec{\svar{x}}{n}}{\myvecp{\svar{y}}{m}}\) is
assigned a signature, written \(\typsig{F}\), of the form \(\myvecp{\sort}{n}
\rightarrow \myvecp{\sortp}{m}\). We write \(\unit\) for the output sort of a
filter if \(m = 0\) and omit the enclosing parentheses when \(n\) or \(m\) is
\(1\). Filter signatures for our running example are given in
Figure~\ref{fig:example:typing:filters}.

\begin{figure}
  \centering
\begin{equation*}
\begin{array}[t]{|c|c|}
  \hline
  f & \typsig{f}\\
  \hline
  \texttt{litInt} & \litS \rightarrow \intT\\
  \texttt{intVal} & \intT \rightarrow \valT\\
  \texttt{isInt} & \valT \rightarrow \intT\\
  \texttt{add} & \mpar{\intT,\intT} \rightarrow \intT\\
  \hline
\end{array}
\quad
\begin{array}[t]{|c|c|}
  \hline
  f & \typsig{f}\\
  \hline
  \texttt{boolVal} & \boolT \rightarrow \valT\\
  \texttt{isBool} & \valT \rightarrow \boolT\\
  \texttt{isTrue} & \boolT \rightarrow \unit \\
  \texttt{isFalse} & \boolT \rightarrow \unit \\
  \hline
\end{array}
\quad
\begin{array}[t]{|c|c|}
  \hline
  f & \typsig{f}\\
  \hline
  \texttt{eq} & \mpar{\intT,\intT} \rightarrow \boolT\\
  \texttt{neg} & \boolT \rightarrow \boolT\\
  \texttt{read} & \mpar{\identS,\stateT} \rightarrow \valT\\
  \texttt{write} & \mpar{\identS, \stateT, \valT} \rightarrow \stateT\\
  \texttt{id} & \stateT \rightarrow \stateT\\
  \hline
\end{array}
\end{equation*}
  \caption{Filter sorts}\label{fig:example:typing:filters}
\end{figure}

We check the consistency of the hook and filters with the skeletons in our
well-formedness interpretation, introduced in
Section~\ref{subsec:wf-interpretation}.

\section{Interpretations}\label{SEC:INTERP}

An \emph{interpretation} \(I\) specifies base terms and how to interpret the empty skeleton body,
hooks, filters, and branches. It defines a set of \emph{interpretation states},
ranged over by \(\interpIn\) in this section but with specific notations for
each interpretation, and a set of \emph{interpretation results}, ranged over by
\(\interpOut\) in this section, as well as the following relations:
\begin{itemize}
  \item \(\interp{\emptylist}\) defining the interpretation of the empty
    skeleton body;
  \item \(\interp{\deriv{\fvar[1]}{t}{\fvar[2]}}[\interpIn']\) defining the interpretation of a hook;
  \item \(\interp{\filter{F}{\myvec{\svar{x}}{n}}{\myvecp{\svar{y}}{m}}}[\interpIn']\) defining the
    interpretation of a filter, for each filter \(F\);
  \item \(\merge{\outfun,\interpIn}{n}\) defining the merging of the interpretation of
    branches, where \(\outfun\) is a partial function from \([1..n]\) to
    interpretation results, and where \(\bvar\) is the set of skeletal variables
    defined and shared by all branches.
\end{itemize}

Given a skeleton body \(S\) and the relations above, we define the remaining
cases for the interpretation of \(S\) as follows.
\begin{align*}
  \mpar{
  \begin{gathered}
    \interp{B}[\interpIn']\\
    \interp[\interpIn']{S}
  \end{gathered}
  }
  &\implies
    \interp{\cons{B}{S}}\\
  \mpar{
  \begin{gathered}
    \forall i \in \dom{\outfun}.\interp{S_i}[\outfun\funu{i}]\\
    \merge{\outfun,\interpIn}{n}
  \end{gathered}
}
  &\implies
    \interpb{\branches{\myvec{\skel}{n}}}[\interpIn']
\end{align*}

Interpretations enable us to define the meaning of skeletons
by only specifying the parts that matter.
Interpretations apply to any skeletons and are thus independent of the language.
The rest of the paper presents different interpretation
and their relations.

\subsection{Well-Formedness Interpretation}\label{subsec:wf-interpretation}

The first interpretation we consider is a \emph{well-formedness} interpretation,
to verify that every skeleton is well formed. More precisely, we verify that
every skeletal variable used has been first defined, that every variable defined
in a skeleton is fresh (with an exception for branches, see below), and that the
sorting of filters, hooks, and branches are consistent.

Intuitively, in the hook \(\deriv{\fvar[1]}{t}{\fvar[2]}\), flow variable
\(\fvar[1]\) is \emph{used} and flow variable \(\fvar[2]\) is \emph{defined}.
Similarly, in the filter
\(\filter{F}{\myvec{\svar{x}}{n}}{\myvecp{\svar{y}}{m}}\), skeletal variables
\(\myvecp{\svar{x}}{n}\) are used and skeletal variables
\(\myvecp{\svar{y}}{m}\) are defined.
The case for branches \(\branches{\myvec{\skel}{n}}\) is a bit more involved. First, each
branch \(\skel[i]\) \emph{must} define the skeletal variables in \(\bvar\).
Second, every variable defined in the whole set of branches must be distinct,
with the exception of the variables in \(\bvar\) as they have to be defined in
every branch. And third, the only variables defined by the branches that may be
used in the rest of the skeleton body are those in \(\bvar\).

Assuming for each base sort a set of base terms, pairwise disjoint,
we define the well-formedness (WF) interpretation in Figure~\ref{fig:typing-interpretation}.
Its interpretation states and result consist of a pair of a sorting environments
\(\typenv\), mapping term variables to base and program sorts, and flow variables
to flow sorts, and a set \(\dfvarset\) of skeletal variables that have been defined at
that point. In this interpretation, we write \(\svar{x} : \sort\) to state that
the kind of variable and sort match, namely term variables with base or program
sorts, and flow variables with flow sorts.

\begin{figure}
  \centering
  \begin{align*}
    &\implies \wfinterp{\emptylist}[\mpar{\typenv,\dfvarset}]\\
    \mpar
    {
    \begin{rgathered}
      \fvar[1] \in \dom{\typenv} \subseteq \dfvarset\\
      \Tvar{\term} \subseteq \dom{\typenv}\\
      \typenv\funu{\fvar[1]} = \typin{\Sort{t}}\\
      \fvar[2] \notin \dfvarset\\       
      \typenv' = \extenv{\typenv}{\fvar[2]}{\typout{\Sort{t}}}\\
      \dfvarset' = \dfvarset \cup \mset{\fvar[2]}
    \end{rgathered}
    }
    &\implies \wfinterp{\deriv{\fvar[1]}{t}{\fvar[2]}}\\
    \mpar{
    \begin{rgathered}
      \myvecp{\svar{x}}{n} \subseteq \dom{\typenv} \subseteq \dfvarset\\
      \myvecp{\svar{x}}{n} : \myvecp[\typenv]{\svar{x}}{n}\\
        \myvec{\svar{y}}{m} \text{ are pairwise distinct}\\
      \myvecp{\svar{y}}{m} \cap \dfvarset = \emptyset\\
      \typsig{F} = \myvecp[\typenv]{\svar{x}}{n} \rightarrow \myvecp{\sort}{m}\\
      \myvecp{\svar{y}}{m} : \myvecp{\sort}{m}\\
      \typenv' = \extsig{\typenv}{\myvecp{\svar{y}}{m}}{\myvecp{\sort}{m}}\\
      \dfvarset' = \dfvarset \cup \myvecp{\svar{y}}{m}
    \end{rgathered}}
    &\implies \wfinterp{\filter{F}{\myvec{\svar{x}}{n}}{\myvecp{\svar{y}}{m}}}\\
    \mpar{
    \begin{rgathered}
      n \geq 2\\
      \dom{\typenv} \subseteq \dfvarset\\
      \forall i \in [1..n]. \outfun\funu{i} = \mpar{\typenv_i,\dfvarset_i}\\
      \forall i \in [1..n]. \dom{\typenv_i} \subseteq \dfvarset_i\\
      \forall i j. i \neq j \implies \mpar{\dfvarset_i \setminus \dfvarset} \cap
      \mpar{\dfvarset_j \setminus \dfvarset} = \bvar\\
      \forall i \in [1..n]. \typenv \extsymbol \envrestr{\typenv_i} = \typenv'\\
      \dfvarset' = \bigcup_{i \in [1..n]} \dfvarset_i
    \end{rgathered}}
    &\implies \wfinterp[\outfun,{\mpar{\typenv,\dfvarset}}]{\bigoplus_n}[\mpar{\typenv',\dfvarset'}][\bvar]
  \end{align*}
  \caption{WF Interpretation}\label{fig:typing-interpretation}
\end{figure}

The interpretation for the empty skeleton body is trivial, it simply returns its
arguments. The interpretation of a hook \(\deriv{\fvar[1]}{t}{\fvar[2]}\) checks
that \(\fvar[1]\) is in \(\typenv\), that every term variable of \(t\) is also
in \(\typenv\), and that variable \(\fvar[2]\) is fresh (i.e., not in
\(\dfvarset\)). In addition, it checks that the sort for \(\fvar[1]\) is what
\(t\) expects as input sort and that \(\fvar[2]\) is latter bound to an
output sort of \(t\). 

The interpretation for a filter
\(\filter{F}{\myvec{\svar{x}}{n}}{\myvecp{\svar{y}}{m}}\) is similar. It ensures
that the input skeletal variables \(\myvecp{\svar{x}}{n}\) are in \(\typenv\),
that the number and kind of both input and output variables match the signature
of \(F\), that the output variables are fresh, that the sort of the input
variable corresponds to the input signature of \(F\), and it continues binding
the output variables to the output signature of \(F\).

Finally, the interpretation of the merging of branches checks that every branch
is well formed, that the variables in \(\bvar\) are exactly those shared by the
branches (neither less nor more than those), and that the sorting environments
returned by the branches all agree when restricted to \(\bvar\). In that case, the
returned sorting environment is the concatenation of the input environment and
the one shared by the branches. The \(n \geq 2\) constraint is to have a more
concise way of stating that the variables shared by the branches are exactly
those in \(V\). It is not a restriction as an empty set of branches is useless,
it prevents the skeleton from being interpreted as offering no pathway, and a
singleton set of branches can be inlined.

Let \(t = c\myvecp{\term}{n}\) be a closed term such that \(\Sort[]{t} =
\sort\), where \(\sort\) is a program sort. There are two ways to assign an
output sort to \(t\): directly, as \(\typout{\sort}\), or using the
WF interpretation of the skeleton for \(c\) to compute the
associated sort \(\ovar\). If both coincide, we say the skeleton is \emph{well
  formed}.

\begin{definition}
  A skeleton \(\Rule{Name}{c\myvecp{\tvar}{n}}{\skel}\) is \emph{well formed} iff
  for any closed term \(t = c\myvecp{\term}{n}\) such that \(\Sort[]{t} =
  \sort\), we have \(\wfinterp{\skel}\) and \(\typenv'\funu{\ovar} =
  \typout{\sort}\), with the initial sorting environment \(\typenv\) being
  \(\mset{\extsigvec{\ivar \mapsto \typin{\sort}}{\tvar}[\Sort[]{}]{\term}}\),
  and with \(\dfvarset = \dom{\typenv}\).
\end{definition}

In the following we only consider well-formed skeletons. For instance, the skeletons for \While{} are well formed.

\subsection{Interpretation Consistency}\label{sec:consistency}

We now define how to relate interpretations. Given interpretations \(I_1\) and
\(I_2\), we assume a relation \(\OKst{\interpIn_1}{\interpIn_2}\) between
the interpretation states, and a relation \(\OKout{\interpOut_1}{\interpOut_2}\)
between their results.
Intuitively,
consistency is the propagation of these relations along interpretations.

We define two kinds of consistency: one about where interpretations are defined,
i.e, whether they return a result, and one about their results.

\begin{definition}
  Interpretation \(I_1\) is \emph{existentially consistent} with interpretation
  \(I_2\) if for
  any \(\skel\), \(\interpIn_1\), \(\interpIn_2\), and \(\interpOut_1\), such
  that \(\OKst{\interpIn_1}{\interpIn_2}\) and
  \(\interp[\interpIn_1][I_1]{\skel}[\interpOut_1]\), there exists a
  \(\interpOut_2\) such that \(\interp[\interpIn_2][I_2]{\skel}[\interpOut_2]\)
  and \(\OKout{\interpOut_1}{\interpOut_2}\).
\end{definition}

\begin{definition}
  Interpretations \(I_1\) and \(I_2\) are \emph{universally consistent} if for
  any \(\skel\), \(\interpIn_1\), \(\interpIn_2\), \(\interpOut_1\), and
  \(\interpOut_2\), if \(\OKst{\interpIn_1}{\interpIn_2}\),
  \(\interp[\interpIn_1][I_1]{\skel}[\interpOut_1]\) and
  \(\interp[\interpIn_2][I_2]{\skel}[\interpOut_2]\), then
  \(\OKout{\interpOut_1}{\interpOut_2}\).
\end{definition}

\subsection{Proving Consistency}\label{subsec:proving-consistency}

Both consistency properties can be stated at the level of the building block of
interpretations. Formally, we have the following two lemmas.

\begin{lemma}\label{lem:dc}
    Let \(I_1\) and \(I_2\) be two interpretations, \(\OKst{}{}\) a relation
  between their input states, and \(OKout{}{}\) a relation between their output
  states. If for any \(\interpIn_1\) and \(\interpIn_2\) such that
  \(\OKst{\interpIn_1}{\interpIn_2}\) we have
  \begin{enumerate}
    \item\label{dc-nil} \(\interp[\interpIn_1][I_1]{\emptylist}[\interpOut_1]
      \implies \exists O_2.\;\interp[\interpIn_2][I_2]{\emptylist}[\interpOut_2] \;\land\;
      \OKout{\interpOut_1}{\interpOut_2}\)
    \item\label{dc-hook}
      \(\interp[\interpIn_1][I_1]{\deriv{\fvar[1]}{t}{\fvar[2]}}[\interpIn'_1]
      \implies \exists \interpIn'_2.\;
      \interp[\interpIn_2][I_2]{\deriv{\fvar[1]}{t}{\fvar[2]}}[\interpIn'_2] \;\land\;
      \OKst{\interpIn'_1}{\interpIn'_2}\)
    \item\label{dc-filter}
      \(
      \begin{multlined}[t][\arraycolsep]
        \textstyle
        \interp[\interpIn_1][I_1]{\filter{F}{\myvec{\svar{x}}{n}}{\myvecp{\svar{y}}{m}}}[\interpIn'_1]
        \implies\\
        \textstyle \exists \interpIn'_2.\;
        \interp[\interpIn_2][I_2]{\filter{F}{\myvec{\svar{x}}{n}}{\myvecp{\svar{y}}{m}}}[\interpIn'_2]
        \;\land\; \OKst{\interpIn'_1}{\interpIn'_2}
      \end{multlined}
      \)
    \item\label{dc-merge} \(
      \begin{multlined}[t][\arraycolsep]
        \textstyle
        \dom{\outfun_1} = \dom{\outfun_2} \subseteq \mset{1..n} \;\land\;
        \forall i \in \dom{\outfun_1}.
        \OKout{\outfun_1\funu{i}}{\outfun_2\funu{i}} \\
        \textstyle \land\;
        \merge{\outfun_1,\interpIn_1}{n}[\interpIn'_1] \implies \exists
        \interpIn'_2.\; \merge{\outfun_2,\interpIn_2}{n}[\interpIn'_2] \land
        \OKst{\interpIn'_1}{\interpIn'_2}
      \end{multlined}
\)
  \end{enumerate}
  then \(I_1\) is existentially consistent with \(I_2\).


\end{lemma}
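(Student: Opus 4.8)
The plan is to prove the statement by structural induction on the skeleton body \(\skel\), equivalently by induction on the derivation of \(\interp[\interpIn_1][I_1]{\skel}[\interpOut_1]\): every rule defining the interpretation of a body has premises mentioning only strict sub-bodies of the body in its conclusion, so such derivations are finite and follow the shape of \(\skel\) — the potentially unbounded recursion that lives inside the interpretation of a hook is completely black-boxed by hypothesis~(\ref{dc-hook}) and is never unfolded. The property carried through the induction is existential consistency restricted to the body \(\skel\): \emph{for all} \(\interpIn_1\), \(\interpIn_2\), \(\interpOut_1\) with \(\OKst{\interpIn_1}{\interpIn_2}\) and \(\interp[\interpIn_1][I_1]{\skel}[\interpOut_1]\), there is an \(\interpOut_2\) with \(\interp[\interpIn_2][I_2]{\skel}[\interpOut_2]\) and \(\OKout{\interpOut_1}{\interpOut_2}\). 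It is essential that the interpretation states stay universally quantified inside the induction hypothesis, since it will be re-invoked at several different pairs of states, always related by \(\OKst{}{}\).

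In the base case \(\skel = \emptylist\), hypothesis~(\ref{dc-nil}) instantiated at \(\interpIn_1\) and \(\interpIn_2\) already produces the required \(\interpOut_2\). In the inductive case \(\skel = \cons{B}{\skel'}\), inverting the rule for \(\cons{B}{\skel'}\) gives an intermediate state \(\interpIn'_1\) with \(\interp[\interpIn_1][I_1]{B}[\interpIn'_1]\) and \(\interp[\interpIn'_1][I_1]{\skel'}[\interpOut_1]\). The strategy is to build an \(I_2\)-intermediate state \(\interpIn'_2\) with \(\interp[\interpIn_2][I_2]{B}[\interpIn'_2]\) and \(\OKst{\interpIn'_1}{\interpIn'_2}\); since \(\skel'\) is a strict sub-body and \(\OKst{\interpIn'_1}{\interpIn'_2}\) holds, the induction hypothesis applied at \((\interpIn'_1, \interpIn'_2)\) then yields \(\interpOut_2\) with \(\interp[\interpIn'_2][I_2]{\skel'}[\interpOut_2]\) and \(\OKout{\interpOut_1}{\interpOut_2}\), and reassembling the rule for \(\cons{B}{\skel'}\) concludes. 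Building \(\interpIn'_2\) is a case analysis on the bone \(B\): if \(B\) is a hook, apply hypothesis~(\ref{dc-hook}); if \(B\) is a filter, apply hypothesis~(\ref{dc-filter}); if \(B\) is a branching \(\branches{\myvec{\skel}{n}}\), invert once more to obtain a partial function \(\outfun_1\) from \(\mset{1..n}\) to results with \(\interp[\interpIn_1][I_1]{\skel[i]}[\outfun_1\funu{i}]\) for every \(i \in \dom{\outfun_1}\) and \(\merge{\outfun_1,\interpIn_1}{n}[\interpIn'_1]\). Each \(\skel[i]\) is a strict sub-body, so the induction hypothesis applied at \((\interpIn_1, \interpIn_2)\) yields, for each \(i \in \dom{\outfun_1}\), a result we name \(\outfun_2\funu{i}\) with \(\interp[\interpIn_2][I_2]{\skel[i]}[\outfun_2\funu{i}]\) and \(\OKout{\outfun_1\funu{i}}{\outfun_2\funu{i}}\); this defines \(\outfun_2\) with domain exactly \(\dom{\outfun_1}\), so hypothesis~(\ref{dc-merge}) applies and provides \(\interpIn'_2\) with \(\merge{\outfun_2,\interpIn_2}{n}[\interpIn'_2]\) and \(\OKst{\interpIn'_1}{\interpIn'_2}\), hence \(\interp[\interpIn_2][I_2]{\branches{\myvec{\skel}{n}}}[\interpIn'_2]\) by the rule for branchings.

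The bookkeeping — threading the intermediate states and re-applying the defining rules of the interpretation of a body — is routine. The one delicate step, which I expect to be the main obstacle to a fully formal argument, is the branching case, for two reasons. First, the induction hypothesis is applied to the branch sub-bodies \(\skel[i]\) at the \emph{outer} pair \((\interpIn_1, \interpIn_2)\), which is legitimate precisely because \(\OKst{\interpIn_1}{\interpIn_2}\) is the hypothesis in force rather than some relation derived along the way. Second, invoking hypothesis~(\ref{dc-merge}) requires \(\dom{\outfun_2} = \dom{\outfun_1}\), which is obtained for free by constructing \(\outfun_2\) pointwise on the finite set \(\dom{\outfun_1}\) — so only finite choice is involved (in the Coq development this is a straightforward recursion over the list of branches). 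Note that nothing about the specific interpretations \(I_1\) and \(I_2\) enters the proof beyond hypotheses~(\ref{dc-nil})--(\ref{dc-merge}); isolating this generic induction from the interpretation-specific content is exactly what makes consistency results inexpensive in this framework.
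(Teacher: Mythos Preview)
Your proposal is correct and is the natural structural induction on the skeleton body (equivalently, on the interpretation derivation) that the paper relies on; the paper does not spell out a proof in the text, deferring to the Coq formalisation, but the argument there is exactly this induction with the same case analysis on bones and the same pointwise construction of \(\outfun_2\) in the branching case.
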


\begin{lemma}\label{lem:rc}
    Let \(I_1\) and \(I_2\) be two interpretations, and \(\OKst{}{}\) a relation
  between their input states. If for any \(\interpIn_1\) and \(\interpIn_2\)
  such that \(\OKst{\interpIn_1}{\interpIn_2}\) we have
  \begin{enumerate}
    \item\label{rc-nil}
      \(\interp[\interpIn_1][I_1]{\emptylist}[\interpOut_1] \; \land \;
      \interp[\interpIn_2][I_2]{\emptylist}[\interpOut_2] \implies
      \OKout{\interpOut_1}{\interpOut_2}\)
    \item\label{rc-hook}
      \(\interp[\interpIn_1][I_1]{\deriv{\fvar[1]}{t}{\fvar[2]}}[\interpIn'_1]
      \; \land \;
      \interp[\interpIn_2][I_2]{\deriv{\fvar[1]}{t}{\fvar[2]}}[\interpIn'_2] \implies
      \OKst{\interpIn'_1}{\interpIn'_2}\)
    \item\label{rc-filter} 
      \(\interp[\interpIn_1][I_1]{\filter{F}{\myvec{\svar{x}}{n}}{\myvecp{\svar{y}}{m}}}[\interpIn'_1]
      \; \land \;
      \interp[\interpIn_2][I_2]{\filter{F}{\myvec{\svar{x}}{n}}{\myvecp{\svar{y}}{m}}}[\interpIn'_2]
      \implies \OKst{\interpIn'_1}{\interpIn'_2}\)
    \item\label{rc-merge} \(
      \begin{multlined}[t][\arraycolsep]
        \textstyle
        \dom{\outfun_1} \subseteq \mset{1..n} \; \land \dom{\outfun_2} \subseteq
        \mset{1..n} \; \land \; \forall i \in \dom{\outfun_1} \cap
          \dom{\outfun_2}. \OKout{\outfun_1\funu{i}}{\outfun_2\funu{i}} \\
        \textstyle
        \land \; \merge{\outfun_1,\interpIn_1}{n}[\interpIn'_1]
        \;\land \;
        \merge{\outfun_2,\interpIn_2}{n}[\interpIn'_2] \implies
        \OKst{\interpIn'_1}{\interpIn'_2}
      \end{multlined}
\)
  \end{enumerate}
  then \(I_1\) and \(I_2\) are universally consistent.


\end{lemma}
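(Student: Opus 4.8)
The plan is to prove the statement by structural induction on the skeleton body $\skel$, where the sub-bodies $\skel[i]$ occurring inside a \emph{branches} bone count as proper sub-bodies of $\skel$ so that the measure is well-founded. Fix interpretations $I_1$, $I_2$ and a relation $\OKst{}{}$ satisfying hypotheses~(\ref{rc-nil})--(\ref{rc-merge}), and suppose $\OKst{\interpIn_1}{\interpIn_2}$, $\interp[\interpIn_1][I_1]{\skel}[\interpOut_1]$, and $\interp[\interpIn_2][I_2]{\skel}[\interpOut_2]$; the goal is $\OKout{\interpOut_1}{\interpOut_2}$. If $\skel = \emptylist$ this is exactly hypothesis~(\ref{rc-nil}). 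Otherwise $\skel = \cons{B}{\skel'}$ for a bone $B$ and a tail $\skel'$, and inverting the interpretation rule for $\cons{B}{\skel'}$ from Section~\ref{SEC:INTERP} on both assumptions yields intermediate states $\interpIn'_1, \interpIn'_2$ with $\interp[\interpIn_1][I_1]{B}[\interpIn'_1]$, $\interp[\interpIn'_1][I_1]{\skel'}[\interpOut_1]$, and symmetrically for $I_2$.

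It then suffices to establish $\OKst{\interpIn'_1}{\interpIn'_2}$, since the induction hypothesis applied to the strictly smaller body $\skel'$ then gives $\OKout{\interpOut_1}{\interpOut_2}$. If $B$ is a hook $\deriv{\fvar[1]}{t}{\fvar[2]}$ this is hypothesis~(\ref{rc-hook}); if $B$ is a filter $\filter{F}{\myvec{\svar{x}}{n}}{\myvecp{\svar{y}}{m}}$ it is hypothesis~(\ref{rc-filter}); in both cases the premise needed by the cited hypothesis is precisely the pair $\interp[\interpIn_1][I_1]{B}[\interpIn'_1]$ and $\interp[\interpIn_2][I_2]{B}[\interpIn'_2]$ obtained above, together with $\OKst{\interpIn_1}{\interpIn_2}$. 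The one case requiring the induction hypothesis is $B = \branches{\myvec{\skel}{n}}$: inverting the branches rule produces partial functions $\outfun_1, \outfun_2$ from $[1..n]$ to results with $\interp[\interpIn_1][I_1]{\skel[i]}[\outfun_1\funu{i}]$ for every $i \in \dom{\outfun_1}$ and $\merge{\outfun_1,\interpIn_1}{n}[\interpIn'_1]$, and likewise for $I_2$. For each $i \in \dom{\outfun_1} \cap \dom{\outfun_2}$ the two interpretations run the sub-body $\skel[i]$ from the same pair of states related by $\OKst{\interpIn_1}{\interpIn_2}$, so the induction hypothesis on $\skel[i]$ gives $\OKout{\outfun_1\funu{i}}{\outfun_2\funu{i}}$; hypothesis~(\ref{rc-merge}) then applies and yields $\OKst{\interpIn'_1}{\interpIn'_2}$.

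The only delicate point — and the reason hypothesis~(\ref{rc-merge}) quantifies over the \emph{intersection} $\dom{\outfun_1} \cap \dom{\outfun_2}$ rather than requiring equal domains, as in Lemma~\ref{lem:dc} — is that universal consistency does not force $I_1$ and $I_2$ to validate the same set of branches: one interpretation may take a path the other never explores. Hence the induction hypothesis is available only on the branches common to the two derivations, and any relational information carried by the non-shared branches must be discarded; hypothesis~(\ref{rc-merge}) is phrased exactly so that $\OKst{\interpIn'_1}{\interpIn'_2}$ can still be reconstructed from this reduced information. Everything else is a mechanical propagation of $\OKst{}{}$ through the inductive structure of the skeleton body, mirroring the clauses of Section~\ref{SEC:INTERP} that define interpretation on $\cons{B}{\skel}$ and on branches, and I expect no further difficulty.
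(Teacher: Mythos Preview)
Your proposal is correct. The paper does not give an explicit textual proof of this lemma (it defers to the Coq formalisation mentioned in the introduction), but the structural induction on the skeleton body that you carry out is exactly the natural argument and matches how the generic interpretation clauses of Section~\ref{SEC:INTERP} are set up; your remark about why hypothesis~(\ref{rc-merge}) needs only the intersection of the domains is also right and is the one place where the argument differs from the existential case of Lemma~\ref{lem:dc}.
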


\section{Concrete Interpretation}\label{SEC:CONCRETE-INTERPRETATION}

We now define an interpretation used to compute a big-step evaluation semantics
in the form of a \emph{triple set}: a set of triples (also called \emph{judgements})
of the form \emph{(state, term, result)}.
For each base sort we assume a set of base terms, pairwise disjoint,
and for each flow sort a set of \emph{values}. We write \(\term : \sort\) to
state that base term \(\term\) has base sort \(\sort\), and \(\cval : \sort\) to
state that value \(\cval\) has flow sort \(\sort\).

For each filter \(\filter{F}{\myvec{\svar{x}}{n}}{\myvecp{\svar{y}}{m}}\) such
that \(\typsig{F} = \myvecp{\sort}{n} \rightarrow \myvecp{\sortp}{m}\), we
assume an interpretation \(\concrfilter{F}{}{}\) which is a relation
between elements of \(\myvecp{\sort}{n}\) and elements of
\(\myvecp{\sortp}{m}\). We write
\(\concrfilter{F}{\myvec{\cval}{n}}{\myvecp{\cvalp}{m}}\) to state it relates
\(\myvecp{\cval}{n}\) to \(\myvecp{\cvalp}{m}\).

The input state of a concrete interpretation is a pair comprising
\begin{itemize}
  \item an environment \(\concrenv\) mapping term variables to closed terms and
    flow variables to values,
  \item a set \(\tripleset\) of triples of value, closed term, and value,
    representing already known judgements and used to give meaning to the
    sub-derivations $\deriv{\fvar[1]}{t}{\fvar[2]}$.
\end{itemize}
The interpretation result maps term variables to closed terms and flow
variables to values. 

We define the concrete interpretation in
Figure~\ref{fig:concrete-interpretation}. For the empty skeleton body, it simply
returns its environment. For a hook \(\deriv{\fvar[1]}{\term}{\fvar[2]}\), it
looks up in the triple set a known computation for \(\concrenv\funu{\fvar[1]}\)
and \(\concrenv\funu{\term}\) whose result is \(\cval\), and it continues
binding \(\fvar[2]\) to \(\cval\). Note that if the language is
non-deterministic, there may be several such values and one is picked. For a
filter \(F\), one uses its interpretation with the input
\(\myvecp[\concrenv]{\svar{x}}{n}\). As filter interpretations are relations,
there may be several results as well. Finally, to merge branches, the
interpretation picks a branch that successfully returned a result and extends
its environment accordingly.

\begin{figure}
  \centering
  \begin{align*}
    &\implies \concrinterp{\emptylist}[\concrenv]\\
    \mpar{\begin{rgathered}
        \mpar{\concrenv\funu{\fvar[1]},\concrenv\funu{\term},\cval} \in \tripleset\\
        \concrenv' = \extenv{\concrenv}{\fvar[2]}{\cval}
      \end{rgathered}}
    &\implies \concrinterp{\deriv{\fvar[1]}{\term}{\fvar[2]}}[\mpar{\concrenv', \tripleset}]\\
    \mpar{
    \begin{rgathered}
      \concrfilter{F}{\myvec[\concrenv]{\svar{x}}{n}}{\myvecp{\cval}{m}}\\
      \concrenv' = \extsigvec{\concrenv}{\svar{y}}{\cval}[m]
    \end{rgathered}} &\implies
    \concrinterp{\filter{F}{\myvec{\svar{x}}{n}}{\myvecp{\svar{y}}{m}}}[\mpar{\concrenv', \tripleset}]\\
    \mpar{
    \begin{rgathered}
      \outfun\funu{i} = \concrenv_i\\
      \bvar \subseteq \dom{\concrenv_i}\\
      \concrenv' = \concrenv \extsymbol \envrestr{\concrenv_i}
    \end{rgathered}} &\implies
    \concrinterp[\outfun,\mpar{\concrenv,\tripleset}]{\bigoplus_n}[\mpar{\concrenv', \tripleset}][\bvar]
  \end{align*}
  \caption{Concrete Interpretation}\label{fig:concrete-interpretation}
\end{figure}

\paragraph{Running Example} We instantiate the base sort \(\identS\) with
strings and \(\litS\) with integers. We instantiate the flow sort \(\intT\) with
integers, \(\boolT\) with Booleans, \(\valT\) with the disjoint union \(\intT +
\boolT\), and \(\stateT\) with a partial function from strings to \(\valT\). The
concrete interpretation of the filters are the following partial functions:
\texttt{litInt} is the identity on integers, \(\texttt{intVal}\) and
\(\texttt{boolVal}\) inject their arguments in \(\intT + \boolT\),
\texttt{read}\(\funu{id,st}\): applies \(st\) to \(id\) (since \(st\) is a
partial function, it may not return a result), \texttt{isInt}\(\funu{v}\):
matches \(v\) in the disjoint union \(\intT + \boolT\), returns \(v\) if it is
in \(\intT\), \texttt{add}\(\funu{i_1,i_2}\): returns the integer addition of
\(i_1\) and \(i_2\), \texttt{eq}\(\funu{i_1,i_2}\): returns \texttt{true} if
\(i_1 = i_2\), \texttt{false} otherwise, \texttt{isBool}\(\funu{v}\): matches
\(v\) in the disjoint union \(\intT + \boolT\), returns \(v\) if it is in
\(\boolT\), \texttt{write}\(\funu{id,st,v}\): returns the partial function
mapping \(id\) to \(v\) and any other \(id'\) to \(st\funu{id'}\),
\texttt{id}\(\funu{st}\): returns \(st\), \texttt{isTrue}\(\funu{b}\): returns
\(()\) if \(b = \texttt{true}\), \texttt{isFalse}\(\funu{b}\): returns \(()\) if
\(b = \texttt{false}\). In the rest of the paper, we directly write \(\jsid{x}\)
for \(\var{\jsid{x}}\) and \(n\) for \(\const{n}\) in the examples.

\subsection{Consistency of WF and Concrete
  Interpretations}\label{subsec:type-concrete-consistent}

\begin{definition}
  We say a triple set \(\tripleset\) is \emph{well formed} if all its elements
  are well formed, i.e., if \(\mpar{\sigma,\term,\cval} \in \tripleset\), then
  \(t = c\myvecp{\term}{n}\) and there is a sort \(\sort\) such that
  \(\Sort[]{\term} = \sort\), \(\sigma : \typin{\sort}\), and \(\cval :
  \typout{\sort}\).
\end{definition}

We define \(\OKst{\mpar{\typenv,\dfvarset}}{\mpar{\concrenv,\tripleset}}\) as
follows: \(\tripleset\) is well-formed, \(\dom{\typenv} = \dom{\concrenv}\), and
for any \(x \in \dom{\typenv}\) we have \(\concrenv\funu{x} : \typenv\funu{x}\).
We define \(\OKout{\mpar{\typenv,\dfvarset}}{\concrenv}\) as follows: \(\dom{\typenv} =
\dom{\concrenv}\) and for any \(x \in \dom{\typenv}\) we have
\(\concrenv\funu{x} : \typenv\funu{x}\).

\begin{lemma}\label{lem:wf-concr-consistent}
    The well-formedness and concrete interpretations are universally consistent.

\end{lemma}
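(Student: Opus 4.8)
The plan is to apply Lemma~\ref{lem:rc}: it suffices to verify the four building-block conditions (\ref{rc-nil})--(\ref{rc-merge}) for the WF interpretation as $I_1$ and the concrete interpretation as $I_2$, with the relations $\OKst{}{}$ and $\OKout{}{}$ defined just above the statement. So I would fix states $\mpar{\typenv,\dfvarset}$ and $\mpar{\concrenv,\tripleset}$ with $\OKst{\mpar{\typenv,\dfvarset}}{\mpar{\concrenv,\tripleset}}$ --- i.e.\ $\tripleset$ well formed, $\dom{\typenv} = \dom{\concrenv}$, and $\concrenv\funu{x} : \typenv\funu{x}$ for all $x$ --- and discharge each case in turn.

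For the empty body (\ref{rc-nil}), both interpretations return their first component unchanged, so $\OKout{\mpar{\typenv,\dfvarset}}{\concrenv}$ is immediate from $\OKst{}{}$. For a hook $\deriv{\fvar[1]}{t}{\fvar[2]}$ (\ref{rc-hook}), the WF interpretation gives $\typenv' = \extenv{\typenv}{\fvar[2]}{\typout{\Sort{t}}}$ with the side conditions $\fvar[1]\in\dom\typenv$, $\Tvar t \subseteq \dom\typenv$, $\typenv\funu{\fvar[1]} = \typin{\Sort t}$; the concrete interpretation gives $\concrenv' = \extenv{\concrenv}{\fvar[2]}{\cval}$ where $\mpar{\concrenv\funu{\fvar[1]},\concrenv\funu t,\cval}\in\tripleset$. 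The domains of $\typenv'$ and $\concrenv'$ agree because both add exactly $\fvar[2]$; for the binding, I need $\cval : \typout{\Sort t}$, and this follows from well-formedness of $\tripleset$: the triple $\mpar{\concrenv\funu{\fvar[1]},\concrenv\funu t,\cval}$ is well formed, so there is a sort $\sort$ with $\Sort[]{\concrenv\funu t} = \sort$, $\concrenv\funu{\fvar[1]} : \typin{\sort}$, and $\cval : \typout{\sort}$; Lemma~\ref{lem:sort-termenv} (applied to $t$ with environment $\concrenv$ restricted to $\Tvar t$ and sorting environment $\typenv$, using $\OKst{}{}$ to match sorts) identifies $\sort$ with $\Sort{t}$, so $\cval : \typout{\Sort t}$ as required. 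The filter case (\ref{rc-filter}) is analogous: well-formedness fixes $\typsig F = \myvecp[\typenv]{\svar x}{n}\rightarrow\myvecp\sort m$ with the output variables fresh, the concrete side produces $\myvecp\cval m$ with $\concrfilter F{\myvec[\concrenv]{\svar x}{n}}{\myvecp\cval m}$, and here I need the assumption that concrete filter interpretations respect their signatures, i.e.\ map tuples of the input sorts to tuples of the output sorts --- then extending by $\myvecp{\svar y}{m}\mapsto\myvecp\cval m$ on both sides preserves domain equality and the sorting relation. For the merge case (\ref{rc-merge}), given $\OKout{\outfun_1\funu i}{\outfun_2\funu i}$ for the shared indices, the WF interpretation returns $\mpar{\typenv \extsymbol \envrestr{\typenv_i}, \dfvarset'}$ (the same for all $i$ by its side condition) and the concrete interpretation returns $\mpar{\concrenv\extsymbol\envrestr{\concrenv_i},\tripleset}$ for some branch $i$ that succeeded; since $\OKout{\mpar{\typenv_i,\dfvarset_i}}{\concrenv_i}$ gives $\dom{\typenv_i} = \dom{\concrenv_i}$ and agreement of sorts on $\bvar$, restricting to $\bvar$ and concatenating with the (related) prefix states preserves $\OKst{}{}$.

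The main obstacle --- and the only place real work happens --- is the hook case, specifically lining up the "syntactic" sort $\Sort{t}$ used by the WF interpretation with the sort of the closed term $\concrenv\funu t$ that the well-formedness of $\tripleset$ talks about; this is exactly what Lemma~\ref{lem:sort-termenv} is for, so the argument goes through once I correctly set up its hypotheses from $\OKst{}{}$ (in particular that $\Tvar t \subseteq \dom\typenv = \dom\concrenv$ and $\typenv\funu{\tvar} = \Sort[]{\concrenv\funu{\tvar}}$ for each term variable $\tvar$ of $t$). A secondary subtlety is that the statement implicitly assumes the concrete filter interpretations are sort-respecting; I would state this as a standing assumption on the instantiation (it plainly holds for the \While{} filters of Figure~\ref{fig:example:typing:filters}). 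Everything else is bookkeeping on finite maps.
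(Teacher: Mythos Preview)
Your proposal is correct and matches the paper's intended approach. The paper does not spell out a proof of this lemma in the text (it is deferred to the Coq development), but Lemma~\ref{lem:rc} is provided precisely as the tool for establishing universal consistency results like this one, and your case analysis --- including the use of Lemma~\ref{lem:sort-termenv} in the hook case to align $\Sort{t}$ with $\Sort[]{\concrenv\funu t}$, and the appeal to the standing assumption (stated at the start of Section~\ref{SEC:CONCRETE-INTERPRETATION}) that concrete filter interpretations relate elements of the declared input sorts to elements of the declared output sorts --- is exactly the decomposition one would expect.
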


\subsection{Concrete Derivations}\label{subsec:concrete:derivations}

The concrete interpretation describes how skeletons can be interpreted from a set of
hooks. The \emph{immediate consequence} \(\cinterpf{}\) describes how skeletons can
be assembled. It starts from a set of well-formed triples (that is, of Hoare
triples) \(\tripleset\), and derives a new set of judgements using the concrete
interpretation. Intuitively, from the set of triples generated by derivations of
depth at most~\(n\), it builds the set of triples generated by derivations of
depth at most~\(n + 1\). It is defined as follows.
\begin{equation*}
  \cinterpf{\tripleset} = \mset[(\cstate,\term,\cval)]{
    \begin{gathered}
      \term = c\myvecp{\term}{n} \land
      \Sort[]{\term} = \sort\\
      \Rule{Name}{c\myvecp{\tvar}{n}}{\skel} \in \Rules\\
      \cstate : \typin{\sort}\\
      \concrenv = \extsigvec{\ivar \mapsto \sigma}{\tvar}{\term}[n]\\
      \concrinterp{\skel}[\concrenv']\\
      \concrenv'\funu{\ovar} = \cval
    \end{gathered}
  }
\end{equation*}

\begin{lemma}
  The functional \(\cinterpf{}\) is monotone.
\end{lemma}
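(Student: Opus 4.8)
The plan is to reduce monotonicity of $\cinterpf{}$ --- that is, the implication $\tripleset_1 \subseteq \tripleset_2 \implies \cinterpf{\tripleset_1} \subseteq \cinterpf{\tripleset_2}$ --- to a monotonicity property of the concrete interpretation itself in its triple-set component. The crucial structural fact, read off directly from Figure~\ref{fig:concrete-interpretation} together with the two derived clauses of Section~\ref{SEC:INTERP}, is that the triple set of an interpretation state is threaded through every clause unchanged, and that it is consulted in exactly one place --- the hook clause --- and there only through the \emph{positive} membership test $\mpar{\concrenv\funu{\fvar[1]},\concrenv\funu{\term},\cval} \in \tripleset$. No clause tests for non-membership in $\tripleset$, nor uses it anti-monotonically; hence enlarging $\tripleset$ can only ever enable more derivations, never disable one.

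Accordingly, I would first establish the auxiliary claim: for every skeleton body $\skel$, environment $\concrenv$, triple sets $\tripleset_1 \subseteq \tripleset_2$, and result environment $\concrenv'$, if $\concrinterp[\concrenv,\tripleset_1]{\skel}[\concrenv']$ then $\concrinterp[\concrenv,\tripleset_2]{\skel}[\concrenv']$. This goes by induction on the derivation of the premise, proving at the same time the analogous statement for a single bone (whose interpretation ends in a state rather than a result). The $\emptylist$ clause and the filter clauses do not mention $\tripleset$ and transfer verbatim; in the hook clause the witnessing triple still lies in the larger set $\tripleset_2 \supseteq \tripleset_1$ and the produced binding $\extenv{\concrenv}{\fvar[2]}{\cval}$ is unchanged; and the sequencing clause $\cons{B}{\skel}$ and the branch clause $\branches{\myvec{\skel}{n}}$ merely compose sub-interpretations, so they follow from the induction hypotheses together with the observation that the concrete merge clause neither inspects nor alters the triple set. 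Equivalently --- and more in keeping with the paper's method --- this auxiliary claim is exactly the conclusion of Lemma~\ref{lem:dc} applied with $I_1 = I_2$ the concrete interpretation, $\OKst{\mpar{\concrenv_1,\tripleset_1}}{\mpar{\concrenv_2,\tripleset_2}}$ taken to be $\concrenv_1 = \concrenv_2 \wedge \tripleset_1 \subseteq \tripleset_2$, and $\OKout{\concrenv_1}{\concrenv_2}$ taken to be $\concrenv_1 = \concrenv_2$; verifying the four hypotheses of that lemma amounts to precisely the case analysis just sketched.

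With the auxiliary claim in hand, monotonicity of $\cinterpf{}$ is immediate. Assume $\tripleset_1 \subseteq \tripleset_2$ and take $\mpar{\cstate,\term,\cval} \in \cinterpf{\tripleset_1}$. Unfolding the definition of $\cinterpf{}$ produces a skeleton $\Rule{Name}{c\myvecp{\tvar}{n}}{\skel} \in \Rules$ with $\term = c\myvecp{\term}{n}$ and $\Sort[]{\term} = \sort$, the side condition $\cstate : \typin{\sort}$, the environment $\concrenv = \extsigvec{\ivar \mapsto \cstate}{\tvar}{\term}[n]$, and a result $\concrenv'$ with $\concrinterp[\concrenv,\tripleset_1]{\skel}[\concrenv']$ and $\concrenv'\funu{\ovar} = \cval$. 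None of these side conditions mentions the triple set, and the auxiliary claim gives $\concrinterp[\concrenv,\tripleset_2]{\skel}[\concrenv']$; therefore $\mpar{\cstate,\term,\cval} \in \cinterpf{\tripleset_2}$, which is what we wanted.

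I do not expect any genuine obstacle here: the whole content is the single observation that $\tripleset$ occurs only positively in the concrete interpretation. The one point that warrants care is checking that the \emph{derived} clauses of Section~\ref{SEC:INTERP} --- in particular the concrete merge clause, in which the partial map $\outfun$ ranges over result environments, so that the induction hypothesis applies uniformly to every branch $S_i$ --- really do pass $\tripleset$ through untouched; this is exactly what makes the induction (equivalently, the verification of the hypotheses of Lemma~\ref{lem:dc}) go through without friction.
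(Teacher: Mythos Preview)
Your proposal is correct and matches the paper's own argument: the paper's proof is the one-sentence observation that \(\tripleset\) is used only in the hook clause, and only positively, so enlarging it cannot remove results---exactly the content of your auxiliary claim. Your additional packaging via Lemma~\ref{lem:dc} is a valid alternative formulation of the same inspection, but the paper simply states the observation directly without invoking the consistency machinery.
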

\begin{proof}
  This is immediate by inspecting the interpretation of skeletal bodies, as the
  only one where \(\tripleset\) is used is for hooks, and a bigger
  \(\tripleset\) does not remove results.
\end{proof}

\begin{lemma}\label{lem:concr-wf}
    If \(\tripleset\) is a well-formed triple set, then \(\cinterpf{\tripleset}\)
  is a well-formed triple set.

\end{lemma}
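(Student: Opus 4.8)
The plan is to unfold the definition of the immediate consequence $\cinterpf{\tripleset}$ and, for each triple it produces, to derive its well-formedness from the well-formedness of the skeleton used to produce it, combined with the universal consistency of the WF and concrete interpretations (Lemma~\ref{lem:wf-concr-consistent}).

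Concretely, take an arbitrary triple $(\cstate, \term, \cval) \in \cinterpf{\tripleset}$. By the definition of $\cinterpf{}$, there is a constructor $c$ with $\term = c\myvecp{\term}{n}$ and a program sort $\sort$ with $\Sort[]{\term} = \sort$; there is the (unique, by Requirement~\ref{req:single-skeleton}) skeleton for $c$, whose body we call $\skel$; we have $\cstate : \typin{\sort}$; and, writing $\concrenv$ for the environment of the definition (mapping $\ivar$ to $\cstate$ and each $\tvar[i]$ to the subterm $\term[i]$), there is a run $\concrinterp{\skel}[\concrenv']$ with $\concrenv'\funu{\ovar} = \cval$. To show this triple is well formed it suffices to show $\cstate : \typin{\sort}$ and $\cval : \typout{\sort}$; the former holds by construction, so the real content is $\cval : \typout{\sort}$.

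To get this, I would use well-formedness of the skeleton for $c$. Let $\typenv$ and $\dfvarset$ be the initial WF state it prescribes: $\typenv$ sends $\ivar$ to $\typin{\sort}$ and each $\tvar[i]$ to $\Sort[]{\term[i]}$, and $\dfvarset = \dom{\typenv}$. Well-formedness gives $\wfinterp{\skel}$ with a result $\mpar{\typenv',\dfvarset'}$ satisfying $\typenv'\funu{\ovar} = \typout{\sort}$. I then verify $\OKst{\mpar{\typenv,\dfvarset}}{\mpar{\concrenv,\tripleset}}$: $\tripleset$ is well formed by hypothesis; $\dom{\typenv} = \dom{\concrenv}$ by construction; $\concrenv\funu{\ivar} = \cstate$ has sort $\typin{\sort} = \typenv\funu{\ivar}$; and, since $\term$ is closed, each $\term[i]$ is a closed term, hence $\term[i] : \Sort[]{\term[i]} = \typenv\funu{\tvar[i]}$. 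Now apply universal consistency (Lemma~\ref{lem:wf-concr-consistent}) to $\skel$ with these two consistent states, $\wfinterp{\skel}$ and $\concrinterp{\skel}[\concrenv']$: it yields $\OKout{\mpar{\typenv',\dfvarset'}}{\concrenv'}$, i.e.\ $\dom{\typenv'} = \dom{\concrenv'}$ and $\concrenv'\funu{x} : \typenv'\funu{x}$ for all $x \in \dom{\typenv'}$. In particular $\ovar \in \dom{\typenv'}$ (as $\typenv'\funu{\ovar} = \typout{\sort}$ is defined), so $\cval = \concrenv'\funu{\ovar}$ has sort $\typenv'\funu{\ovar} = \typout{\sort}$, which is exactly $\cval : \typout{\sort}$. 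Hence every element of $\cinterpf{\tripleset}$ is well formed.

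There is no genuine obstacle here: the lemma is essentially a corollary of Lemma~\ref{lem:wf-concr-consistent} together with the definition of well-formed skeleton. The only mildly delicate point is establishing the initial $\OKst$ relation between the two input states, which relies on the bookkeeping facts that the immediate subterms of a closed term are themselves closed and carry the sorts dictated by the constructor's signature (cf.\ Lemma~\ref{lem:sort-termenv}) and on the assumed well-formedness of $\tripleset$.
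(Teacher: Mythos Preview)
Your proposal is correct and follows exactly the natural route the paper sets up: the lemma is a direct corollary of the universal consistency between the WF and concrete interpretations (Lemma~\ref{lem:wf-concr-consistent}) together with the definition of well-formed skeleton, and your bookkeeping for establishing the initial $\OKst$ relation is right. The paper does not spell out a proof in the text (it defers to the Coq development), but your argument is precisely the one the surrounding machinery is designed to support.
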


We now show that the smallest fixpoint of~\(\cinterpf{}\) corresponds to the set of
triples generated by any finite derivation, or in other words, an inductive
definition of the concrete rules.

\begin{lemma}\label{lem:continuity}
    \(\cinterpf{}\) is continuous:
    for any increasing sequence of triple sets \(\mpar{T_i}\),
    we have \(\bigcup_i \cinterpf{T_i} = \cinterpf{\bigcup_i T_i}\).
\end{lemma}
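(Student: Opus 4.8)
The plan is to prove the two inclusions separately. The inclusion $\bigcup_i \cinterpf{T_i} \subseteq \cinterpf{\bigcup_i T_i}$ is free: for each $i$ we have $T_i \subseteq \bigcup_j T_j$, so monotonicity of $\cinterpf{}$ (the preceding lemma) gives $\cinterpf{T_i} \subseteq \cinterpf{\bigcup_j T_j}$, and taking the union over $i$ concludes. This half does not even use that the sequence is increasing.

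For the converse $\cinterpf{\bigcup_i T_i} \subseteq \bigcup_i \cinterpf{T_i}$, I would first isolate an auxiliary \emph{finite support} lemma about the concrete interpretation of skeleton bodies: whenever the concrete interpretation of a body $\skel$ succeeds from an environment $\concrenv$ and a triple set $\tripleset$, there is a \emph{finite} subset $F \subseteq \tripleset$ over which it still succeeds, with the same output environment. This is proved by induction on the derivation of the concrete interpretation, which is legitimate because the concrete interpretation of a body is defined by plain structural recursion on the finite object $\skel$, with no recursion back through $\cinterpf{}$, so every such derivation is finite. The empty body and every filter never consult the triple set, so $F = \emptyset$ works; a hook $\deriv{\fvar[1]}{\term}{\fvar[2]}$ consults exactly the single triple it looks up, so $F$ is that singleton; a sequence $\cons{B}{\skel}$ takes the union of the finite support for $B$ with the finite support for $\skel$ furnished by the induction hypothesis; and a branch bone $\branches{\myvec{\skel}{n}}$ takes the union of the finitely many finite supports obtained, by the induction hypothesis, for the finitely many branches that contribute to the merge, i.e.\ those in $\dom{\outfun}$. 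The complementary fact --- that enlarging the triple set preserves every derivation of the concrete interpretation of a body and leaves its output environment unchanged --- is exactly the observation already used in the monotonicity proof, so once the interpretation succeeds over $F$ it succeeds over any superset of $F$.

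Granting the finite support lemma, the remainder is routine. Take $\mpar{\cstate, \term, \cval} \in \cinterpf{\bigcup_i T_i}$ and unfold the definition of $\cinterpf{}$: there are a closed term $\term = c\myvecp{\term}{n}$ with $\Sort[]{\term} = \sort$, the unique skeleton for $c$ with body $\skel$, the initial environment $\concrenv$ sending $\ivar$ to $\cstate$ and each $x_{t_j}$ to $\term_j$, and a successful concrete interpretation of $\skel$ from $\concrenv$ over the triple set $\bigcup_i T_i$ whose output environment sends $\ovar$ to $\cval$. Applying the finite support lemma yields a finite $F \subseteq \bigcup_i T_i$ over which the interpretation still succeeds with the same output. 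Since $F$ is finite, each of its elements lies in some $T_i$, and the sequence is increasing, there is a single index $k$ with $F \subseteq T_k$; by the ``enlarging preserves derivations'' fact the interpretation then succeeds over $T_k$ with the same output environment, so $\mpar{\cstate, \term, \cval} \in \cinterpf{T_k} \subseteq \bigcup_i \cinterpf{T_i}$.

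I expect the only real obstacle to be the finite support lemma, and within it the branch case: one must check that a derivation mentions only finitely many branches and that each sub-body $\skel[i]$ is structurally strictly smaller than $\branches{\myvec{\skel}{n}}$, so that the induction is well founded and the union of the per-branch supports remains finite. Both hold because skeleton bodies are finite syntactic objects and their concrete interpretation is structurally recursive; this finiteness is precisely what upgrades monotonicity to continuity.
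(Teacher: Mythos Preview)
Your proposal is correct and follows essentially the same approach as the paper: both directions are handled identically, with the nontrivial inclusion reduced to a finite-support lemma proved by structural induction on the skeleton body, followed by the standard ``finite subset of a directed union lies in one member'' step. Your write-up is in fact more detailed than the paper's, which dispatches the finite-support lemma in a single sentence (``immediate by induction over the structure of the skeleton $\skel$''); your case analysis of hooks, filters, sequences, and branches is exactly what that sentence unpacks to.
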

\begin{proof}
    We prove this result by double inclusion.
    The inclusion \(\bigcup_i \cinterpf{T_i} \subseteq
    \cinterpf{\bigcup_i T_i}\) follows from the monotony of
    \(\cinterpf{}\). 
    To show that \(\cinterpf{\bigcup_i T_i} \subseteq \bigcup_i \cinterpf{T_i}\),
    we show that for all triple set \(T\) and \(\mpar{\sigma, t, o} \in \cinterpf{T}\),
    there exists a finite subset \(T'\) of \(T\) such that \(\mpar{\sigma, t, o} \in \cinterpf{T'}\).
    This result is immediate by induction over the structure of the skeleton \(\skel\).
    Then, for each \(\mpar{\sigma, t, o} \in \cinterpf{\bigcup_i T_i}\),
    there exists a finite subset \(T'\) of \(\bigcup_i T_i\) such that
    \(\mpar{\sigma, t, o} \in \cinterpf{T'}\).
    As \(T'\) is finite and \(\mpar{T_i}\) monotone,
    there exists \(n\) such that \(T' \subseteq T_i\).
    We conclude by monotonicity of \(\cinterpf{}\).
\end{proof}

\begin{definition}\label{def:csem}
  The concrete semantics~\(\csem\) is the smallest fixpoint of~\(\cinterpf{}\).
\end{definition}
\begin{lemma}\label{lem:csem}
    We have \({\csem} = \bigcup_n \cinterpf[n]{\emptyset}\).
\end{lemma}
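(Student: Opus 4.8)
The plan is a direct application of Kleene's fixpoint theorem, using the two
facts about \(\cinterpf{}\) already established: its monotonicity and its
continuity (Lemma~\ref{lem:continuity}). Triple sets ordered by inclusion form
a complete lattice whose least element is \(\emptyset\) and whose joins are
unions. First I would observe that the sequence \(\cinterpf[n]{\emptyset}\) is
increasing: \(\cinterpf[0]{\emptyset} = \emptyset \subseteq \cinterpf{\emptyset}\)
holds trivially, and a straightforward induction using monotonicity of
\(\cinterpf{}\) then gives \(\cinterpf[n]{\emptyset} \subseteq \cinterpf[n+1]{\emptyset}\)
for all \(n\). Write \(U = \bigcup_n \cinterpf[n]{\emptyset}\), which is thus an
upper bound of the whole chain.

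Next I would show that \(U\) is a fixpoint of \(\cinterpf{}\). Applying
continuity (Lemma~\ref{lem:continuity}) to the increasing sequence
\(\mpar{\cinterpf[n]{\emptyset}}\), we get
\(\cinterpf{U} = \cinterpf{\bigcup_n \cinterpf[n]{\emptyset}} = \bigcup_n \cinterpf{\cinterpf[n]{\emptyset}} = \bigcup_n \cinterpf[n+1]{\emptyset}\).
Since \(\cinterpf[0]{\emptyset} = \emptyset\) contributes nothing to a union,
\(\bigcup_n \cinterpf[n+1]{\emptyset} = \bigcup_n \cinterpf[n]{\emptyset} = U\),
so \(\cinterpf{U} = U\).

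It then remains to show that \(U\) is the \emph{least} fixpoint, which by
Definition~\ref{def:csem} is exactly \(\csem\). Let \(F\) be any fixpoint, i.e.
\(\cinterpf{F} = F\). Then \(\emptyset \subseteq F\), and by monotonicity of
\(\cinterpf{}\) and induction on \(n\), \(\cinterpf[n]{\emptyset} \subseteq \cinterpf[n]{F} = F\)
for every \(n\); taking the union over \(n\) yields \(U \subseteq F\). Hence
\(U\) lies below every fixpoint and is therefore the least one, so
\({\csem} = U = \bigcup_n \cinterpf[n]{\emptyset}\).

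I do not expect a genuine obstacle here: the mathematical content is entirely
carried by the monotonicity and continuity lemmas, which are already in hand.
The only points that need a little care are confirming that the continuity
lemma, stated for arbitrary increasing sequences of triple sets, applies
verbatim to the specific chain \(\mpar{\cinterpf[n]{\emptyset}}\), and handling
the index shift cleanly when the \(n = 0\) term is dropped from the union;
both are routine.
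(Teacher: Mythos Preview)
Your proposal is correct and matches the paper's own proof, which simply observes that triple sets ordered by inclusion form a CPO, that \(\cinterpf{}\) is continuous (Lemma~\ref{lem:continuity}), and then invokes Kleene's fixpoint theorem. You have spelled out the details of that invocation, but the approach is identical.
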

\begin{proof}
  The set of triple sets ordered by inclusion is a CPO, and
  \(\cinterpf{}\) is continuous on this CPO. We conclude by Kleene fixpoint
  theorem.
\end{proof}

\begin{lemma}
  The concrete semantics \(\csem\) is well-formed.
\end{lemma}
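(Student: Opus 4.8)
The plan is to read the result straight off the Kleene characterisation of $\csem$. By Lemma~\ref{lem:csem}, $\csem = \bigcup_n \cinterpf[n]{\emptyset}$, so it suffices to show that every iterate $\cinterpf[n]{\emptyset}$ is a well-formed triple set and that a union of well-formed triple sets is again well formed.

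First I would carry out an induction on $n$. The base case is the empty triple set $\emptyset$, which is vacuously well formed since the defining condition only constrains the triples a set contains. For the inductive step, if $\cinterpf[n]{\emptyset}$ is well formed, then $\cinterpf[n+1]{\emptyset} = \cinterpf{\cinterpf[n]{\emptyset}}$ is well formed by Lemma~\ref{lem:concr-wf}. Hence each $\cinterpf[n]{\emptyset}$ is a well-formed triple set.

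Next I would observe that well-formedness of a triple set is, by definition, an element-wise property: a set of triples is well formed exactly when each of its triples has the shape $(\sigma, c\myvecp{\term}{n}, \cval)$ with $\sigma : \typin{\sort}$ and $\cval : \typout{\sort}$ for $\sort = \Sort[]{c\myvecp{\term}{n}}$. Therefore any triple in $\bigcup_n \cinterpf[n]{\emptyset}$ belongs to some $\cinterpf[n]{\emptyset}$ and so is well formed, which makes the whole union well formed. Combining this with Lemma~\ref{lem:csem} yields that $\csem$ is well formed.

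There is no genuine obstacle here: the only things to verify are the trivial base case and the remark that forming a union cannot break an element-wise invariant. Alternatively, one could bypass Lemma~\ref{lem:csem} and argue that the well-formed triple sets form a subcollection of the CPO of triple sets that contains $\emptyset$ and is closed under $\cinterpf{}$ by Lemma~\ref{lem:concr-wf}, so the least fixpoint --- being the supremum of the ascending Kleene chain from $\emptyset$ --- lies inside it; but routing through Lemma~\ref{lem:csem} is the most direct.
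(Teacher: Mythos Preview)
Your proposal is correct and follows essentially the same approach as the paper: both use Lemma~\ref{lem:csem} to reduce to the iterates $\cinterpf[n]{\emptyset}$, handle the base case $\emptyset$ trivially, and apply Lemma~\ref{lem:concr-wf} for the inductive step. The only cosmetic difference is that the paper fixes a triple first and then finds the relevant $n$, whereas you first show every iterate is well formed and then observe the union preserves this element-wise property; the content is the same.
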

\begin{proof}
  Let \(\mpar{\cstate,\term,\cval} \in {\csem}\). By Lemma~\ref{lem:csem}, there exists a finite number~\(n\) such that
  \(\mpar{\cstate,\term,\cval} \in \cinterpf[n]{\emptyset}\). We prove by
  induction on \(n\) that \(\mpar{\cstate,\term,\cval}\) has the expected
  properties. It is immediate for \(0\), and for \(n+1\) we simply apply
  lemma~\ref{lem:concr-wf}.
\end{proof}

\section{Abstract Interpretation}\label{SEC:ABSTRACT-INTERPRETATION}

This section describes how a set of skeletons defining a
programming language can be re-interpreted over an abstract domain of
properties to obtain an abstract interpretation of the language.

\subsection{Abstract Domains}\label{subsec:abstract:domains}

An abstract interpretation of a set of skeletons must define 
abstract domains for all the terms and flow sorts used in the skeleton bodies, ending
with abstract semantic states and abstract results.

Elements in the abstract domains represent sets of values in the
corresponding concrete domain (they are related through the concretion function
\(\concr\) introduced below). 
The abstract interpretation framework is designed to be parametric in the choice of abstract domains for base
values such as integers, Booleans, and program states. All
we require is that each abstract domain for sort \(\sort\) is a partial order \(\sqsubseteq\)
with a least element, denoted \(\bot_{\sort}\), representing
the empty set.
%
For example, the lattice of intervals can be used as an abstract
domains for integers, with $\bot_{\intT}$ being the empty
interval. Similarly, a state that maps program variables to
integer values can be abstracted as a mapping from variables to
intervals, or as a polyhedron that defines linear relations between
program variables. 

Skeletal variables can also range over terms. For each program or base sort \(\sort\),
we define an abstract domain by imposing a flat 
partial order on the set of terms of that sort (\emph{i.e.}, we relate a term to itself and no
other term) and by adding a $\bot_{\sort}$ element, smaller than all
terms of that sort. Abstract base terms include every concrete base term, they
may also include additional terms that denote sets of concrete base terms.
To ease notation, we sometimes omit the sort in \(\bot\) in an equality. In
this case, \(\aval = \bot\) should be read \(\aval = \bot_{\Sort[]{\aval}}\) 
and, \(\aval \neq \bot\) should be read \(\aval \neq \bot_{\Sort[]{\aval}}\).


\subsection{Abstract Interpretation of Skeletons}\label{subec:abstract:interpretation:rules}

In addition to the abstract domains, an abstract interpretation must specify its
input and output states, and how the empty skeleton body, hooks, filters, and
the merging of branches are interpreted. For each filter symbol \(F\) of
signature \(\myvecp{\sort}{n}\rightarrow\myvecp{\sortp}{m}\) we assume a
total function \(\abstrfilter{F}{}{}\) from the domain corresponding to
\(\myvecp{\sort}{n}\) to the domain corresponding to \(\myvecp{\sortp}{m}\). A
filter interpretation may return \(\bot\) to state it is not defined for that input.

\begin{figure}
  \centering
  \begin{align*}
 &\implies \abstrinterp{\emptylist}[\absflag][\abstrenv]\\
      \abstrenvp = \extenv{\abstrenv}{\fvar[2]}{\bot_{\typout{\Sort[]{\abstrenv\funu{t}}}}}
    &\implies \abstrinterp[\halt]{\deriv{\fvar[1]}{\term}{\fvar[2]}}[\halt][\abstrenvp, \abstripleset]\\
    \mpar{\begin{rgathered}
        \abstrenv\funu{\fvar[1]} \sqsubseteq \astate\\
        \abstrenv\funu{t} \sqsubseteq \aterm\\
        \mpar{\astate,\aterm,\bot} \in \abstripleset\\
      \abstrenvp = \extenv{\abstrenv}{\fvar[2]}{\bot_{\typout{\Sort[]{\abstrenv\funu{t}}}}}
    \end{rgathered}}
    &\implies \abstrinterp[\cont]{\deriv{\fvar[1]}{\term}{\fvar[2]}}[\halt][\abstrenvp, \abstripleset]\\
    \mpar{\begin{rgathered}
        \abstrenv\funu{\fvar[1]} \sqsubseteq \astate\\
        \abstrenv\funu{t} \sqsubseteq \aterm\\
        \mpar{\astate,\aterm,\aval} \in \abstripleset\\
        \aval \sqsubseteq \avalp\\
        \abstrenvp = \extenv{\abstrenv}{\fvar[2]}{\avalp}
    \end{rgathered}}
    &\implies \abstrinterp[\cont]{\deriv{\fvar[1]}{\term}{\fvar[2]}}[\cont][\abstrenvp, \abstripleset]\\
    \mpar{
    \begin{rgathered}
      \typsig{F} = \myvecp{\sortp}{n} \rightarrow \myvecp{\sort}{m}\\
      \abstrenvp = \extsigvec{\abstrenv}{\svar{y}}{\abot{\sort}}[m]
    \end{rgathered}}
    &\implies
    \abstrinterp[\halt]{\filter{F}{\myvec{\svar{x}}{n}}{\myvecp{\svar{y}}{m}}}[\halt][\abstrenvp, \abstripleset]
      \\
    \mpar{
    \begin{rgathered}
      \myvecp[\abstrenv]{\svar{x}}{n} \sqsubseteq \myvecp{\aval}{n}\\
      \abstrfilter{F}{\myvec{\aval}{n}}{\bot}\\
      \typsig{F} = \myvecp{\sortp}{n} \rightarrow \myvecp{\sort}{m}\\
      \abstrenvp = \extsigvec{\abstrenv}{\svar{y}}{\abot{\sort}}[m]
    \end{rgathered}}
    &\implies
    \abstrinterp[\cont]{\filter{F}{\myvec{\svar{x}}{n}}{\myvecp{\svar{y}}{m}}}[\halt][\abstrenvp, \abstripleset]
      \\
    \mpar{
    \begin{rgathered}
      \myvecp[\abstrenv]{\svar{x}}{n} \sqsubseteq \myvecp{\aval}{n}\\
      \abstrfilter{F}{\myvec{\aval}{n}}{} \sqsubseteq \myvecp{\avalp}{m}\\
      \abstrenvp = \extsigvec{\abstrenv}{\svar{y}}{\avalp}[m]
    \end{rgathered}}
    &\implies
    \abstrinterp[\cont]{\filter{F}{\myvec{\svar{x}}{n}}{\myvecp{\svar{y}}{m}}}[\cont][\abstrenvp, \abstripleset]
      \\
    \mpar{
    \begin{rgathered}
      n \geq 1\\
      \forall i \in [1..n]. \outfun\funu{i} = \mpar{\halt,\abstrenv[i]}\\
      \forall i \in [1..n]. \bvar \subseteq \dom{\abstrenv[i]}\\
      \forall i,j \in [1..n]. \envrestr{\abstrenv[i]} = \envrestr{\abstrenv[j]}\\
      \abstrenvp = \abstrenv \extsymbol \envrestr{\abstrenv[1]}
    \end{rgathered}}
    &\implies
      \abstrinterp[\absflag][\outfun,\mpar{\abstrenv,\abstripleset}]{\bigoplus_n}[\halt][\abstrenvp, \abstripleset][\bvar]
      \\
    \mpar{
    \begin{rgathered}
      \dom{\outfun} = [1..n]\\
      \mathcal{E} = \mset[\abstrenv_i]{\outfun\funu{i} = \mpar{\cont,\abstrenv[i]}} \neq \emptyset\\
      \forall \abstrenv[i] \in \mathcal{E}. \bvar \subseteq \dom{\abstrenv[i]}\\
      \abstrenv[i] \in \mathcal{E} \implies \abstrenvp = \abstrenv \extsymbol \envrestr{\abstrenv[i]}
    \end{rgathered}}
    &\implies
      \abstrinterp[\cont][\outfun,\mpar{\abstrenv,\abstripleset}]{\bigoplus_n}[\cont][\abstrenvp, \abstripleset][\bvar]
  \end{align*}
  \caption{Abstract Interpretation}\label{fig:abstract-interpretation}
\end{figure}

The input state of an abstract interpretation is a triple
\(\mpar{\absflag,\abstrenv,\abstripleset}\) comprising a \emph{flag}
\(\absflag\), an abstract environment $\abstrenv$ (mapping skeletal
variables to abstract terms and values), and a set of abstract semantic
triples $\abstripleset$ that gives semantics to hooks. A flag is either \(\halt\)
or \(\cont\) and it indicates whether it has been determined that the current
skeleton does not apply (\(\halt\)) or that it may still apply~(\(\cont\)). The
output state of an abstract interpretation is a flag and an abstract environment
where skeletal variables hold the result of the abstract interpretation.
Figure~\ref{fig:abstract-interpretation} defines the abstract semantics.

The abstract interpretation of an empty list of hypotheses just returns the flag
and environment from its input. There are three cases for the interpretation of
a hook $\deriv{\fvar[1]}{t}{\fvar[2]}$. If we have determined that the skeleton
does not apply, we set \(\fvar[2]\) to \(\bot\) of the correct sort. In the two
other cases, we need to have a triple \(\mpar{\astate,\aterm,\aval}\) from
\(\abstripleset\) such that \(\abstrenv\funu{\fvar[2]} \sqsubseteq \astate\) and
\(\abstrenv\funu{\term} \sqsubseteq \aterm\). This loss of precision gives some
flexibility for such a derivation. We then have two (non exclusive) cases: if
\(\aval = \bot\), then we know the skeleton does not apply, and set the flag
to \(\halt\) and \(\fvar[2]\) to the appropriate \(\bot\). For the last case, we
do not restrict what \(\aval\) is (it may still be \(\bot\)), and we bind in the
resulting environment \(\fvar[2]\) to some \(\avalp\) that may be less precise
than \(\aval\), again to gain flexibility.

The abstract interpretation of a filter
\(\filter{F}{\myvec{\svar{x}}{n}}{\myvecp{\svar{y}}{m}}\) also has three cases.
If we know the skeleton does not apply, we just bind the output variables
\(\myvecp{\svar{y}}{m}\) to the appropriate \(\bot\) depending on the signature
of \(F\). Otherwise, we apply the filter interpretation to an approximation of
the arguments as given by the environment. If the result is \(\bot\), we know
the skeleton does not apply and switch the flag to \(\halt\), as well as extend
the environment with \(\bot\) of the correct sort. Otherwise, we keep the flag
as \(\cont\) and extend the environment to an approximation of the result of the
filter.

For the merging operator, we interpret every possible branch and collect their
results in \(\outfun\). If all branching have the \(\halt\) flag (either because
the \(\halt\) flag was set before their interpretation, which would then be
propagated, or because they newly returned it), then the skeleton does not apply
and we set the flag accordingly, extending the environment with mappings from
the shared skeletal variables \(\bvar\) to \(\bot\) of the correct sort.
Otherwise, we collect all branches that have a \(\cont\) flag. They must all
return abstract environments that agree on the shared variables (which is why
the approximations in the filter and hook cases are useful, to ensure this is
possible), and we extend the current environment with this common environment.

The key difference between the abstract and concrete interpretations is
how the different results are merged in case of branching. The concrete
semantics picks one of them, whereas the abstract semantics requires all
branches that provided a result to agree. This is because the goal of the
abstract semantics is to infer abstract semantic triples that are valid
statements about all possible resulting states, i.e., about all possible
concrete choices in case of branching.

\subsection{Consistency of WF and Abstract Interpretations}\label{subsec:type-abstract-consistent}
We define \(\OKst{\mpar{\typenv,\dfvarset}}{\mpar{\absflag,\abstrenv,\abstripleset}}\) as
follows: \(\abstripleset\) is well formed, \(\dom{\typenv} = \dom{\abstrenv}\), and
for any \(x \in \dom{\typenv}\) we have \(\abstrenv\funu{x} : \typenv\funu{x}\).
We define \(\OKout{\mpar{\typenv,\dfvarset}}{\mpar{\absflag,\abstrenv}}\) as follows: \(\dom{\typenv} =
\dom{\abstrenv}\) and for any \(x \in \dom{\typenv}\) we have
\(\abstrenv\funu{x} : \typenv\funu{x}\).

\begin{lemma}\label{lem:wf-abstr-consistent}
    The well-formedness and abstract interpretations are universally consistent.

\end{lemma}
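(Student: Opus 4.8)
The plan is to apply Lemma~\ref{lem:rc} with $I_1$ the well-formedness interpretation, $I_2$ the abstract interpretation, and the relations $\OKst{}{}$ and $\OKout{}{}$ just defined, so that it remains only to discharge the four local obligations \ref{rc-nil}--\ref{rc-merge}. What makes these tractable is that $\OKst{}{}$ and $\OKout{}{}$ say nothing about the abstract flag $\absflag$ and, apart from the clause ``$\abstripleset$ is well formed'', constrain only the environment component (equal domains, every variable bound to a value of the sort recorded by $\typenv$). Consequently the three-way case split for the abstract hook, the three-way split for the abstract filter, and the two-way split for abstract merging all collapse to a single argument: in every case the only thing to verify is that the output environment extends the input one by binding the newly-defined variables to values whose sorts are exactly those assigned by the WF interpretation. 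Obligation \ref{rc-nil} is then immediate, since both interpretations return the input environment unchanged; and \ref{rc-filter} is nearly as easy, since the WF interpretation binds $\myvecp{\svar{y}}{m}$ to the output sorts read off $\typsig{F}$, whereas the abstract interpretation binds the same variables either to $\bot$ of exactly those sorts or to an over-approximation of the output of $\abstrfilter{F}{}{}$, which again lies in the abstract domains of those sorts; with $\dom{\typenv}=\dom{\abstrenv}$, the relation $\OKout{}{}$ follows.

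The hook obligation \ref{rc-hook} carries the only real content. There the WF interpretation binds $\fvar[2]$ to the sort $\typout{\Sort{t}}$ (the sort of $t$ computed under $\typenv$), whereas in every abstract subcase $\fvar[2]$ gets bound to a value of sort $\typout{\Sort[]{\abstrenv\funu{t}}}$ --- namely $\bot$ of that sort, or a value $\avalp$ with $\aval \sqsubseteq \avalp$ where $(\astate, \aterm, \aval) \in \abstripleset$ and $\abstrenv\funu{t} \sqsubseteq \aterm$. So the obligation reduces to the equality $\Sort[]{\abstrenv\funu{t}} = \Sort{t}$, together with, in the $\avalp$ subcase, the fact that $\aval$ has sort $\typout{\Sort[]{\aterm}}$ with $\Sort[]{\aterm} = \Sort[]{\abstrenv\funu{t}}$. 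The first equality is a routine variant of Lemma~\ref{lem:sort-termenv}: by $\OKst{}{}$ the environment $\abstrenv$ maps every term variable $x$ of $t$ to an abstract term of sort $\typenv\funu{x}$, and substituting these preserves the sort, the only new point being that abstract term domains also contain $\bot_s$, which carries sort $s$. For the second point we use that the abstract term and value domains are flat \emph{within each sort}, so $\sqsubseteq$ never relates elements of different sorts; hence $\Sort[]{\aterm} = \Sort[]{\abstrenv\funu{t}}$, well-formedness of $\abstripleset$ gives $\aval : \typout{\Sort[]{\aterm}}$, and $\aval \sqsubseteq \avalp$ transfers this sort to $\avalp$. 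This reconciliation of the two routes to the sort of the hooked term --- carried through the triple-set lookup and the two $\sqsubseteq$-approximations --- is the main obstacle, and it is precisely to make it work that the well-formedness clause on $\abstripleset$ sits inside $\OKst{}{}$ and that the abstract term domains are made flat per sort.

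It remains to treat merging, obligation \ref{rc-merge}. Both interpretations require their outcome function to be total on $[1..n]$, so the hypothesis of Lemma~\ref{lem:rc}(\ref{rc-merge}) supplies $\OKout{(\typenv_i, \dfvarset_i)}{(\absflag_i, \abstrenv[i])}$ for every branch $i$. Fix the branch $i$ that determines the abstract output --- branch $1$ when all branch flags are $\halt$, any surviving branch otherwise. The WF interpretation returns $\typenv' = \typenv \extsymbol \envrestr{\typenv_i}$ (a side-condition imposed for every $i$), and the abstract interpretation returns $\abstrenvp = \abstrenv \extsymbol \envrestr{\abstrenv[i]}$, both restrictions being to $\bvar$. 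The abstract merge rule requires $\bvar \subseteq \dom{\abstrenv[i]}$, which equals $\dom{\typenv_i}$ by the branch's $\OKout{}{}$; together with $\dom{\typenv} = \dom{\abstrenv}$ this yields $\dom{\typenv'} = \dom{\typenv} \cup \bvar = \dom{\abstrenvp}$. For $x \in \bvar$ we have $\typenv'\funu{x} = \typenv_i\funu{x}$ and $\abstrenvp\funu{x} = \abstrenv[i]\funu{x} : \typenv_i\funu{x}$ by the branch's $\OKout{}{}$; for $x$ in the common input domain the bindings are inherited unchanged and agree by $\OKst{}{}$. Hence $\OKout{}{}$ holds for the outputs. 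With all four obligations of Lemma~\ref{lem:rc} discharged, the well-formedness and abstract interpretations are universally consistent.
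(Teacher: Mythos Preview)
Your proof is correct and follows exactly the route the paper intends: the lemma is stated without proof in the paper (it is delegated to the Coq development), but Lemma~\ref{lem:rc} is provided precisely so that universal consistency reduces to the four local obligations you discharge. One small slip: for the hook, filter, and merge obligations of Lemma~\ref{lem:rc} the conclusion to establish is $\OKst{\interpIn'_1}{\interpIn'_2}$, not $\OKout{}{}$ as you write in two places; since $\OKst{}{}$ differs from $\OKout{}{}$ only by the clause ``$\abstripleset$ is well formed'' and $\abstripleset$ is threaded through unchanged, your argument already covers the extra clause and the content is unaffected.
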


\subsection{Abstract Derivations}\label{subsec:abstract:derivations}

We define the abstract immediate consequence operator from well-formed triple
sets to triple sets in Figure~\ref{fig:abstract:immediate:consequence}.
As in the concrete case,
the immediate consequence describes how to assemble skeletons.

\begin{figure}
\begin{equation*}
  \ainterpf{\abstripleset} = \mset[(\astate,\aterm,\aval)]{
    \begin{gathered}
      \aterm = c\myvecp{\aterm}{n} \land
      \Sort[]{\aterm} = \sort\\
      \Rule{Name}{c\myvecp{\tvar}{n}}{\skel} \in \Rules\\
      \astate : \typin{\sort}\\
      \abstrenv = \extsigvec{\ivar \mapsto \astate}{\tvar}{\aterm}[n]\\
      \abstrinterp[\cont]{\skel}[\absflag][\abstrenvp]\\
      \abstrenvp\funu{\ovar} = \aval
    \end{gathered}}
\end{equation*}
    \caption{The abstract immediate consequence operator}
    \label{fig:abstract:immediate:consequence}
\end{figure}

\begin{lemma}
  The functional \(\ainterpf{}\) is monotonic.
\end{lemma}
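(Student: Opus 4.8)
The plan is to follow exactly the route used for the concrete operator $\cinterpf{}$. Fix two well-formed abstract triple sets with $\abstripleset[1] \subseteq \abstripleset[2]$; the goal is $\ainterpf{\abstripleset[1]} \subseteq \ainterpf{\abstripleset[2]}$. The real work sits in one auxiliary claim about the abstract interpretation relation of Figure~\ref{fig:abstract-interpretation}: for every skeleton body $\skel$, every pair of flags, and every pair of abstract environments, if the abstract interpretation of $\skel$ takes the input state $\mpar{\absflag, \abstrenv, \abstripleset[1]}$ to the output state $\mpar{\absflag', \abstrenvp, \abstripleset[1]}$, then it also takes $\mpar{\absflag, \abstrenv, \abstripleset[2]}$ to $\mpar{\absflag', \abstrenvp, \abstripleset[2]}$. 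Note that every rule of Figure~\ref{fig:abstract-interpretation}, as well as the two generic rules of Section~\ref{SEC:INTERP}, threads the triple-set component through unchanged, so it stays equal to the input component all along a derivation and only the flag and the environment evolve. I would prove the claim by induction on the derivation of the abstract interpretation judgement, with a case split on the last rule used.

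The induction is essentially routine, since $\abstripleset$ is consulted in a single construct. For the empty body, for the three filter rules, and for the $\halt$-input hook rule, no premise mentions $\abstripleset$, so the same rule instance applies verbatim with $\abstripleset[2]$ replacing $\abstripleset[1]$. For the two $\cont$-input hook rules for $\deriv{\fvar[1]}{t}{\fvar[2]}$, the only $\abstripleset$-dependent premise is the existence of a triple $\mpar{\astate, \aterm, \aval} \in \abstripleset[1]$ (with $\aval = \bot$ in the rule that sets the flag to $\halt$); since $\abstripleset[1] \subseteq \abstripleset[2]$ that same triple lies in $\abstripleset[2]$, the remaining $\sqsubseteq$-side conditions and the resulting environment $\abstrenvp$ are unaffected, so the rule still fires with the same conclusion. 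For a sequence $\cons{B}{\skel}$ I apply the induction hypothesis to the sub-derivation for the bone $B$ and then to the sub-derivation for the tail body; for a set of branches $\branches{\myvec{\skel}{n}}$ I apply the induction hypothesis to each branch sub-derivation and observe that both merge rules ($\halt$-case and $\cont$-case) only inspect the collected flags in $\outfun$, the inclusion of $\bvar$ in the returned domains, and the agreement of the branch environments on $\bvar$ — none of which depends on $\abstripleset$.

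Given the claim, monotonicity of $\ainterpf{}$ is immediate. Unpacking membership of a triple $\mpar{\astate, \aterm, \aval}$ in $\ainterpf{\abstripleset}$ from Figure~\ref{fig:abstract:immediate:consequence}, every condition other than the interpretation judgement — that $\aterm = c\myvecp{\aterm}{n}$ and $\Sort[]{\aterm} = \sort$, that $\Rules$ contains a skeleton for $c$, that $\astate : \typin{\sort}$, the shape of the initial environment $\abstrenv$, and that $\abstrenvp\funu{\ovar} = \aval$ — is independent of the triple set, while the remaining condition $\abstrinterp[\cont]{\skel}[\absflag][\abstrenvp]$ is preserved when $\abstripleset[1]$ is enlarged to $\abstripleset[2]$ by the claim; hence every triple of $\ainterpf{\abstripleset[1]}$ belongs to $\ainterpf{\abstripleset[2]}$. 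I do not expect a genuine obstacle here. The only points needing care are stating the auxiliary claim with the flags and environments universally quantified so that the induction carries over to sub-bodies and branches, and noting that, in contrast with a hypothetical ``for all triples of $\abstripleset$'' premise, every reference to $\abstripleset$ in Figure~\ref{fig:abstract-interpretation} is an existential ``there is a triple in $\abstripleset$'' premise — which is precisely what makes $\ainterpf{}$ monotone rather than antitone.
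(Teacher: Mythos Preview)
Your proposal is correct and follows exactly the same approach as the paper: the paper's proof simply observes that ``the only one where $\abstripleset$ is used is for hooks, and a bigger $\abstripleset$ does not remove results,'' which is precisely the existential-membership observation you spell out in detail via the structural induction.
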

\begin{proof}
  This is immediate by inspecting the interpretation of skeletal bodies, as the
  only one where \(\abstripleset\) is used is for hooks, and a bigger
  \(\abstripleset\) does not remove results.
\end{proof}

\begin{lemma}\label{lem:abstr-wf}
    If \(\abstripleset\) is a well-formed triple set, then
  \(\ainterpf{\abstripleset}\) is a well-formed triple set.

\end{lemma}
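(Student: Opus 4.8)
The plan is to handle this exactly as the abstract analogue of Lemma~\ref{lem:concr-wf}: instead of inducting on the structure of skeleton bodies, I would reduce the statement to the universal consistency of the well-formedness and abstract interpretations, Lemma~\ref{lem:wf-abstr-consistent}, which has already carried out that induction once and for all. Recall that, mirroring the concrete definition, an abstract triple set is well formed when every triple $\mpar{\astate,\aterm,\aval}$ in it has $\aterm = c\myvecp{\aterm}{n}$ with $\Sort[]{\aterm} = \sort$ for some program sort $\sort$, $\astate : \typin{\sort}$, and $\aval : \typout{\sort}$, where $:$ now means membership in the relevant abstract domain.

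First I would fix an arbitrary $\mpar{\astate,\aterm,\aval} \in \ainterpf{\abstripleset}$ and unfold the definition of $\ainterpf{}$ from Figure~\ref{fig:abstract:immediate:consequence}. This produces a constructor $c$ with $\sig{c} = \myvecp{\sort}{n} \rightarrow \sort$, abstract subterms $\myvecp{\aterm}{n}$ such that $\aterm = c\myvecp{\aterm}{n}$ and $\Sort[]{\aterm} = \sort$, the hypothesis $\astate : \typin{\sort}$, the unique skeleton $\Rule{Name}{c\myvecp{\tvar}{n}}{\skel} \in \Rules$, the environment $\abstrenv = \extsigvec{\ivar \mapsto \astate}{\tvar}{\aterm}[n]$, a run $\abstrinterp[\cont]{\skel}[\absflag][\abstrenvp]$, and $\abstrenvp\funu{\ovar} = \aval$. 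Every well-formedness clause for $\mpar{\astate,\aterm,\aval}$ except the last is then already in hand, so the sole remaining obligation is $\aval : \typout{\sort}$.

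Next I would bring in the well-formedness of the skeleton for $c$. Writing $\mpar{\typenv_0,\dfvarset_0}$ for the sorting state in which $\typenv_0$ maps $\ivar$ to $\typin{\sort}$ and each $\tvar[i]$ to $\sort[i]$, with $\dfvarset_0 = \dom{\typenv_0}$, well-formedness supplies a run $\wfinterp[\typenv_0,\dfvarset_0]{\skel}[\mpar{\typenv',\dfvarset'}]$ with $\typenv'\funu{\ovar} = \typout{\sort}$; its definition is phrased through an arbitrary closed term $t = c\myvecp{\term}{n}$ of sort $\sort$, but the induced initial environment is $\typenv_0$ up to variable names whatever that term is, since $\Sort[]{\term[i]} = \sort[i]$ (assuming, as usual, that the base sorts are inhabited). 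I would then check $\OKst{\mpar{\typenv_0,\dfvarset_0}}{\mpar{\cont,\abstrenv,\abstripleset}}$, as defined in Section~\ref{subsec:type-abstract-consistent}: $\abstripleset$ is well formed by hypothesis; $\dom{\typenv_0} = \mset{\ivar,\tvar[1],\ldots,\tvar[n]} = \dom{\abstrenv}$; and for each variable $x$ in this domain $\abstrenv\funu{x} : \typenv_0\funu{x}$, which for $\ivar$ is the hypothesis $\astate : \typin{\sort}$ and for $\tvar[i]$ is $\aterm[i] : \sort[i]$, a consequence of $\Sort[]{c\myvecp{\aterm}{n}} = \sort$ together with $\sig{c} = \myvecp{\sort}{n} \rightarrow \sort$.

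Finally I would apply Lemma~\ref{lem:wf-abstr-consistent} to the body $\skel$, the states $\mpar{\typenv_0,\dfvarset_0}$ and $\mpar{\cont,\abstrenv,\abstripleset}$, and the two runs obtained above; this yields $\OKout{\mpar{\typenv',\dfvarset'}}{\mpar{\absflag,\abstrenvp}}$, hence $\dom{\typenv'} = \dom{\abstrenvp}$ and $\abstrenvp\funu{x} : \typenv'\funu{x}$ for every $x \in \dom{\typenv'}$. Instantiating at $x = \ovar$ and using $\abstrenvp\funu{\ovar} = \aval$ and $\typenv'\funu{\ovar} = \typout{\sort}$ gives $\aval : \typout{\sort}$, which finishes the argument. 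The one delicate point I anticipate is the bookkeeping in the third step -- reconciling the WF interpretation's initial sorting environment with $\abstrenv$, and in particular confirming that each abstract subterm $\aterm[i]$ really inhabits the abstract domain of sort $\sort[i]$ so that $\OKst{}{}$ holds; once that is in place, the conclusion is a single instance of the consistency lemma with no new induction on $\skel$, exactly as Lemma~\ref{lem:concr-wf} drops out of Lemma~\ref{lem:wf-concr-consistent} in the concrete setting.
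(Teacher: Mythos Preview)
Your proposal is correct and follows the approach the paper's architecture is clearly designed to support: the paper states Lemma~\ref{lem:abstr-wf} without an explicit proof, but since it provides Lemma~\ref{lem:wf-abstr-consistent} precisely as the abstract counterpart of Lemma~\ref{lem:wf-concr-consistent}, your reduction of well-formedness of \(\ainterpf{\abstripleset}\) to universal consistency plus skeleton well-formedness is exactly the intended argument. The bookkeeping point you flag---that the WF initial environment depends only on the constructor signature, so any inhabitant of the sort yields the same \(\typenv_0\)---is the only real subtlety, and you handle it correctly.
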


An abstract semantics \(\asem\) is a set of facts of the form
\(\mpar{\astate,\aterm,\aval}\) stating that from state \(\astate\) term
\(\aterm\) evaluates to \(\aval\). A \emph{correct} abstract semantics is one where
such triples correspond to triples in the concrete semantics (see
Section~\ref{subsec:concrete-abstract-consistent}). The more facts an abstract
semantics contains, the more useful it is, as it provides more information about
the behaviour of terms. Hence, we choose as abstract semantics the
one with most facts, i.e., the greatest fixpoint of
\(\ainterpf{}\). This choice provides a proof technique: since the greatest fixpoint is the
union of all sets such that \(\abstripleset \subseteq
\ainterpf{\abstripleset}\), to prove that a fact is correct, one can propose a
candidate set \(\abstripleset\) containing this fact, and then show that
\(\abstripleset \subseteq \ainterpf{\abstripleset}\). This amounts to proving
that the facts \(\abstripleset\) constitute an invariant of the semantics. If we were
to translate such invariants into a derivation, the resulting derivation may
be infinite.\footnote{See the Figure~10 of~\cite{schmidt1997abstract} for an
example of such representation.}

We could also define the abstract semantics \(\asem\) as the smallest
fixpoint of \(\ainterpf{}\). This would be sound but, having fewer
facts, we would then miss valuable abstract results. More precisely,
if a triple \(\mpar{\astate,\term,\aval}\) belongs to the smallest
fixpoint of \(\ainterpf{}\), then (as abstract triple sets form a CPO
ordered by inclusion and \(\ainterpf{}\) is continuous on this CPO),
there exists a finite number~\(n\) such that
\(\mpar{\astate,\term,\aval} \in \ainterpf[n]{\emptyset}\). In other
words, there exists a finite abstract derivation yielding the triple
\(\mpar{\astate,\term,\aval}\). This implies that for all concrete
state \(\cstate{} \in \gamma\funu{\astate}\), the program \(\term\)
terminates. We would thus have lost all facts for which the abstract
semantics cannot prove termination.  Defining the abstract semantics
as the greatest fixpoint of \(\ainterpf{}\) solves this issue.

\begin{definition}\label{def:asem}
  The abstract semantics~\(\asem\) is the largest fixpoint of~\(\ainterpf{}\)
  as a function from well-formed triple sets to well-formed triple sets.
  This restriction is well-defined by Lemma~\ref{lem:abstr-wf}.
\end{definition}
\begin{lemma}\label{lem:asem-wf}
  \(\asem\) is well formed.
\end{lemma}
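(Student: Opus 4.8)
The plan is to obtain this directly from Definition~\ref{def:asem} and Lemma~\ref{lem:abstr-wf} by a standard lattice-theoretic argument; in particular no induction is needed, and, as explained below, none is really available. The starting observation is that well-formedness of a triple set is a \emph{pointwise} property: by definition $\tripleset$ is well formed exactly when every triple it contains is a well-formed triple, i.e., one of the form $(\sigma, t, v)$ with $t = c\myvecp{\term}{n}$ and some sort $\sort$ such that $\Sort[]{\term} = \sort$, $\sigma : \typin{\sort}$, and $v : \typout{\sort}$. Consequently, if we let $W$ denote the fixed set of all well-formed triples, the well-formed triple sets are precisely the subsets of $W$; ordered by inclusion they form a complete lattice, closed under arbitrary unions and intersections, with least element $\emptyset$ and greatest element $W$.

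I would then invoke Lemma~\ref{lem:abstr-wf}, which states that $\ainterpf{}$ maps well-formed triple sets to well-formed triple sets, together with the monotonicity of $\ainterpf{}$ established above, to conclude that $\ainterpf{}$ restricts to a monotone endofunction on this complete lattice. The Knaster--Tarski theorem then gives a greatest fixpoint of $\ainterpf{}$ \emph{within} the lattice of well-formed triple sets, equal to the union of all well-formed triple sets $\abstripleset$ satisfying $\abstripleset \subseteq \ainterpf{\abstripleset}$; this is exactly the triple set $\asem$ of Definition~\ref{def:asem}. Since $\asem$ is by construction an element of that lattice --- equivalently, a union of well-formed triple sets, and well-formedness is preserved by unions --- it is well formed, which is the claim.

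I do not expect a genuine obstacle. The only point worth flagging is that we cannot mimic the concrete case, where well-formedness of $\csem$ was proved by induction on $n$ along the Kleene chain using $\csem = \bigcup_n \cinterpf[n]{\emptyset}$: here $\asem$ is a \emph{greatest} fixpoint, $\ainterpf{}$ is not assumed co-continuous, and the dual iteration from the top of the lattice need not reach $\asem$ in $\omega$ steps, so no finitary, derivation-style argument applies. The lattice-theoretic argument above sidesteps this entirely, relying only on (i) closure of the ``well-formed triple set'' property under arbitrary unions, which is immediate from its pointwise definition, and (ii) preservation of well-formedness by $\ainterpf{}$, which is precisely Lemma~\ref{lem:abstr-wf} --- indeed, it is exactly what makes Definition~\ref{def:asem} well posed in the first place.
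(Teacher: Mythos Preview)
Your proposal is correct and follows essentially the same approach as the paper: both argue that, by Definition~\ref{def:asem} and the Knaster--Tarski characterisation, \(\asem\) is the union of all well-formed post-fixpoints of \(\ainterpf{}\), and since well-formedness is a pointwise property it is preserved under this union. The paper phrases the last step element-wise (pick a triple in \(\asem\), locate a well-formed post-fixpoint containing it), whereas you phrase it lattice-theoretically, but the content is identical.
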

\begin{proof}
  As \(\asem\) is the largest fixpoint of \(\ainterpf{}\), it is the union of all
  well-formed triple sets \(\abstripleset\) such that \(\abstripleset \subseteq
  \ainterpf{\abstripleset}\). Let \(\mpar{\astate,\aterm,\aval} \in \asem\),
  there is \(\abstripleset \subseteq \ainterpf{\abstripleset}\) where
  \(\mpar{\astate,\aterm,\aval} \in \abstripleset\) and \(\abstripleset\) is
    well formed.
    Hence \(\mpar{\astate,\aterm,\aval}\) has the requested properties.
\end{proof}

\subsection{Consistency of Concrete and Abstract
  Interpretations}\label{subsec:concrete-abstract-consistent}

We assume a \emph{concretion function}
\(\concr\) for the abstract domain,
from abstract terms to sets of concrete terms,
and from abstract values to sets of concrete values. 
We impose several constraints on \(\concr\).
First, \(\concr\) must be compatible with \(\sqsubseteq\): if \(\term \in
\concr\funu{\aterm}\) and \(\aterm \sqsubseteq \atermp\), then \(\term \in
\concr\funu{\atermp}\), and if \(\cval \in \concr\funu{\aval}\) and \(\aval
\sqsubseteq \avalp\), then \(\cval \in \concr\funu{\avalp}\).
Second, for any abstract term \(\aterm\) of sort \(s\), the set
\(\concr\funu{\aterm}\) must only contain terms of sort \(s\).
In addition,
\(\concr\funu{c\myvecp{\aterm}{n}} = \mset[c\myvecp{\term}{n}]{\term[i] \in
  \concr\funu{\aterm[i]}}\). Conversely, for any concrete term \(\term\), we
have \(\concr\funu{\term} = \mset{\term}\), as abstract base terms are
extensions of concrete base terms.

\begin{lemma}\label{lem:concr-terms}
  Let \(\concrenv\) be a mapping from term variables to concrete terms, and
  \(\abstrenv\) be a mapping from term variables to abstract terms. If
  \(\Tvar{t} \subseteq \dom{\concrenv}\), \(\Tvar{t} \subseteq
  \dom{\abstrenv}\), and \(\forall \tvar \in \Tvar{t}, \concrenv\funu{\tvar} \in
  \concr\funu{\abstrenv\funu{\tvar}}\), then \(\concrenv\funu{t} \in
  \concr\funu{\abstrenv\funu{t}}\).
\end{lemma}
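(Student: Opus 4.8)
The plan is to prove this by structural induction on the term \(\term\), following the three-case grammar of terms: a base term, a term variable, or a constructor applied to subterms. Before the case analysis I would record two preliminary observations used throughout. First, the hypotheses \(\Tvar{\term} \subseteq \dom{\concrenv}\) and \(\Tvar{\term} \subseteq \dom{\abstrenv}\) guarantee that both \(\concrenv\funu{\term}\) and \(\abstrenv\funu{\term}\) are defined, since the extension of an environment to a term is defined precisely when the environment is defined on every term variable occurring in the term. Second, on a constructed term this extension commutes with the constructor, i.e.\ \(\concrenv\funu{c\myvecip{\term}{n}} = c\myvecip[\concrenv]{\term}{n}\) and likewise for \(\abstrenv\); this is exactly the defining clause of Section~\ref{subsec:terms}.

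In the base-term case we have \(\Tvar{\term} = \emptyset\), so the third hypothesis is vacuous, \(\concrenv\funu{\term} = \term = \abstrenv\funu{\term}\), and since \(\term\) is a concrete base term, \(\concr\funu{\abstrenv\funu{\term}} = \concr\funu{\term} = \mset{\term}\), which contains \(\concrenv\funu{\term}\). In the term-variable case \(\term = \tvar\) we have \(\Tvar{\term} = \mset{\tvar}\), and the conclusion \(\concrenv\funu{\tvar} \in \concr\funu{\abstrenv\funu{\tvar}}\) is precisely the third hypothesis instantiated at \(\tvar\). In the inductive case \(\term = c\myvecp{\term}{n}\) we have \(\Tvar{\term} = \bigcup_{i} \Tvar{\term[i]}\), so each subterm \(\term[i]\) again satisfies the three hypotheses; the induction hypothesis then gives \(\concrenv\funu{\term[i]} \in \concr\funu{\abstrenv\funu{\term[i]}}\) for every \(i \in [1..n]\). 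Applying the commutation observation above, then the defining equation \(\concr\funu{c\myvecp{\aterm}{n}} = \mset[c\myvecp{\term}{n}]{\term[i] \in \concr\funu{\aterm[i]}}\) of the concretion function on constructed terms (with \(\aterm[i] = \abstrenv\funu{\term[i]}\)), yields \(\concrenv\funu{\term} = c\myvecip[\concrenv]{\term}{n} \in \concr\funu{c\myvecip[\abstrenv]{\term}{n}} = \concr\funu{\abstrenv\funu{\term}}\), as required.

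I do not expect a genuine obstacle: this is a routine ``propagation of a compatibility relation through substitution'' lemma, directly parallel to Lemma~\ref{lem:sort-termenv}. The only points needing a little care are the bookkeeping of definedness of the two environment extensions and, in the base-term case, the use of the fact that \(\concr\) is the singleton on concrete base terms, so that base terms are handled correctly even though abstract base terms may denote several concrete ones. In the mechanised development, the one thing I would double-check is that the Coq definition of ``environment applied to a term'' coincides with the clause used above, so that the commutation step is literally unfolding a definition rather than an auxiliary lemma.
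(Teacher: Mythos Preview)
Your proposal is correct and follows exactly the same approach as the paper: structural induction on \(\term\) with the three cases (base term, term variable, constructor), invoking the singleton property \(\concr\funu{\term} = \mset{\term}\) for base terms, the hypothesis for variables, and the compositionality of \(\concr\) on constructed terms for the inductive step. You have simply spelled out the details that the paper's one-sentence proof leaves implicit.
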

\begin{proof}
  By induction on the structure of \(\term\). If it is a base term, then the
  result holds by hypothesis on base terms, if it is a term variable, then the
  result is immediate, and otherwise we prove the property by induction on the
  subterms.
\end{proof}

Regarding values, we have similar restrictions: for any abstract value \(\aval\)
of sort \(\sort\), the concrete values in \(\concr\funu{\aval}\) all have sort
\(\sort\). We also require the abstract interpretation of filters to be
consistent with the concrete one: if
\(\concrfilter{F}{\myvec{\cval}{n}}{\myvecp{\cvalp}{m}}\) and \(\forall i \in
[1..n].\cval[i] \in \concr\funu{\aval[i]}\), then 
\(\abstrfilter{F}{\myvec{\aval}{n}}{\myvecp{\avalp}{m}}\) and \(\forall i \in
[1..n].\cvalp[i] \in \concr\funu{\avalp[i]}\). In particular, if the concrete
filter relates its input to an output, the abstract filter cannot return
\(\bot\).

\begin{definition}
  Let \(\tripleset\) a concrete triple set and \(\abstripleset\) an abstract
  triple set. We say they are \emph{consistent} if for any \(\mpar{\cstate,
    \term, \cval} \in \tripleset\) and \(\mpar{\astate, \aterm, \aval} \in
  \abstripleset\), if \(\cstate \in \concr\funu{\astate}\) and \(\term \in
  \concr\funu{\aterm}\), then \(\cval \in \concr\funu{\aval}\).
\end{definition}

We define
\(\OKst{\mpar{\concrenv,\tripleset}}{\mpar{\absflag,\abstrenv,\abstripleset}}\)
as follows: \(\absflag = \cont\), \(\dom{\concrenv} = \dom{\abstrenv}\), for any
\(\svar{x} \in \dom{\concrenv}\), we have \(\concrenv\funu{\svar{x}} \in
\concr\funu{\abstrenv\funu{\svar{x}}}\), and \(\tripleset\) and
\(\abstripleset\) are well formed and consistent.
We define \(\OKout{\concrenv}{\mpar{\absflag,\abstrenv}}\) as follows: \(\absflag
= \cont\), \(\dom{\concrenv} = \dom{\abstrenv}\), and for any \(\svar{x} \in
\dom{\concrenv}\), we have \(\concrenv\funu{\svar{x}} \in
\concr\funu{\abstrenv\funu{\svar{x}}}\).

\begin{lemma}\label{lem:concr-abstr-consistent}
    The concrete and abstract interpretations are universally consistent.

\end{lemma}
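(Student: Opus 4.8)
The plan is to invoke Lemma~\ref{lem:rc}, which reduces universal consistency of $I_1$ (the concrete interpretation) and $I_2$ (the abstract interpretation) to its four local hypotheses — call them the nil, hook, filter and merge conditions — for the relations $\OKst{}{}$ and $\OKout{}{}$ just fixed. Two observations will be used throughout. First, no concrete or abstract rule for $\emptylist$, a hook, a filter, or a merge ever modifies the triple-set component, so the ``$\tripleset$ and $\abstripleset$ well formed and consistent'' clause of $\OKst{}{}$ is carried along for free; only the environment/flag parts need attention. Second, $\OKst{\mpar{\concrenv,\tripleset}}{\mpar{\absflag,\abstrenv,\abstripleset}}$ forces $\absflag = \cont$, so the abstract rules whose \emph{input} flag is $\halt$ never fire; the real work is to show that the abstract rules that \emph{output} a $\halt$ flag (the second hook rule, the second filter rule, and the $\halt$-merging rule of Figure~\ref{fig:abstract-interpretation}) cannot apply under our hypotheses, since their output would break the $\absflag = \cont$ demanded by $\OKst{}{}$ and $\OKout{}{}$.

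The nil condition is immediate: both interpretations return the incoming environment and flag unchanged. For the hook condition, suppose the concrete interpretation of $\deriv{\fvar[1]}{t}{\fvar[2]}$ uses a triple $\mpar{\concrenv\funu{\fvar[1]}, \concrenv\funu{t}, \cval} \in \tripleset$ and binds $\fvar[2] \mapsto \cval$, and the abstract one fires with some $\astate, \aterm, \aval$ such that $\abstrenv\funu{\fvar[1]} \sqsubseteq \astate$, $\abstrenv\funu{t} \sqsubseteq \aterm$, $\mpar{\astate,\aterm,\aval} \in \abstripleset$. From $\OKst{}{}$ we get $\concrenv\funu{\fvar[1]} \in \concr\funu{\abstrenv\funu{\fvar[1]}}$ and, by Lemma~\ref{lem:concr-terms}, $\concrenv\funu{t} \in \concr\funu{\abstrenv\funu{t}}$; monotonicity of $\concr$ for $\sqsubseteq$ lifts these to $\concrenv\funu{\fvar[1]} \in \concr\funu{\astate}$ and $\concrenv\funu{t} \in \concr\funu{\aterm}$, and consistency of $\tripleset$ and $\abstripleset$ yields $\cval \in \concr\funu{\aval}$. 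In the $\halt$-producing hook rule $\aval = \bot$, forcing $\cval \in \concr\funu{\bot} = \emptyset$ (as $\bot$ denotes the empty set of concrete values); so that rule is excluded, the abstract side uses the last hook rule with $\fvar[2] \mapsto \avalp$, $\aval \sqsubseteq \avalp$, and $\cval \in \concr\funu{\avalp}$ follows by monotonicity. Both domains grow by $\fvar[2]$, the concretion condition is preserved, and the flag stays $\cont$, so $\OKst{}{}$ holds of the outputs. The filter condition is handled the same way: from $\OKst{}{}$ and $\sqsubseteq$-monotonicity of $\concr$, each concrete argument $\concrenv\funu{\svar{x}_i}$ lies in $\concr\funu{\aval_i}$ for the $\aval_i$ used abstractly, and the consistency requirement imposed on abstract filters forbids $\abstrfilter{F}{\myvec{\aval}{n}}{}$ from being $\bot$ (otherwise a concrete output would lie in $\concr\funu{\bot}$), ruling out the second filter rule and giving, for the surviving rule with $\svar{y}_j \mapsto \avalp_j$ and $\abstrfilter{F}{\myvec{\aval}{n}}{} \sqsubseteq \myvecp{\avalp}{m}$, that each concrete output belongs to $\concr\funu{\avalp_j}$; domains grow by the $\svar{y}_j$ on both sides and the flag stays $\cont$.

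The crux is the merge condition. The concrete merge selects a single index $i_0$ with $\outfun_1\funu{i_0} = \concrenv_{i_0}$, $\bvar \subseteq \dom{\concrenv_{i_0}}$, and output $\concrenv' = \concrenv \extsymbol \envrestr{\concrenv_{i_0}}$. Both abstract merge rules satisfy $\dom{\outfun_2} = [1..n]$ — for the $\halt$-merging rule this follows from its premise $\forall i.\,\outfun_2\funu{i} = \mpar{\halt,\abstrenv[i]}$ together with $\dom{\outfun_2} \subseteq [1..n]$ — so $i_0 \in \dom{\outfun_1} \cap \dom{\outfun_2}$ and the hypothesis of the merge condition gives $\OKout{\concrenv_{i_0}}{\outfun_2\funu{i_0}}$. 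Since $\OKout{}{}$ forces the abstract flag to $\cont$, branch $i_0$ returns flag $\cont$ in $\outfun_2$, say $\outfun_2\funu{i_0} = \mpar{\cont,\abstrenv[i_0]}$. This has two consequences: the $\halt$-merging rule, which needs \emph{every} branch to carry $\halt$, cannot apply, so the abstract merge uses the $\cont$-merging rule and outputs flag $\cont$; and $\abstrenv[i_0] \in \mathcal{E}$, so that rule gives $\abstrenvp = \abstrenv \extsymbol \envrestr{\abstrenv[i_0]}$. Finally, $\OKout{\concrenv_{i_0}}{\mpar{\cont,\abstrenv[i_0]}}$ supplies $\dom{\concrenv_{i_0}} = \dom{\abstrenv[i_0]}$ and $\concrenv_{i_0}\funu{x} \in \concr\funu{\abstrenv[i_0]\funu{x}}$ on this domain, in particular for $x \in \bvar$; together with the input $\OKst{}{}$ (which covers the variables already in $\concrenv$ and $\abstrenv$, and where $\bvar$ is fresh for $\dom{\concrenv} = \dom{\abstrenv}$ by well-formedness of the skeleton, so the extensions do not clash), this yields $\dom{\concrenv'} = \dom{\abstrenvp}$ and the concretion condition, hence $\OKst{\concrenv'}{\mpar{\cont,\abstrenvp,\abstripleset}}$. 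Lemma~\ref{lem:rc} then concludes.

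The main obstacle is exactly this merge case: one must line up the single branch picked by the concrete rule with the family $\outfun_2$ produced by the abstract rule and exploit the flag constraint built into $\OKout{}{}$ to \emph{simultaneously} discard the $\halt$-merging rule and certify that the chosen branch's abstract environment lies in $\mathcal{E}$ (which pins down $\abstrenvp$). The remaining points are comparatively routine: checking that the $\halt$-producing hook and filter rules are incompatible with consistency (via $\concr\funu{\bot} = \emptyset$ and the filter-consistency requirement on $\concr$), and the domain bookkeeping, which relies on the freshness of the shared variables $\bvar$ guaranteed by well-formedness.
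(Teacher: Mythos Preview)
Your proposal is correct and follows essentially the same approach as the paper: the paper does not spell out a proof of this lemma in the text (it is deferred to the Coq development), but Lemma~\ref{lem:rc} is set up precisely so that universal consistency reduces to the four local conditions you verify, and your case analysis on the abstract rules---using $\concr\funu{\bot}=\emptyset$, the filter-consistency requirement, and the $\cont$ flag forced by $\OKout{}{}$ to eliminate the $\halt$-producing rules---is the intended argument. One small remark: in the merge case, the freshness of $\bvar$ with respect to $\dom\concrenv$ that you invoke is not actually needed, since both sides extend by $\envrestr{\cdot_{i_0}}$ and any overlap would be overwritten consistently; the argument goes through without it.
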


\begin{lemma}\label{lem:abstr-concr-wf-consistent}
    Let \(\tripleset\) and \(\abstripleset\) well formed and consistent triple
  sets,  then
  \(\cinterpf{\tripleset}\) and \(\ainterpf{\abstripleset}\) are well formed and
  consistent triple sets.

\end{lemma}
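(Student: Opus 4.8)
The plan is to treat the two conclusions separately. Well-formedness of the two output triple sets needs nothing new: $\cinterpf{\tripleset}$ is well formed by Lemma~\ref{lem:concr-wf} and $\ainterpf{\abstripleset}$ is well formed by Lemma~\ref{lem:abstr-wf}, using only that $\tripleset$ and $\abstripleset$ are well formed. So the real content is the consistency of $\cinterpf{\tripleset}$ with $\ainterpf{\abstripleset}$.

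To prove it I would fix arbitrary triples $\mpar{\cstate,\term,\cval} \in \cinterpf{\tripleset}$ and $\mpar{\astate,\aterm,\aval} \in \ainterpf{\abstripleset}$ with $\cstate \in \concr\funu{\astate}$ and $\term \in \concr\funu{\aterm}$, and show $\cval \in \concr\funu{\aval}$. Unfolding the definitions of the two immediate-consequence operators: the concrete triple comes from a skeleton with constructor $c$ and body $\skel$, from $\term = c\myvecp{\term}{n}$ and the environment $\concrenv = \extsigvec{\ivar \mapsto \cstate}{\tvar}{\term}[n]$, from a run $\concrinterp{\skel}[\concrenv']$, with $\cval = \concrenv'\funu{\ovar}$; symmetrically, the abstract triple comes from a skeleton with constructor $c'$ and body $\skel'$, from $\aterm = c'\myvecp{\aterm}{n'}$ and $\abstrenv = \extsigvec{\ivar \mapsto \astate}{\tvar}{\aterm}[n']$, from a run $\abstrinterp[\cont]{\skel'}[\absflag][\abstrenvp]$, with $\aval = \abstrenvp\funu{\ovar}$.

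The first step is to reconcile the two runs. Since $\aterm = c'\myvecp{\aterm}{n'}$, the structural constraint on $\concr$ gives $\concr\funu{\aterm} = \mset[c'\myvecp{\termp}{n'}]{\termp[i] \in \concr\funu{\aterm[i]}}$; combined with $\term = c\myvecp{\term}{n} \in \concr\funu{\aterm}$, this forces $c = c'$, $n = n'$, and $\term[i] \in \concr\funu{\aterm[i]}$ for every $i$. Equal constructors have equal target sorts, so by Requirement~\ref{req:single-skeleton} the two bodies coincide, $\skel = \skel'$. The second step is to apply the universal-consistency lemma for the concrete and abstract interpretations (Lemma~\ref{lem:concr-abstr-consistent}) to this common $\skel$ with $\concrenv$ and $\abstrenv$. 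Its precondition $\OKst{\mpar{\concrenv,\tripleset}}{\mpar{\cont,\abstrenv,\abstripleset}}$ holds: $\concrenv$ and $\abstrenv$ share the domain $\mset{\ivar,\myvec{\tvar}{n}}$, and on it $\concrenv\funu{\svar{x}} \in \concr\funu{\abstrenv\funu{\svar{x}}}$ --- at $\ivar$ this is the hypothesis $\cstate \in \concr\funu{\astate}$, at each $\tvar[i]$ it is the subterm relation above; the abstract flag is $\cont$; and $\tripleset$, $\abstripleset$ are well formed and consistent by assumption. The lemma then yields $\OKout{\concrenv'}{\mpar{\absflag,\abstrenvp}}$, hence $\dom{\concrenv'} = \dom{\abstrenvp}$ and $\concrenv'\funu{\svar{x}} \in \concr\funu{\abstrenvp\funu{\svar{x}}}$ for all such $\svar{x}$; instantiating at $\ovar$ gives $\cval = \concrenv'\funu{\ovar} \in \concr\funu{\abstrenvp\funu{\ovar}} = \concr\funu{\aval}$, which is exactly the consistency condition.

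I expect the only even mildly delicate point to be the constructor matching in the first step, and it is purely a consequence of the structural compatibility imposed on the concretion function. Everything else is bookkeeping on top of Lemmas~\ref{lem:concr-wf}, \ref{lem:abstr-wf} and~\ref{lem:concr-abstr-consistent}; in particular no induction over terms or skeletons is required, since each triple here is produced by a single application of an immediate-consequence operator, not an iterated one.
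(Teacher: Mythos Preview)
Your proposal is correct and is exactly the natural argument: the paper does not spell out a proof of this lemma in the text (it is deferred to the Coq development), but the intended route is precisely to discharge well-formedness via Lemmas~\ref{lem:concr-wf} and~\ref{lem:abstr-wf}, then obtain consistency by matching the constructors through the structural constraint on $\concr$ and Requirement~\ref{req:single-skeleton}, and finally invoking the universal consistency of the concrete and abstract interpretations (Lemma~\ref{lem:concr-abstr-consistent}) on the common skeleton body. The only cosmetic remark is that the phrase ``equal constructors have equal target sorts'' is not what makes the skeleton bodies coincide; that is directly Requirement~\ref{req:single-skeleton} once $c = c'$.
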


We finally show that the abstract semantics is correct relative to the
concrete semantics. In a nutshell, for any triple in the concrete
semantics \(\csem\) (the smallest fixpoint of \(\cinterpf{}\)) and any
triple in the abstract semantics \(\asem\) (the largest fixpoint of
\(\ainterpf{}\)), if the input states and terms are related, then the
output values are related. Formally, we have the following.

\begin{definition}
  An abstract triple set \(\abstripleset\)
  is \emph{correct} if
  it is well-formed and consistent with \(\csem\).
\end{definition}

\begin{theorem}
  \(\asem\) is correct.
\end{theorem}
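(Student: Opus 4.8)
The plan is to combine the two fixpoint characterisations already available: on the concrete side, $\csem = \bigcup_n \cinterpf[n]{\emptyset}$ by Lemma~\ref{lem:csem}, and on the abstract side, the fact that $\asem$ is a fixpoint, hence $\asem = \ainterpf{\asem}$, together with $\asem$ being well formed (Lemma~\ref{lem:asem-wf}). Well-formedness of $\asem$ is thus one half of correctness and is already established, so the only thing left to prove is that $\asem$ is consistent with $\csem$: that for every $\mpar{\cstate,\term,\cval} \in \csem$ and every $\mpar{\astate,\aterm,\aval} \in \asem$ with $\cstate \in \concr\funu{\astate}$ and $\term \in \concr\funu{\aterm}$, we have $\cval \in \concr\funu{\aval}$.

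First I would reduce this to the finite approximants of $\csem$. Consistency with a fixed abstract triple set is a property universally quantified over the concrete triples, so it distributes over unions of concrete triple sets: if each $\cinterpf[n]{\emptyset}$ is consistent with $\asem$, then so is $\bigcup_n \cinterpf[n]{\emptyset} = \csem$. It therefore suffices to show by induction on $n$ that $\cinterpf[n]{\emptyset}$ is well formed and consistent with $\asem$. The base case is immediate, since $\cinterpf[0]{\emptyset} = \emptyset$ is vacuously well formed and vacuously consistent with any abstract triple set. For the inductive step, assume $\cinterpf[n]{\emptyset}$ is well formed — this also follows directly from iterating Lemma~\ref{lem:concr-wf} — and consistent with $\asem$. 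Since $\asem$ is well formed and satisfies $\asem \subseteq \ainterpf{\asem}$ (indeed $\asem = \ainterpf{\asem}$, being a fixpoint), I would apply Lemma~\ref{lem:abstr-concr-wf-consistent} with $\tripleset = \cinterpf[n]{\emptyset}$ and $\abstripleset = \asem$, obtaining that $\cinterpf{\cinterpf[n]{\emptyset}} = \cinterpf[n+1]{\emptyset}$ and $\ainterpf{\asem}$ are well formed and consistent; consistency with $\ainterpf{\asem}$ then transfers to $\asem$ because $\asem \subseteq \ainterpf{\asem}$ and consistency is downward closed in the abstract argument. This closes the induction and hence the theorem.

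The substantive work of the proof has really been discharged earlier: Lemma~\ref{lem:abstr-concr-wf-consistent}, which in turn rests on the universal consistency of the concrete and abstract interpretations (Lemma~\ref{lem:concr-abstr-consistent}) and ultimately on the language-dependent filter lemmas. So at this level there is no real obstacle in the sense of a hard calculation; the only delicate point to get right is the asymmetry between the least fixpoint on the concrete side and the greatest fixpoint on the abstract side. The induction must proceed along the concrete approximants $\cinterpf[n]{\emptyset}$, which exhaust $\csem$ from below, whereas $\asem$ enters only through the post-fixpoint inclusion $\asem \subseteq \ainterpf{\asem}$ — no approximation of $\asem$ from above is used, and attempting to iterate $\ainterpf{}$ downward from the top triple set would not stabilise in general, which is precisely the reason the greatest fixpoint (rather than a finite iterate) is the correct definition of the abstract semantics, as discussed before Definition~\ref{def:asem}.
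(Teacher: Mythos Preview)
Your proposal is correct and follows essentially the same approach as the paper: induction on $n$ showing that $\cinterpf[n]{\emptyset}$ and $\asem$ are well formed and consistent, with the inductive step discharged by Lemma~\ref{lem:abstr-concr-wf-consistent} together with $\asem$ being a fixpoint of $\ainterpf{}$, and the conclusion drawn via Lemma~\ref{lem:csem}. The only minor difference is that you invoke the inclusion $\asem \subseteq \ainterpf{\asem}$ and downward closure of consistency, whereas the paper uses the equality $\ainterpf{\asem} = \asem$ directly---since you yourself note the equality, the extra step is harmless but unnecessary.
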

\begin{proof}
  We prove by induction on \(k\) that \(\cinterpf[k]{\emptyset}\) and \(\asem\)
  are well formed and consistent. The check that \(\asem\) is well formed is
  simply Lemma~\ref{lem:asem-wf}.

  The result is immediate for \(k = 0\) since \(\emptyset\) is well formed, and
  there is nothing else to check.

  Let \(k = n+1\), by induction we have \(\cinterpf[n]{\emptyset}\) and \(\asem\)
  are well formed and consistent. By Lemma~\ref{lem:concr-wf}
  we have \(\cinterpf[n+1]{\emptyset}\) and \(\ainterpf{\asem} = \asem\) are well
  formed and consistent, as required.

  To conclude, we apply Lemma~\ref{lem:csem}.
\end{proof}

Note that in the previous theorem we only use the fact that \(\asem\) is a
fixpoint: it does not have to be the greatest fixpoint.

\subsection{Example: Interval analysis of \While}\label{ex:abstraction:example}

\begin{figure}
\begin{align*}
    \absint{\mathtt{litInt}}
  &  = \lambda (n). \itvl{n}{n}
    &
    \absint{\mathtt{id}}
    & = \lambda \astate.\ \astate
\\
  \absint{\mathtt{intVal}} & = \lambda i. \mpar{i,\bot_{\boolT}}
&
  \absint{\mathtt{boolVal}} & = \lambda b. \mpar{\bot_{\intT},b}\\
  \absint{\mathtt{isInt}}
  & = \lambda (i,b).\ i
&
  \absint{\mathtt{isBool}}
  & = \lambda (i,b).\ b
\\
  \absint{\mathtt{add}}
  & =
  \lambda (\itvl{l_1}{u_1},\itvl{l_2}{u_2}). \itvl{l_1 + l_2}{u_1 + u_2}
   \\
  \absint{\mathtt{eq}}
  &  = \lambda (i_1,i_2).\
  \begin{cases}
    \abs\btrue & \text{if \(i_1 = \itvl{n}{n} = i_2\)} \\
    \abs\bfalse & \text{if \(i_1 \cap i_2 = \emptyset\)} \\
    \top_{\boolT} & \text{otherwise}
  \end{cases}
&
  \absint{\mathtt{neg}}
  & = \lambda b.\
    \begin{cases}
        \abs\bfalse & \text{if \(b = \abs\btrue\)} \\
        \abs\btrue & \text{if \(b = \abs\bfalse\)} \\
        b & \text{otherwise}
    \end{cases}
\\
  \absint{\mathtt{read}}
    & = \lambda \mpar{\astate, \jsid{x}}.\ \eread{\astate}{\jsid{x}}
&
    \absint{\mathtt{write}}
    & = \lambda (\jsid{x},\astate,\abs{v}).\ \ewrite{\jsid{x}}{\astate}{\abs{v}}
\\
    \absint{\mathtt{isTrue}}
    & = \lambda b.\
    \begin{cases}
        \bot & \text{if \(b \in \mset{\bot_{\boolT}, \abs\bfalse}\)} \\
        \unit & \text{otherwise}
    \end{cases}
&
    \absint{\mathtt{isFalse}}
    & = \lambda b.\
    \begin{cases}
        \bot & \text{if \(b \in \mset{\bot_{\boolT}, \abs\btrue}\)} \\
        \unit & \text{otherwise}
    \end{cases}
\end{align*}
    \caption{Abstract interpretation of filters}
    \label{fig:example:abstract:interpretation:filters}
\end{figure}

To give a concrete example of an abstract interpretation we design a
value analysis of the language \While{} in the style of Schmidt's Abstract
Interpretation of Natural Semantics~\cite{Schmidt:95:Natural}. We
have the following flow sorts in the semantic definition
(cf. Figure~\ref{fig:example:typing:filters}):
\(\intT\), \(\boolT\), \(\valT\), and \(\stateT\).

We describe an analysis in which integers are
approximated by intervals, ordered by inclusion.
Writing $[n,m]$ for the interval of integers between $n$ and $m$ (with
the convention that $[n,m] = \emptyset$ if $m < n$), we can define
the abstract domains for each of the flow sort as follows:
\begin{align*}
  \abs{\intT} & = \mpar{\itvl{n}{m} : n \in \mathbf{Z} \cup \mset{-\infty} \land m \in \mathbf{Z} \cup \mset{+\infty}} & 
  \abs{\valT} & = \abs{\intT} \times \abs{\boolT} \\
  \abs{\boolT} & = \{\bot_{\boolT},\abs\ptrue,\abs\pfalse,\top_{\boolT} \} &
  \abs{\stateT} & = \abs{\identS} \rightarrow \abs{\valT}
\end{align*}
Abstract base terms are concrete base terms. We abstract identifiers by themselves, $\abs{\identS} =
\identS$, with only the trivial (reflexive) ordering.
The abstract domain of Booleans is (isomorphic to) the set
of subsets of Booleans, ordered by inclusion.  The abstract domain of
values is the defined as the Cartesian product, ordered
component-wise, of the abstract domain of integers and Booleans, where
each component gives an approximation of the concrete value,
\emph{provided} that the value is of the corresponding sort.  Stores
are mappings from identifiers to values, ordered
pointwise. Undefined identifiers are mapped to the
undefined value $\bot_{\abs{\valT}}$.
The concretisation function  $\gamma$ from abstract domains to
concrete domains formalises the relation between concrete and abstract
values.
\begin{align*}
  \gamma\funu{[n,m]} &= \mset{i \st n \leq i \leq m} &
  \gamma\funu{i,b} &= \gamma\funu{i} \cup \gamma\funu{b}&
  \gamma\funu{\top_{\boolT}} & = \mset{\btrue,\bfalse}\\
  \gamma\funu{\bot_{\boolT}} &= \emptyset&
  \gamma\funu{\abs\ptrue} &= \mset{\btrue}&
  \gamma\funu{\abs\pfalse} &= \mset{\bfalse}\\&&
  \gamma\funu{\astate} &= \mset{\concrenv \st
\forall \jsid{x}.\concrenv\funu{\jsid{x}} \in
\gamma\funu{\astate\funu{\jsid{x}}}}
\end{align*}

The abstraction of the basic filters used in the definition of
\While{} is given in
Figure~\ref{fig:example:abstract:interpretation:filters}. Notice that
the abstract interpretation of the filters \texttt{isInt} and
\texttt{isBool} return an abstract integer and an abstract Boolean,
respectively, instead of a Boolean stating whether their argument can
be an integer and a Boolean. This is correct because an abstract value
that is only an integer has the shape \(\mpar{i,\bot_{\boolT}}\), and
applying \texttt{isBool} to it returns \(\bot_{\boolT}\), indicating
it contains no Boolean.

\begin{lemma}
  The abstract filters are consistent with the concrete filters.
\end{lemma}

\begin{lemma}
  The abstract semantics of \While{} is correct.
\end{lemma}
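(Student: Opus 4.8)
The plan is to derive this lemma as an instance of the generic
correctness theorem (\(\asem\) is correct) of
Section~\ref{subsec:concrete-abstract-consistent}, which already carries
out all the inductive work at the language-independent level. That
theorem is generic: it needs only that \While{}'s abstract
interpretation be a legal instance of the framework and that \While{}'s
abstract filters be consistent with its concrete filters, the latter
fact being what feeds Lemma~\ref{lem:concr-abstr-consistent} through the
filter conditions of Lemma~\ref{lem:rc}. So for \While{} the work splits
into two essentially routine checks plus one appeal to the immediately
preceding lemma.

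First, I would record that \While{} supplies a legal abstract
interpretation. The abstract domains \(\abs{\intT}\) (intervals ordered
by inclusion), \(\abs{\boolT}\) (the four-element lattice of Booleans),
\(\abs{\valT}\) and \(\abs{\stateT}\) (a product and a function space,
ordered componentwise and pointwise), together with the flat orders on
abstract base terms and on \(\abs{\identS}\), each extended with a
bottom element, are all partial orders with a least element \(\bot\), as
the framework requires. The concretisation \(\concr\) of
Section~\ref{ex:abstraction:example} is defined on abstract values,
terms, and states, and one checks that it satisfies the constraints of
Section~\ref{subsec:concrete-abstract-consistent}: monotonicity with
respect to \(\sqsubseteq\) (immediate from the inclusion, componentwise,
and pointwise orders just listed), sort preservation, and
\(\concr\funu{c\myvecp{\aterm}{n}} = \mset[c\myvecp{\term}{n}]{\term[i]
\in \concr\funu{\aterm[i]}}\) together with \(\concr\funu{\term} =
\mset{\term}\) on concrete base terms --- the last two being trivial
here, since \While{}'s abstract base terms coincide with its concrete
base terms, so on terms \(\concr\) is the singleton map.

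Second, the only remaining hypothesis --- consistency of the abstract
filters with the concrete ones --- is exactly the content of the lemma
stated just before this one. With that in hand,
Lemma~\ref{lem:concr-abstr-consistent} applies verbatim to the \While{}
instantiation, so the concrete and abstract interpretations of \While{}
are universally consistent, and the generic correctness theorem then
yields that \(\asem\) for \While{} is well formed
(Lemma~\ref{lem:asem-wf}) and consistent with \(\csem\), i.e., correct.

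The genuine effort therefore does not lie in this lemma, which is
essentially a corollary, but is pushed into the preceding
filter-consistency lemma, where one must argue, filter by filter, that
whenever the concrete (partial) filter relates inputs \(\myvec{\cval}{n}\)
to outputs \(\myvecp{\cvalp}{m}\) and each \(\cval[i] \in
\concr\funu{\aval[i]}\), the abstract filter applied to \(\myvec{\aval}{n}\)
returns a non-\(\bot\) tuple whose components contain the \(\cvalp[j]\)
under \(\concr\). The precise filters (\texttt{litInt}, \texttt{intVal},
\texttt{boolVal}, \texttt{id}, \texttt{read}, \texttt{write}) are
immediate; the delicate cases are \texttt{add} on intervals (one needs
\(l_1 + l_2 \le i_1 + i_2 \le u_1 + u_2\)), \texttt{eq} (the three-way
case split must cover every concrete equality and disequality outcome),
\texttt{isInt} and \texttt{isBool} (whose abstract versions project a
component of the pair \((i,b)\) rather than returning a Boolean, so one
must check that this projection over-approximates the concrete partial
match), and \texttt{isTrue} and \texttt{isFalse} (where the \(\bot\)
versus \(\unit\) result must track definedness of the concrete
predicate). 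Each of these is only a few lines; the single point that
needs care is the definedness bookkeeping, namely that the concrete
filter being defined forces the abstract one to be non-\(\bot\).
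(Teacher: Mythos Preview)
Your proposal is correct and matches the paper's approach: the paper states this lemma without proof, leaving it as the evident instance of the generic correctness theorem of Section~\ref{subsec:concrete-abstract-consistent} once the preceding filter-consistency lemma and the framework requirements on the abstract domains and concretisation are checked. Your decomposition into (i) verifying that \While{}'s abstract domains and \(\concr\) satisfy the hypotheses of Section~\ref{subsec:concrete-abstract-consistent} and (ii) invoking the preceding lemma on filters is exactly what the paper intends.
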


\section{Deriving Proof Techniques from an Abstract Semantics}\label{SEC:PROOF-TECHNIQUES}

This section presents several proof techniques derived from an abstract
semantics and instantiated in our \While{} language.

\subsection{Abstract rules for analysing \While}\label{subsec:example:abstract:rules:as:rules}

Given the instantiation of the filters used in the abstract semantic
of \While{}, we can now derive an abstract interpretation of
\While{} programs. The result of an abstract interpretation of a program
is a set of abstract triples that correctly describes the program
behaviour. We shall present the analysis through a set of
syntax-directed inference
rules for inferring such triples. For a given term
$c\myvecp{\term}{n}$, we take the corresponding skeleton
\(\Rule{Name}{c\myvecp{\tvar}{n}}{\skel}\) in the semantics and
apply the general abstract interpretation to the skeleton body
$\skel$. This results in a series of conditions for a triple to be
valid that will form the hypotheses of the inference rules.
%

\paragraph{Rule for addition}
As a first example,
we derive a rule for analysing arithmetic expressions such as
\(\term[1]{} + \term[2]{}\).
A triple $\mpar{\astate,\term[1]+\term[2],\aval}$ is valid if it belongs to a
fixpoint $\abstripleset$ of $\ainterpf{}$.
Unfolding definitions, 
\begin{align*}
& \mpar{\astate,\term[1]+\term[2],\aval} \in \abstripleset = \ainterpf{\abstripleset} \\
\Leftrightarrow\quad
    &  \abstrinterp[\cont][\abstrenv[1],\abstripleset]{
        \begin{lgathered}
        \deriv{\ivar{}}{\tvar[1]{}}{\fvar[1]{}};
        \filter{\texttt{isInt}}{\fvar[1]{}}{\fvar[1']};
        \deriv{\ivar{}}{\tvar[2]{}}{\fvar[2]{}};\\
        \filter{\texttt{isInt}}{\fvar[2]{}}{\fvar[2']{}};
        \filter{\texttt{add}}{\fvar[1']{}, \fvar[2']{}}{\fvar[3]{}};
        \filter{\texttt{intVal}}{\fvar[3]}{\ovar}
        \end{lgathered}
    }[\absflag][\abstrenv[o]] \\
&\quad \land \quad
    \abstrenv[1] = \ivar \mapsto \astate
                \extsymbol \tvar[1] \mapsto \term[1]
                \extsymbol \tvar[2] \mapsto \term[2]
\quad \land \quad
    \aval = \abstrenv[o]\funu{\ovar}
\end{align*}

For simplicity, we here choose to ignore weakenings and the
non-\(\cont\)-case for the flag \(\absflag\).  In other words, we are
ignoring the possibility of short-cutting the abstract interpretation
of the rule if a \(\bot\) is found during the abstract execution.
\begin{align*}
& \mpar{\astate,\term[1]+\term[2],\aval} \in \ainterpf{\abstripleset} \\
\Leftarrow\quad&
                 \mpar{\abstrenv[1]\funu{\ivar}, \abstrenv[1]\funu{\tvar[1]}, \abs{v}_1} \in \abstripleset \\
&\quad \land \quad
\abstrinterp[\cont][\abstrenv[2],\abstripleset]{
        \begin{lgathered}
        \filter{\texttt{isInt}}{\fvar[1]{}}{\fvar[1']};
        \deriv{\ivar{}}{\tvar[2]{}}{\fvar[2]{}};
        \filter{\texttt{isInt}}{\fvar[2]{}}{\fvar[2']{}};\\
        \filter{\texttt{add}}{\fvar[1']{}, \fvar[2']{}}{\fvar[3]{}};
        \filter{\texttt{intVal}}{\fvar[3]}{\ovar}
        \end{lgathered}
        }[\cont][\abstrenv[o]] \\
&\quad \land \quad
    \abstrenv[1] = \ivar \mapsto \astate
                \extsymbol \tvar[1] \mapsto \term[1]
                \extsymbol \tvar[2] \mapsto \term[2] 
\quad \land \quad
    \abstrenv[2] = \abstrenv[1]
                \extsymbol \fvar[1]{} \mapsto \abs{v}_{1}
\quad \land \quad
    \aval = \abstrenv[o]\funu{\ovar}
\end{align*}

Interpreting the filter \texttt{isInt} makes us consider the integer projection
of the abstract value \(\abs{v}_1\).
We can thus rewrite the implication as follows.
\begin{align*}
& \mpar{\astate,\term[1]+\term[2],\aval} \in \ainterpf{\abstripleset} \\
    \Leftarrow\quad&
                 \mpar{\astate, \term[1], \abs{v}_1} \in \abstripleset
    \quad \land \quad \abs{v}_{1} = \mpar{\abs{i}_{1}, \abs{b}_{1}} \\
 &\quad \land \quad
 \abstrinterp[\cont][\abstrenv[2],\abstripleset]{
   \begin{lgathered}
     \deriv{\ivar{}}{\tvar[2]{}}{\fvar[2]{}};
     \filter{\texttt{isInt}}{\fvar[2]{}}{\fvar[2']{}};\\
     \filter{\texttt{add}}{\fvar[1']{}, \fvar[2']{}}{\fvar[3]{}};
     \filter{\texttt{intVal}}{\fvar[3]}{\ovar}
   \end{lgathered}
         }[\cont][\abstrenv[o]] \\
 &\quad \land \quad
     \abstrenv[2] = \ivar \mapsto \astate
                 \extsymbol \tvar[1] \mapsto \term[1]
                 \extsymbol \tvar[2] \mapsto \term[2]
                 \extsymbol \fvar[1]{} \mapsto \abs{v}_{1}
                 \extsymbol \fvar[1']{} \mapsto \abs{i}_{1}
   \quad \land \quad
       \aval = \abstrenv[o]\funu{\ovar}
\end{align*}

We can continue unfolding the abstract interpretation of the rule.
We eventually reach the following implication:
\begin{align*}
& \mpar{\astate,\term[1]+\term[2],\aval} \in \ainterpf{\abstripleset} \\
\Leftarrow\quad&
 \mpar{\astate, \term[1], \abs{v}_1} \in \abstripleset ~ \land ~
    \mpar{\astate, \term[2], \abs{v}_2} \in \abstripleset
    ~ \land ~ \abs{v}_{1} = \mpar{\abs{i}_{1}, \abs{b}_{1}}
    ~ \land ~ \abs{v}_{2} = \mpar{\abs{i}_{2}, \abs{b}_{2}}
 ~ \land ~
 \aval = \abstrenv[o]\funu{\ovar}
\\
&\quad \land \quad
    \abstrinterp[\cont][\abstrenv[4],\abstripleset]{
     \filter{\texttt{add}}{\fvar[1']{}, \fvar[2']{}}{\fvar[3]{}};
     \filter{\texttt{intVal}}{\fvar[3]}{\ovar}
    }[\cont][\abstrenv[o]]
\\
&\quad \land \quad
    \abstrenv[4] = \ivar \mapsto \astate
                \extsymbol \tvar[1] \mapsto \term[1]
                \extsymbol \tvar[2] \mapsto \term[2]
                \extsymbol \fvar[1]{} \mapsto \abs{v}_{1}
                \extsymbol \fvar[1']{} \mapsto \abs{i}_{1}
                \extsymbol \fvar[2]{} \mapsto \abs{v}_{2}
                \extsymbol \fvar[2']{} \mapsto \abs{i}_{2}
\\
\Leftarrow\quad
&
    \mpar{\astate , \term[1]{}, \mpar{\abs{i}_1, \abs{b}_1}} \in \abstripleset
 ~ \land ~
    \mpar{\astate , \term[2]{}, \mpar{\abs{i}_2, \abs{b}_2}} \in \abstripleset
  ~ \land ~ \abstrfilter{add}{\abs{i}_{1}, \abs{i}_{2}}{\abs{i}} ~ \land
                  ~ \abs{v} = \mpar{\abs{i},\bot_{\boolT}}
\end{align*}

By writing $\astate \vdash t : \abs{v}$ for $\mpar{\astate, t , \abs{v}} \in \abstripleset$
we get the familiar rule below.
\begin{mathpar}
        \inferrule{
         \astate \vdash \term[1]{} : \mpar{\abs{i}_1, \abs{b}_1} \\
         \astate \vdash \term[2]{} : \mpar{\abs{i}_2, \abs{b}_2} \\
         \abstrfilter{add}{\abs{i}_{1},\abs{i}_{2}}{\abs{i}}}
         {\astate \vdash \term[1]{} + \term[2]{} : \mpar{\abs{i},\bot_{\boolT}}}
\end{mathpar}

\paragraph{Rule for conditionals}
In the case of the addition, the structure of the skeleton was linear.
We have seen that we ignored some branches
(the ones triggering \(\bot\)), but these were not very important.
We now show the example of conditionals,
where branches are more visible.
A triple $\mpar{\astate,\sif{\term[1]}{\term[2]}{\term[3]},\astate[o]}$
is valid if it belongs to a fixpoint $\abstripleset$ of $\ainterpf{}$.
Unfolding definitions,
and passing through the linear part of the skeleton, we get:
\begin{align*}
    & \mpar{\astate,\sif{\term[1]}{\term[2]}{\term[3]},\astate[o]} \in
  \ainterpf{\abstripleset} \\
\Leftrightarrow \quad& \\
\mathrlap{\hspace{-0.6cm}\abstrinterp[\cont][\abstrenv[1],\abstripleset]{
    \begin{lgathered}
        \deriv{\ivar{}}{\tvar[1]}{\fvar[1]};
        \filter{\texttt{isBool}}{\fvar[1]}{\fvar[1']};
        \branchesV[\mset{\ovar{}}]{\begin{aligned}
          &\filter{\texttt{isTrue}}{\fvar[1']}{};
          \deriv{\ivar{}}{\tvar[2]}{\ovar{}}\\
          &\filter{\texttt{isFalse}}{\fvar[1']}{};
          \deriv{\ivar{}}{\tvar[3]}{\ovar{}}
        \end{aligned}}
    \end{lgathered}
            }[\absflag][\abstrenv[o]]} & \hspace{\textwidth} \\
&\quad \land \quad
    \abstrenv[1] = \ivar \mapsto \astate
                \extsymbol \tvar[1] \mapsto \term[1]
                \extsymbol \tvar[2] \mapsto \term[2]
                \extsymbol \tvar[3] \mapsto \term[3]
\quad \land \quad
    \astate[o] = \abstrenv[o]\funu{\ovar} \\
\Leftarrow \quad &
    \mpar{\astate, \term[1], \abs{v}_1} \in \abstripleset
    \quad \land \quad \abs{v}_{1} = \mpar{\abs{i}_{1}, \abs{b}_{1}} \quad \land \\
&  \abstrinterp[\cont][\abstrenv[2],\abstripleset]{
    \begin{lgathered}
        \branchesV[\mset{\ovar{}}]{\begin{aligned}
          &\filter{\texttt{isTrue}}{\fvar[1']}{};
          \deriv{\ivar{}}{\tvar[2]}{\ovar{}}\\
          &\filter{\texttt{isFalse}}{\fvar[1']}{};
          \deriv{\ivar{}}{\tvar[3]}{\ovar{}}
        \end{aligned}}
    \end{lgathered}
    }[\absflag][\abstrenv[o]] \\
&\quad \land \quad
    \abstrenv[1] = \ivar \mapsto \astate
                \extsymbol \tvar[1] \mapsto \term[1]
                \extsymbol \tvar[2] \mapsto \term[2]
                \extsymbol \tvar[3] \mapsto \term[3]
                \extsymbol \fvar[1] \mapsto \abs{v}_{1}
                \extsymbol \fvar[1'] \mapsto \abs{b}_{1}
\end{align*}

From this stage, we continue the analysis
in each of the two subbranches to build a map \(\outfun\)
representing the outputs of both branches.
We consider two cases,
depending on the value of \(\abs{b}_{1}\).

First, if \(\abs{b}_{1}\) is \(\top_{\boolT}\).
We then have both \texttt{isTrue} and \texttt{isFalse}
holding on \(\abs{b}_{1}\).
By unfolding definitions and using weakening for the results of the two hooks, 
we get the following implication:
\begin{align*}
    & \mpar{\astate,\sif{\term[1]}{\term[2]}{\term[3]},\astate[o]} \in \ainterpf{\abstripleset} \\
\Leftarrow\quad&
    \mpar{\astate, \term[1], \abs{v}_1} \in \abstripleset
    \quad \land \quad \mpar{\astate, \term[2], \astate[2]} \in \abstripleset
    \quad \land \quad \mpar{\astate, \term[3], \astate[3]} \in \abstripleset \\&
    \quad \land \quad \abs{v}_{1} = \mpar{\abs{i}_{1}, \top_{\boolT}}
    \quad \land \quad \astate[2] \sqsubseteq \astate[o]
    \quad \land \quad \astate[3] \sqsubseteq \astate[o]
\end{align*}
Using the same notations as above
we can simplify this rule as below.
\begin{mathpar}
    \inferrule{
         \astate \vdash \term[1]{} : \mpar{\abs{i}_1, \top_{\boolT}} \\
         \astate \vdash \term[2]{} : \astate[2] \\
         \astate[2] \sqsubseteq \astate[o]\\
         \astate \vdash \term[3]{} : \astate[3]\\
         \astate[3] \sqsubseteq \astate[o]}
     {\astate \vdash \sif{\term[1]}{\term[2]}{\term[3]} : \astate[o]}
\end{mathpar}

This rule is imprecise
(we assume that we get \(\top_{\boolT}\) when evaluating the conditional's expression),
but shows how our equivalent of concrete rules
are merged in the abstract interpretation.
We now consider a more precise version of the rule,
for the case when the conditional expression evaluates to \(\abs{\btrue}\).
The other cases \(\abs\bfalse\) and \(\bot_{\boolT}\) are similar.
In this case, the \texttt{isTrue} filter holds,
but not \texttt{isFalse}:
we can derive the judgement below when \(\eread{\abstrenv}{\fvar[1']} = \abs{\btrue}\).
\[
    \abstrinterp[\cont][\abstrenv,\abstripleset]{
       \filter{\texttt{isFalse}}{\fvar[1']}{};
       \deriv{\ivar{}}{\tvar[3]}{\ovar{}}
    }[\halt][\abstrenv[o]]
\]
Following the rules for abstract interpretation (see Figure~\ref{fig:abstract-interpretation}),
this removes the second branch from the \(\mathcal{E}\) set,
only leaving constraints from the first branch.
We thus get the following implication, where we no longer need the weakening for
the result of the hook.
\begin{align*}
    & \mpar{\astate,\sif{\term[1]}{\term[2]}{\term[3]},\astate[o]} \in \ainterpf{\abstripleset} \\
\Leftarrow\quad&
    \mpar{\astate, \term[1], \abs{v}_1} \in \abstripleset
    \quad \land \quad \mpar{\astate, \term[2], \astate[o]} \in \abstripleset
    \quad \land \quad \abs{v}_{1} = \mpar{\abs{i}_{1}, \abs{\btrue}}
\end{align*}
We can rewrite this implication as above into the rule
\begin{mathpar}
    \inferrule{
        \astate \vdash \term[1]{} : \mpar{\abs{i}_1, \abs{\btrue}} \\
         \astate \vdash \term[2]{} : \astate[2]}
     {\astate \vdash \sif{\term[1]}{\term[2]}{\term[3]} :
       \astate[2]}.
\end{mathpar}

\paragraph{Rule for loops}
The skeleton for loops is close to the one for conditionals.
We can similarly derive abstract rules such as the ones below.
\begin{mathpar}
    \inferrule{
         \astate \vdash \term[1]{} : \mpar{\abs{i}_1, \top_{\boolT}} \\
         \astate \vdash \term[2]{} : \astate[2] \\
         \astate[2] \vdash \while{\term[1]}{\term[2]} : \astate[3] \\
         \astate[3] \sqsubseteq \astate}
     {\astate \vdash \while{\term[1]}{\term[2]} : \astate}
     \and
    \inferrule{
        \astate \vdash \term[1]{} : \mpar{\abs{i}_1, \abs{\btrue}} \\
         \astate \vdash \term[2]{} : \astate[2] \\
         \astate[2] \vdash \while{\term[1]}{\term[2]} : \astate[3]}
     {\astate \vdash \while{\term[1]}{\term[2]} : \astate[3]}
     \and
    \inferrule{
        \astate \vdash \term[1]{} : \mpar{\abs{i}_1, \abs{\bfalse}}}
     {\astate \vdash \while{\term[1]}{\term[2]} : \astate}
\end{mathpar}

We can also use the fact that any fixpoint of \(\ainterpf{}\)
is considered valid.
The following implication
(which we can prove in a way similar to above) is
valid for any well-formed set \(\abstripleset{}\).
\begin{align*}
    & \mpar{\astate,\while{\term[1]}{\term[2]},\astate[o]} \in \ainterpf{\abstripleset} \\
\Leftarrow\quad&
    \mpar{\astate, \term[1], \abs{v}_1} \in \abstripleset
    \quad \land \quad \abs{v}_{1} \sqsubseteq \mpar{\abs{i}_{1}, \top_{\boolT}} \\&
    \quad \land \quad \mpar{\astate, \term[2], \astate[2]} \in \abstripleset
    \quad \land \quad \mpar{\astate[2],\while{\term[1]}{\term[2]},\astate[3]} \in \abstripleset
    \quad \land \quad \astate[3] \sqsubseteq \astate[o]
    \quad \land \quad \astate \sqsubseteq \astate[o]
\end{align*}
In particular,
as the condition
\(\abs{v}_{1} \sqsubseteq \mpar{\abs{i}_{1}, \top_{\boolT}}\)
is vacuously true, we can weaken this implication as follows
(forcing all intermediate states to be the same).
\[
    \mpar{\astate,\while{\term[1]}{\term[2]},\astate} \in \ainterpf{\abstripleset}
\quad\Leftarrow\quad
    \mpar{\astate, \term[1], \abs{v}_1} \in \abstripleset
    \land \mpar{\astate, \term[2], \astate} \in \abstripleset
    \land \mpar{\astate,\while{\term[1]}{\term[2]},\astate} \in \abstripleset
\]
This implication means that given any \(\abstripleset[0]\),
such that \(\abstripleset[0] \subseteq \ainterpf{\abstripleset[0]}\),
that associates \(\term[1]\) in the state \(\astate\)
with a result (that is that there exists  \(\aval\) such that
\(\mpar{\astate, \term[1], \aval} \in \abstripleset[0]\)),
and such that \(\mpar{\astate, \term[2], \astate} \in \abstripleset[0]\),
we can extend \(\abstripleset[0]\) into
\(\abstripleset[1] = \abstripleset[0] \cup \mset{\mpar{\astate,\while{\term[1]}{\term[2]},\astate}}\).
By monotonicity of \(\ainterpf{}\), we get \(\abstripleset[0] \subseteq \ainterpf{\abstripleset[1]}\),
and by the above implication,
we get \(\mpar{\astate,\while{\term[1]}{\term[2]},\astate} \in \ainterpf{\abstripleset[1]}\).
Hence, \(\abstripleset[1] \subseteq \ainterpf{\abstripleset[1]}\), and every
triple in \(\abstripleset[1]\) is correct in relation to \(\csem\).
In other words, the following familiar rule is admissible.
\begin{mathpar}
    \inferrule{
        \astate \vdash \term[1]{} : \aval \\
         \astate \vdash \term[2]{} : \astate}
     {\astate \vdash \while{\term[1]}{\term[2]} : \astate}
\end{mathpar}

\subsection{State Splitting}
As another example of the use of the abstract interpretation, we show
how to extend the abstract semantics to obtain more precise
results. Our motivating example is \(t\): $\while{\neg\mpar{\jsid{x}
    = 0}}{\asnescape{\jsid{x}}{\jsid{x} - 1}}$ for which we want to show that the
triple \(\mpar{\jsid{x} \mapsto \itvl{0}{\infty}, t, \jsid{x} \mapsto 0}\) is
correct (we simplify notation and write \(n\) for
\(\mpar{\itvl{n}{n},\abot{\boolT}}\), and \(\itvl{n}{m}\) for
\(\mpar{\itvl{n}{m},\abot{\boolT}}\)). Proving this is not possible as
such. To see this, observe that in the
rule for \rulen{While}, the same state is used to run the
expression and the statement, hence the return value of the expression
is not reflected in the state (it may only prevent a branch from being
taken). Communicating information from an expression back to a state
is a non-trivial problem which 
depends on the language considered, but we can help the
abstract interpretation by splitting the state in three parts:
\(\mset{\mpar{\jsid{x} \mapsto 0, t, \jsid{x} \mapsto 0}, \mpar{\jsid{x} \mapsto
    \itvl{1}{\infty}, t, \jsid{x} \mapsto 0}, \mpar{\jsid{x} \mapsto
    \itvl{0}{\infty}, t, \jsid{x} \mapsto 0}}\).
Let \(\abstripleset\) be the set of triples (listed below) obtained from adding triples 
for every sub-expression of \(t\).
We can show that \(\mset{\mpar{\jsid{x}
    \mapsto 0, t, \jsid{x} \mapsto 
    0}, \mpar{\jsid{x} \mapsto \itvl{1}{\infty}, t, \jsid{x} \mapsto 0}} \subset
\ainterpf{\abstripleset}\) (the second triple uses \(\mpar{\jsid{x} \mapsto
  \itvl{0}{\infty}, t, \jsid{x} \mapsto 0}\) to evaluate the recursive while
term). However there is still one of the three triples that cannot be
derived, \emph{viz.}, \(\mpar{\jsid{x} \mapsto
  \itvl{0}{\infty}, t, \jsid{x} \mapsto 0} \in \ainterpf{\abstripleset}\).

To derive this third triple, we introduce a proof technique called \emph{state splitting} to
obtain a more precise abstract semantics. The core idea of the technique is that
if the state \(\astate\) of a triple \(\mpar{\astate,\aterm,\aval}\) is covered
by the states of some triples
\(\mpar{\astate[1],\aterm,\aval}..\mpar{\astate[n],\aterm,\aval}\), in the sense
that \(\concr\funu{\astate} \subseteq \concr\funu{\astate[1]} \cup .. \cup
\concr\funu{\astate[n]}\), then we may use \(\mpar{\astate,\aterm,\aval}\) in
the input triple set \(\abstripleset\) of \(\ainterpf{\abstripleset}\)
\emph{without} having to show that \(\mpar{\astate,\aterm,\aval}\) is in the resulting
triple set \(\ainterpf{\abstripleset}\) and still remain correct.

Formally, we first define a function \(\stSplit{}\) from triple sets to triple
sets that adds such triples.

\begin{definition}
  Let \(\abstripleset\) an abstract triple set. We define the state splitting
  function \(\stSplit{\abstripleset}\) as:
    \[\stSplit{\abstripleset} =
    \mset[\mpar{\astate,\aterm,\aval}]{
    \begin{gathered}
      \mset{\mpar{\astate[1],\aterm,\aval}..\mpar{\astate[n],\aterm,\aval}}
      \subseteq \abstripleset \text{ with } n \geq 1\\
      \forall i \in [1..n]. \Sort[]{\astate} = \Sort[]{\astate[i]}\\
      \concr\funu{\astate} \subseteq \concr\funu{\astate[1]} \cup .. \cup
      \concr\funu{\astate[n]}
    \end{gathered}
    }\]
\end{definition}

\begin{lemma}
  For any \(\abstripleset\), \(\abstripleset\subseteq\stSplit{\abstripleset}\),
  and \(\stSplit{}\) is monotonic.
\end{lemma}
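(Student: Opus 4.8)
The plan is to prove both parts by directly unfolding the definition of \(\stSplit{}\), using a singleton witness family for the inclusion and inheriting the witness family for monotonicity. There is essentially no obstacle here; the only point worth noticing is that the definition of \(\stSplit{}\) permits \(n \geq 1\) (rather than demanding \(n \geq 2\)), which is exactly what makes the inclusion \(\abstripleset \subseteq \stSplit{\abstripleset}\) hold.

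\textbf{Inclusion.}
Let \(\mpar{\astate,\aterm,\aval} \in \abstripleset\). I would exhibit it as a member of \(\stSplit{\abstripleset}\) by taking \(n = 1\) and choosing the covering family to be the singleton \(\mset{\mpar{\astate,\aterm,\aval}}\) itself. The three side conditions of the definition are then immediate: the family is a subset of \(\abstripleset\) by assumption; the sort condition is the reflexivity \(\Sort[]{\astate} = \Sort[]{\astate}\); and the covering condition is the reflexivity \(\concr\funu{\astate} \subseteq \concr\funu{\astate}\) of \(\subseteq\). Hence \(\mpar{\astate,\aterm,\aval} \in \stSplit{\abstripleset}\).

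\textbf{Monotonicity.}
Assume \(\abstripleset \subseteq \abstripleset'\) and let \(\mpar{\astate,\aterm,\aval} \in \stSplit{\abstripleset}\). By definition there is a nonempty family \(\mset{\mpar{\astate[1],\aterm,\aval}..\mpar{\astate[n],\aterm,\aval}} \subseteq \abstripleset\) such that \(\Sort[]{\astate} = \Sort[]{\astate[i]}\) for all \(i\) and \(\concr\funu{\astate} \subseteq \concr\funu{\astate[1]} \cup .. \cup \concr\funu{\astate[n]}\). Since \(\abstripleset \subseteq \abstripleset'\), the very same family is included in \(\abstripleset'\), and the sort and covering conditions refer only to the \(\astate[i]\) and not to the ambient triple set, so they still hold verbatim. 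Therefore \(\mpar{\astate,\aterm,\aval} \in \stSplit{\abstripleset'}\), which establishes \(\stSplit{\abstripleset} \subseteq \stSplit{\abstripleset'}\).
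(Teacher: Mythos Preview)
Your proof is correct. The paper does not give an explicit proof of this lemma, treating it as evident from the definition; your argument is precisely the natural unfolding one would expect, with the key observation being that the side condition \(n \geq 1\) (rather than \(n \geq 2\)) is what permits the singleton witness establishing \(\abstripleset \subseteq \stSplit{\abstripleset}\).
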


\begin{lemma}\label{lem:split-wf}
  Let \(\abstripleset\) a well-formed triple set, then
  \(\stSplit{\abstripleset}\) is well formed.
\end{lemma}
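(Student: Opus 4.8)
The plan is to unfold the definition of $\stSplit{}$ and reduce the claim to the well-formedness of a single triple already present in $\abstripleset$. Let $\mpar{\astate,\aterm,\aval} \in \stSplit{\abstripleset}$. By the definition of $\stSplit{}$ there exist triples $\mpar{\astate[1],\aterm,\aval},\dots,\mpar{\astate[n],\aterm,\aval} \in \abstripleset$ with $n \geq 1$ such that $\Sort[]{\astate} = \Sort[]{\astate[i]}$ for every $i \in [1..n]$ and $\concr\funu{\astate} \subseteq \concr\funu{\astate[1]} \cup \cdots \cup \concr\funu{\astate[n]}$. Well-formedness is a purely sorting condition, so the concretisation inclusion plays no role and I would simply discard it; the only clauses of the $\stSplit{}$ definition that matter are $n \geq 1$ and the sort-equality $\Sort[]{\astate} = \Sort[]{\astate[i]}$.

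Next I would pick the witness $\mpar{\astate[1],\aterm,\aval} \in \abstripleset$, which exists since $n \geq 1$. Because $\abstripleset$ is well formed, this triple is well formed: there is a constructor $c$ and abstract subterms with $\aterm = c\myvecp{\aterm}{k}$, and a program sort $\sort$ with $\Sort[]{\aterm} = \sort$, $\astate[1] : \typin{\sort}$, and $\aval : \typout{\sort}$. The term $\aterm$ and the value $\aval$ of the new triple $\mpar{\astate,\aterm,\aval}$ are literally those of the witness, so the shape condition $\aterm = c\myvecp{\aterm}{k}$, the equality $\Sort[]{\aterm} = \sort$, and the condition $\aval : \typout{\sort}$ transfer verbatim. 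It then remains only to check the state-sort condition $\astate : \typin{\sort}$: from well-formedness of the witness we have $\Sort[]{\astate[1]} = \typin{\sort}$, and the $\stSplit{}$ clause gives $\Sort[]{\astate} = \Sort[]{\astate[1]}$, hence $\Sort[]{\astate} = \typin{\sort}$, i.e. $\astate : \typin{\sort}$. Thus $\mpar{\astate,\aterm,\aval}$ is well formed, and since it was an arbitrary element, $\stSplit{\abstripleset}$ is well formed.

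There is no substantial obstacle here; the proof is a routine transfer of conditions along a single witness triple. The one point that deserves attention is recognising that the sort-equality clause $\Sort[]{\astate} = \Sort[]{\astate[i]}$ built into the definition of $\stSplit{}$ is precisely what makes the state-sort condition propagate: without it, an abstract state $\astate$ that is merely covered in concretisation by the $\astate[i]$ need not lie in the same abstract domain, and the lemma would fail. Once this is noted, everything else is immediate from the definitions.
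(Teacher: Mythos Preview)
Your proof is correct and follows essentially the same approach as the paper's own proof: pick a witness triple \(\mpar{\astate[1],\aterm,\aval}\in\abstripleset\) (available since \(n\geq 1\)), inherit the sorting conditions on \(\aterm\) and \(\aval\) directly, and use the clause \(\Sort[]{\astate}=\Sort[]{\astate[1]}\) to obtain the state-sort condition. The paper's proof is simply a terser rendition of the same argument.
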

\begin{proof}
  Let \(\mpar{\astate,\aterm,\aval} \in \stSplit{\abstripleset}\), then there is
  some \(\astate[1],\aterm,\aval \in \abstripleset\) such that
  \(\Sort[]{\astate} = \Sort[]{\astate[1]} = \typin{\aterm}\) and
  \(\Sort[]{\aval} = \typout{\aterm}\).
\end{proof}

We next show that the functional \(\sinterpf{\cdot}\) has the same consistency
property as \(\ainterpf{\cdot}\).

\begin{lemma}\label{lem:split-concr-wf-consistent}
    Let \(\tripleset\) and \(\abstripleset\) be well formed and consistent triple
  sets, then
  \(\cinterpf{\tripleset}\) and \(\sinterpf{\abstripleset}\) are well formed and
  consistent triple sets.

\end{lemma}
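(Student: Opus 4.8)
The plan is to unfold $\sinterpf{\abstripleset}$ according to its definition as $\stSplit{\ainterpf{\stSplit{\abstripleset}}}$ and then chain three facts: that $\stSplit{}$ preserves well-formedness and preserves consistency with a fixed concrete triple set; that the pair $(\cinterpf{},\ainterpf{})$ preserves both (this is exactly Lemma~\ref{lem:abstr-concr-wf-consistent}); and again that $\stSplit{}$ preserves both. Well-formedness of $\cinterpf{\tripleset}$ is Lemma~\ref{lem:concr-wf}, and well-formedness of $\sinterpf{\abstripleset}$ follows by applying, in turn, Lemma~\ref{lem:split-wf}, Lemma~\ref{lem:abstr-wf}, and Lemma~\ref{lem:split-wf} starting from the well-formed set $\abstripleset$. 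So the only new content is the consistency claim, and for that the only new ingredient is an auxiliary observation about $\stSplit{}$.

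The auxiliary observation is: for any concrete triple set $\tripleset'$ and abstract triple set $\abstripleset'$, if $\tripleset'$ and $\abstripleset'$ are consistent, then $\tripleset'$ and $\stSplit{\abstripleset'}$ are consistent. To prove it I would take $\mpar{\cstate,\term,\cval} \in \tripleset'$ and $\mpar{\astate,\aterm,\aval} \in \stSplit{\abstripleset'}$ with $\cstate \in \concr\funu{\astate}$ and $\term \in \concr\funu{\aterm}$; by definition of $\stSplit{}$ there exist $\mpar{\astate[1],\aterm,\aval},\dots,\mpar{\astate[n],\aterm,\aval} \in \abstripleset'$ with $\concr\funu{\astate} \subseteq \concr\funu{\astate[1]} \cup \cdots \cup \concr\funu{\astate[n]}$, hence $\cstate \in \concr\funu{\astate[i]}$ for some $i$; applying consistency of $\tripleset'$ and $\abstripleset'$ to $\mpar{\cstate,\term,\cval}$ and $\mpar{\astate[i],\aterm,\aval}$ yields $\cval \in \concr\funu{\aval}$, as required. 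The key point that makes this go through is that $\stSplit{}$ leaves the term and value components unchanged across all the covering triples, so the same $\aterm$ and $\aval$ appear throughout.

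With this observation the proof assembles as follows. First, since $\tripleset$ and $\abstripleset$ are well formed and consistent, $\stSplit{\abstripleset}$ is well formed by Lemma~\ref{lem:split-wf}, and $\tripleset$ and $\stSplit{\abstripleset}$ are consistent by the auxiliary observation. Second, apply Lemma~\ref{lem:abstr-concr-wf-consistent} to the pair $\mpar{\tripleset,\stSplit{\abstripleset}}$, obtaining that $\cinterpf{\tripleset}$ and $\ainterpf{\stSplit{\abstripleset}}$ are well formed and consistent. Third, apply Lemma~\ref{lem:split-wf} to $\ainterpf{\stSplit{\abstripleset}}$ for well-formedness, and apply the auxiliary observation again with $\tripleset' = \cinterpf{\tripleset}$ and $\abstripleset' = \ainterpf{\stSplit{\abstripleset}}$, to conclude that $\cinterpf{\tripleset}$ and $\stSplit{\ainterpf{\stSplit{\abstripleset}}} = \sinterpf{\abstripleset}$ are well formed and consistent.

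The main obstacle is not mathematical depth but book-keeping: reading off $\sinterpf{} = \stSplit{} \circ \ainterpf{} \circ \stSplit{}$ correctly and keeping the asymmetric consistency relation oriented the right way (concrete on the left, abstract on the right) at each composition step. In particular one must notice that Lemma~\ref{lem:abstr-concr-wf-consistent} has to be invoked with $\stSplit{\abstripleset}$ — not $\abstripleset$ — in the abstract slot, which is precisely why the inner $\stSplit{}$ must first be shown consistent with $\tripleset$. No induction beyond what is already packaged inside the cited lemmas is needed.
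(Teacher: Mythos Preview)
Your proposal is correct. The paper states this lemma without proof, so there is nothing to compare against directly; your argument is precisely the one the surrounding framework invites: unfold $\sinterpf{\abstripleset} = \stSplit{\ainterpf{\stSplit{\abstripleset}}}$, prove once that $\stSplit{}$ preserves consistency with any fixed concrete triple set (your auxiliary observation, which is straightforward from the definition since the covering triples share the same $\aterm$ and $\aval$), and then sandwich Lemma~\ref{lem:abstr-concr-wf-consistent} between two applications of that observation, with Lemmas~\ref{lem:concr-wf}, \ref{lem:abstr-wf}, and \ref{lem:split-wf} handling well-formedness.
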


We finally state that the proof technique is correct.
\begin{lemma}\label{lem:split-correct}
    Let \(\abstripleset\) a well-formed abstract triple set. If \(\abstripleset
  \subseteq \ainterpf{\stSplit{\abstripleset}}\), then
  \(\stSplit{\abstripleset}\) is correct.

\end{lemma}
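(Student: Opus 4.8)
The plan is to establish the two halves of \emph{correctness} separately: that $\stSplit{\abstripleset}$ is well formed, and that $\stSplit{\abstripleset}$ is consistent with $\csem$. Well-formedness is immediate from Lemma~\ref{lem:split-wf}, since $\abstripleset$ is well formed by hypothesis. For consistency I would reuse, almost verbatim, the fixpoint induction used to prove that $\asem$ is correct, but with $\sinterpf{\cdot} = \stSplit{\ainterpf{\stSplit{\cdot}}}$ in place of $\ainterpf{\cdot}$.

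The first step is to turn the hypothesis $\abstripleset \subseteq \ainterpf{\stSplit{\abstripleset}}$ into a pre-fixpoint statement about $\sinterpf{\cdot}$, namely $\stSplit{\abstripleset} \subseteq \sinterpf{\stSplit{\abstripleset}}$. Applying the monotone operator $\stSplit{}$ to the hypothesis gives $\stSplit{\abstripleset} \subseteq \stSplit{\ainterpf{\stSplit{\abstripleset}}}$; and since $\stSplit{}$ is extensive ($\abstripleset \subseteq \stSplit{\abstripleset}$) and both $\stSplit{}$ and $\ainterpf{}$ are monotone, $\stSplit{\ainterpf{\stSplit{\abstripleset}}} \subseteq \stSplit{\ainterpf{\stSplit{\stSplit{\abstripleset}}}} = \sinterpf{\stSplit{\abstripleset}}$. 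Chaining the two inclusions yields the claim.

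Next I would prove by induction on $n$ that $\cinterpf[n]{\emptyset}$ and $\stSplit{\abstripleset}$ are well formed and consistent. The base case $n = 0$ is trivial, since $\emptyset$ is well formed and contains no triple and $\stSplit{\abstripleset}$ is well formed as above. For the inductive step, the induction hypothesis gives that $\cinterpf[n]{\emptyset}$ and $\stSplit{\abstripleset}$ are well formed and consistent, so Lemma~\ref{lem:split-concr-wf-consistent} applied to the concrete set $\cinterpf[n]{\emptyset}$ and the abstract set $\stSplit{\abstripleset}$ yields that $\cinterpf[n+1]{\emptyset} = \cinterpf{\cinterpf[n]{\emptyset}}$ and $\sinterpf{\stSplit{\abstripleset}}$ are well formed and consistent. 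Consistency of a concrete set with an abstract set is a universally quantified condition over pairs of triples, hence anti-monotone in the abstract argument; combined with $\stSplit{\abstripleset} \subseteq \sinterpf{\stSplit{\abstripleset}}$, this gives that $\cinterpf[n+1]{\emptyset}$ and $\stSplit{\abstripleset}$ are consistent. Well-formedness of $\cinterpf[n+1]{\emptyset}$ comes from the same lemma (or from Lemma~\ref{lem:concr-wf}), and $\stSplit{\abstripleset}$ is well formed by Lemma~\ref{lem:split-wf}, closing the induction. Finally, by Lemma~\ref{lem:csem} we have $\csem = \bigcup_n \cinterpf[n]{\emptyset}$; any triple of $\csem$ lies in some $\cinterpf[n]{\emptyset}$, and consistency with each member of the union implies consistency with the union, so $\stSplit{\abstripleset}$ is consistent with $\csem$. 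Together with well-formedness, this shows $\stSplit{\abstripleset}$ is correct.

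The only delicate point is the bookkeeping around the nested occurrences of $\stSplit{}$: one must massage the hypothesis into the pre-fixpoint inequality $\stSplit{\abstripleset} \subseteq \sinterpf{\stSplit{\abstripleset}}$ so that, at each step of the induction, Lemma~\ref{lem:split-concr-wf-consistent} is being fed an abstract argument, $\stSplit{\abstripleset}$, that is genuinely consistent with the current concrete set. Everything else is either the same fixpoint induction as in the proof that $\asem$ is correct, or an immediate consequence of the anti-monotonicity of consistency and its compatibility with unions; the substantive content — that $\cinterpf{\cdot}$ and $\sinterpf{\cdot}$ jointly preserve well-formedness and consistency — is already isolated in Lemma~\ref{lem:split-concr-wf-consistent}.
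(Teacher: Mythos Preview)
Your proof is correct. The paper does not include a proof of this lemma, but your argument is exactly the one the surrounding lemmas are set up to support: Lemma~\ref{lem:split-concr-wf-consistent} is evidently stated precisely so that the fixpoint induction from the proof that $\asem$ is correct can be replayed with $\sinterpf{\cdot}$ in place of $\ainterpf{\cdot}$, and your massaging of the hypothesis into the pre-fixpoint form $\stSplit{\abstripleset} \subseteq \sinterpf{\stSplit{\abstripleset}}$ together with the anti-monotonicity of consistency is the correct way to handle the fact that $\stSplit{\abstripleset}$ is only a pre-fixpoint rather than a fixpoint.
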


We turn back to our example. Consider the following triple set.
\begin{equation*}
\abstripleset = 
\mset{
  \begin{gathered}
    \mpar{\jsid{x} \mapsto 0, 0, 0}, \mpar{\jsid{x} \mapsto \itvl{1}{\infty}, 0, 0}, 
    \mpar{\jsid{x} \mapsto 0, -1, -1}, \mpar{\jsid{x} \mapsto \itvl{1}{\infty}, -1, -1},\\
    \mpar{\jsid{x} \mapsto 0, \jsid{x}, 0},
    \mpar{\jsid{x} \mapsto \itvl{1}{\infty}, \jsid{x}, \itvl{1}{\infty}},\\
    \mpar{\jsid{x} \mapsto 0, \jsid{x} = 0, \abs{\btrue}},
    \mpar{\jsid{x} \mapsto \itvl{1}{\infty}, \jsid{x} = 0, \abs{\bfalse}},\\
    \mpar{\jsid{x} \mapsto 0, \neg\mpar{\jsid{x} = 0}, \abs{\bfalse}},
    \mpar{\jsid{x} \mapsto \itvl{1}{\infty}, \neg\mpar{\jsid{x} = 0}, \abs{\btrue}},\\
    \mpar{\jsid{x} \mapsto \itvl{1}{\infty}, \jsid{x} -1, \itvl{0}{\infty}},
    \mpar{\jsid{x} \mapsto \itvl{1}{\infty}, \asnescape{\jsid{x}}{\jsid{x} -1},
      \jsid{x} \mapsto \itvl{0}{\infty}},\\
    \mpar{\jsid{x} \mapsto 0, t, \jsid{x} \mapsto 0},
    \mpar{\jsid{x} \mapsto \itvl{1}{\infty}, t, \jsid{x} \mapsto 0}
  \end{gathered}
}
\end{equation*}
We can show that \(\abstripleset \subseteq \sinterpf{\abstripleset}\), hence
every triple of \(\stSplit{\abstripleset}\) is correct, in particular
\(\mpar{\jsid{x} \mapsto \itvl{0}{\infty}, t, \jsid{x} \mapsto 0}\).

Note that this proof technique does not depend on the programming
language considered. The difficulty is transferred to the choice of how to split
the state, but as long as the splitting is correct (the added triple is covered
by the existing ones), the resulting technique is sound.

\section{Constraint Generation}\label{SEC:CONSTRAINTS}

As a final interpretation, we show how the abstract
interpretation can be used to construct an actual program analyser. We
define the analyser as an interpretation that generates data flow
constraints to analyse a given
program~\cite{Nielson:Principles}.\footnote{We impose the technical
  restriction that any hook used in a skeleton can be matched to a program
  point of the program (closed term) \(t_0\) under consideration. Thus
  constraint-based analysis of code-generating code is not considered here.}
Constraint-based program analysis is a well-known technique for defining
analyses. We show how this technique can be lifted and defined
entirely as an interpretation, by generating constraints
over all the flow variables used in a semantic definition. 

We first need to formalise (and extend) the standard notion
of \emph{program point}. We take a program point \(\progpoint\)
to be a list of integers denoting a position in a term. Program points
form a monoid with concatenation operator \(\ppc{}{}\) and neutral
element \(\ppe\). We define a subterm operator
\(\subterm{t}{\progpoint}\) as follows.
\begin{align*}
  \subterm{t}{\ppe}
  & \eqdef t  \quad\quad\quad\quad
   \subterm{c\myvecip{t}{n}}{\ppc{k}{\progpoint}}
   \eqdef
    \begin{cases}
      \subterm{t_k}{\progpoint} & \text{if } k \in [1..n]\\
      \text{undefined} & \text{otherwise}
    \end{cases}
\end{align*}

We assume a function \(\Gent{}\) that for a given term \(t_0\) states
the set of program points for 
which constraints will be generated. It typically consists of the set of
executable subterms of \(t_0\). We require the program points of
\(\Gent{t_0}\) to be executable: if \(\progpoint \in
\Gent{t_0}\), then \(\subterm{t_0}{\progpoint} = c\myvecp{\term}{n}\).
Requirement~\ref{req:single-skeleton} enforces the existence of a skeleton
for this term.

We next define a partial operator \(\ppgen{}{}{}\) that associates program points
to the terms occurring 
in the hooks of a skeleton. Formally, if \(\ppgen{\progpoint}{\rulen{N}}{t} =
\progpoint'\), then (1) skeleton \(\rulen{N}\) is applicable: \(\progpoint \in
\Gent{t_0}\), \(\subterm{t_0}{\progpoint} = c\myvecp{\term}{n}\), and
\(\rulen{N}\) is of the form \(\Rule{N}{c\myvecp{\tvar}{n}}{\skel}\), (2) a
hook \(\deriv{\_}{\term}{\_}\) occurs in \(\skel\), and (3) the
resulting program point is part of the set of explored program points:
\(\progpoint' \in \Gent{t_0}\) and \(\subterm{\term_0}{\progpoint'} =
\mpar{\extsigvec{}{\tvar}{\term}}\funu{\term}\).

Constraints are either of the form \(\syntaxcontraint{x = x'}\),
\(\syntaxcontraint{x \sqsubseteq x'}\), or \(\syntaxcontraint{x : s}\),
where \(x\) and \(x'\) are variables and \(s\) a sort.
We generate variable names in constraints
of the form
\(\cppv{\progpoint}{x}\).
The constraint generation function $\Gen{}$
that takes a program \(t_0\) and returns the set of constraints
generated by \(t_0\) is defined as 
\begin{align*}
  \Gen{t_0} &\eqdef \bigcup \eqset \cup
  \mset{
              \begin{gathered}
                  \syntaxcontraint{\cppv{\progpoint}{\ivar} : \typin{\Sort[]{\subterm{t_0}{\progpoint}}}},\\
                  \syntaxcontraint{\cppv{\progpoint}{\ovar} : \typout{\Sort[]{\subterm{t_0}{\progpoint}}}},\\
                \forall i \in [1..n].
                  \syntaxcontraint{\cppv{\progpoint}{\cppv{}{\tvar[i]}} = \term[i]}
              \end{gathered}
  \quad\left|\quad
  \begin{gathered}
    \progpoint \in \Gent{t_0} \land
    \subterm{t_0}{\progpoint} = c\myvecp{\term}{n}\\
    \Rule{N}{c\myvecp{\tvar}{n}}{\skel} \in \Rules\\
    \eqgen[\rulen{N},\progpoint,\emptyset]{\skel}{\eqset}\\
    \eqdfvar{\emptyset} = \mset{\myvec{\tvar}{n}, \ivar} \land
    \ovar \in \eqdfvar{\eqset}
  \end{gathered}
  \right.
  }
\end{align*}

For each skeleton \(\rulen{N}\) we define a function \(\eqdfvar{}\) that maps sets
of constraints to sets of skeletal variables. This is not necessary for the
constraint generation but is used to prove consistency between constraints and
the abstract semantics.

The constraint generation interpretation of skeletons \(\eqgen[]{S}{}\) is given in
Figure~\ref{fig:constraint-generation}. The rule for hooks generates
constraints for connecting the input state \(\cppv{\progpoint'}{\ivar}
\) with the flow variable holding the input state in the hook
\(
\cppv{\progpoint}{\cppv{}{\fvar[1]}}
\),
and the resulting output state of the hook with the output of the
hook.
Each filter comes with a constraint generation function
\(\eqfilter{F}{}{}\) specific to the analysis of that filter.
We require that the constraints generated for that filter
agree with the abstract semantics: if
\(\solution\) is a solution to the constraints
\(\eqfilter{F}{\myvec{\cppv{\progpoint}{\cppv{}{\svar{x}}}}{n},
  \myvec{\cppv{\progpoint}{\cppv{}{\svar{y}}}}{m}}{}\), then
following holds:
\(\abstrfilter{F}{\myvec[\solution]{\cppv{\progpoint}{\cppv{}{\svar{x}}}}{n}}{} \sqsubseteq
\myvecp[\solution]{\cppv{\progpoint}{\cppv{}{\svar{y}}}}{m}\).
For analysing a set of branches, we generate constraints for each
branch and return the union of these constraint sets.

\begin{figure}
  \centering
  \begin{align*}
    &\implies \eqgen{\emptylist}{\eqset}\\
    \mpar{
    \begin{rgathered}
      \ppgen{\progpoint}{\rulen{N}}{t} = \progpoint'\\
      \fvar[1] \in \eqdfvar{\eqset}\\
      \eqset' = \eqset \cup \mset{
        \begin{gathered}
            \syntaxcontraint{\cppv{\progpoint}{\cppv{}{\fvar[1]}} \sqsubseteq \cppv{\progpoint'}{\ivar}},\\
            \syntaxcontraint{\cppv{\progpoint'}{\ovar} \sqsubseteq \cppv{\progpoint}{\cppv{}{\fvar[2]}}}
        \end{gathered}}\\
      \eqdfvar{\eqset'} = \eqdfvar{\eqset} \cup \mset{\fvar[2]}
    \end{rgathered}
}
    &\implies \eqgen{\deriv{\fvar[1]}{\term}{\fvar[2]}}{\mpar{\rulen{N},\progpoint,\eqset'}}\\
    \mpar{
    \begin{rgathered}
    \myvecs{\svar{x}}{n} \subseteq \eqdfvar{\eqset}\\
      \eqfilter{F}{
        \begin{gathered}
          \myvec{\cppv{\progpoint}{\cppv{}{\svar{x}}}}{n},\\
          \myvec{\cppv{\progpoint}{\cppv{}{\svar{y}}}}{m}
        \end{gathered}
}{\eqset_f}\\
\eqset' = \eqset \cup \eqset_f\\
\eqdfvar{\eqset'} = \eqdfvar{\eqset} \cup \myvecs{\svar{y}}{m}
    \end{rgathered}
}
    &\implies \eqgen{\filter{F}{\myvec{\svar{x}}{n}}{\myvecp{\svar{y}}{m}}}{\mpar{\rulen{N},\progpoint,\eqset'}}
      \\
    \mpar{
    \begin{rgathered}
      i \geq 1\\
      \forall i \in [1..n]. \outfun\funu{i} = \eqset_i\\
      \forall i \in [1..n]. \bvar \subseteq \eqdfvar{\eqset_i}\\
      \eqset' = \eqset \cup \bigcup_{i \in [1..n]} \eqset_i\\
      \eqdfvar{\eqset'} = \eqdfvar{\eqset} \cup \bvar
    \end{rgathered}}
    &\implies \eqgenb[\outfun,\mpar{\rulen{N},\progpoint,\eqset}]{\bigoplus_n}{\mpar{\rulen{N},\progpoint,\eqset'}}
  \end{align*}
  \caption{Constraint Generation}\label{fig:constraint-generation}
\end{figure}

\paragraph{Correctness}
A \emph{solution} \(\solution\) of a set of constraints \(\eqset\) is a mapping
from the variables in \(\eqset\) to abstract values and terms such that every
constraint in \(\eqset\) holds.

\begin{lemma}\label{lem:constraints-correct}
    Let \(\term[0]\) be a term and \(\solution\) be a solution of
  \(\Gen{\term[0]}\). Let
\(\abstripleset\) be defined as follows:
\begin{equation*}
  \abstripleset =  \mset{\mpar{\astate,\term,\aval} \st
  \begin{gathered}
    \progpoint \in \Gent{t_0}\\
    \term = \subterm{t_0}{\progpoint}\\
    \solution\funu{\cppv{\progpoint}{\ivar}} = \astate\\
    \solution\funu{\cppv{\progpoint}{\ovar}} = \aval
  \end{gathered}}
\end{equation*}
Then \(\abstripleset\) is well typed and \(\abstripleset \subseteq \ainterpf{\abstripleset}\).

\end{lemma}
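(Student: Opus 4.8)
The plan is to prove the two assertions in turn: that $\abstripleset$ is well typed --- short, and reused below --- and that $\abstripleset \subseteq \ainterpf{\abstripleset}$, for which I would match the constraint-generation interpretation against the abstract interpretation, letting the solution $\solution$ resolve the nondeterministic choices of the latter. For well-typedness, take $\mpar{\astate,\term,\aval} \in \abstripleset$, witnessed by some $\progpoint \in \Gent{t_0}$ with $\term = \subterm{t_0}{\progpoint}$, $\astate = \solution\funu{\cppv{\progpoint}{\ivar}}$, $\aval = \solution\funu{\cppv{\progpoint}{\ovar}}$. The requirement on $\Gent{}$ gives $\term = c\myvecp{\term}{n}$, and as a subterm of the closed $t_0$ it is closed; set $\sort = \Sort[]{\term}$. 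By the technical restriction ensuring constraint generation succeeds at every program point of $\Gent{t_0}$, the clauses defining $\Gen{t_0}$ place $\syntaxcontraint{\cppv{\progpoint}{\ivar} : \typin{\sort}}$, $\syntaxcontraint{\cppv{\progpoint}{\ovar} : \typout{\sort}}$, and $\syntaxcontraint{\cppv{\progpoint}{\cppv{}{\tvar[i]}} = \term[i]}$ (for $i \in [1..n]$) into $\Gen{t_0}$; as $\solution$ solves these, $\astate : \typin{\sort}$ and $\aval : \typout{\sort}$, which is well-typedness, together with the equalities $\solution\funu{\cppv{\progpoint}{\cppv{}{\tvar[i]}}} = \term[i]$ used later.

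For $\abstripleset \subseteq \ainterpf{\abstripleset}$, take the same triple and the skeleton $\Rule{N}{c\myvecp{\tvar}{n}}{\skel}$. The crux is a claim proved by induction on the derivation of $\eqgen[\rulen{N},\progpoint,\eqset]{\skel}{\eqset'}$ as it arises while generating $\Gen{t_0}$ at $\progpoint$ (the constraint-interpretation state keeps its first two components $\rulen{N},\progpoint$, so only the constraint set moves from $\eqset$ to $\eqset'$): for any abstract environment $\abstrenv$ with $\dom{\abstrenv} = \eqdfvar{\eqset}$ and $\abstrenv\funu{x} = \solution\funu{\cppv{\progpoint}{\cppv{}{x}}}$ for every $x \in \eqdfvar{\eqset}$ (reading $\cppv{\progpoint}{\ivar}$ when $x$ is $\ivar$, and so on), there is a run $\abstrinterp[\cont][\abstrenv,\abstripleset]{\skel}[\cont][\abstrenvp]$ --- output flag forced to $\cont$, always achievable by choosing the $\cont \to \cont$ clauses --- with $\dom{\abstrenvp} = \eqdfvar{\eqset'}$ and $\abstrenvp\funu{x} = \solution\funu{\cppv{\progpoint}{\cppv{}{x}}}$ for $x \in \eqdfvar{\eqset'}$. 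Instantiating this with the whole skeleton body and $\eqset = \emptyset$ closes the lemma: there $\eqdfvar{\emptyset} = \mset{\myvec{\tvar}{n},\ivar}$, the starting environment $\extsigvec{\ivar \mapsto \astate}{\tvar}{\term}[n]$ is legitimate because $\astate : \typin{\sort}$ and matches $\solution$ on the corresponding constraint variables by the glue equalities, and the resulting $\abstrenvp$ satisfies $\ovar \in \eqdfvar{\eqset'}$ (well-formedness) with $\abstrenvp\funu{\ovar} = \solution\funu{\cppv{\progpoint}{\ovar}} = \aval$, so $\mpar{\astate,\term,\aval} \in \ainterpf{\abstripleset}$.

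The induction itself: the empty body is immediate. For a hook $\deriv{\fvar[1]}{\termp}{\fvar[2]}$, constraint generation fixes $\progpoint' = \ppgen{\progpoint}{\rulen{N}}{\termp}$ with $\progpoint' \in \Gent{t_0}$ and $\subterm{t_0}{\progpoint'} = \mpar{\extsigvec{}{\tvar}{\term}}\funu{\termp} = \abstrenv\funu{\termp}$, hence $\mpar{\solution\funu{\cppv{\progpoint'}{\ivar}}, \abstrenv\funu{\termp}, \solution\funu{\cppv{\progpoint'}{\ovar}}} \in \abstripleset$; feeding this triple to the $\cont \to \cont$ hook clause and binding $\fvar[2]$ to $\solution\funu{\cppv{\progpoint}{\cppv{}{\fvar[2]}}}$, the clause's side conditions become precisely the two emitted constraints $\syntaxcontraint{\cppv{\progpoint}{\cppv{}{\fvar[1]}} \sqsubseteq \cppv{\progpoint'}{\ivar}}$ and $\syntaxcontraint{\cppv{\progpoint'}{\ovar} \sqsubseteq \cppv{\progpoint}{\cppv{}{\fvar[2]}}}$ (and $\abstrenv\funu{\termp} \sqsubseteq \abstrenv\funu{\termp}$ by reflexivity of the flat term order). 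For a filter $\filter{F}{\myvec{\svar{x}}{n}}{\myvecp{\svar{y}}{m}}$, the required agreement between $\eqfilter{F}{}{}$ and $\abstrfilter{F}{}{}$ gives, since $\solution$ solves the emitted constraints, that $\abstrfilter{F}{}{}$ applied to the $\solution$-values of the inputs is $\sqsubseteq$ the $\solution$-values of the outputs, which is exactly what the $\cont \to \cont$ filter clause needs, taking those output values as the new bindings. For branches $\branches{\myvec{\skel}{n}}$, apply the induction hypothesis to each $\skel[i]$ from the same $\abstrenv$: each run ends in an environment whose restriction to $\bvar$ (which is $\subseteq \eqdfvar{\eqset_i}$) is $x \mapsto \solution\funu{\cppv{\progpoint}{\cppv{}{x}}}$, the same for all $i$ because $\progpoint$ and the names in $\bvar$ are shared; so all branches continue and agree on $\bvar$, the $\cont$-merge clause applies, and the result is the common extension of $\abstrenv$.

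The step I expect to be the main obstacle is the branch case. The abstract interpretation requires all surviving branches to produce the same values on the shared variables $\bvar$, a condition whose satisfaction in general depends on the loss-of-precision slack deliberately built into the hook and filter clauses; here it holds automatically, because $\solution$ is one global assignment and sibling branches share the program point $\progpoint$ and hence the constraint names $\cppv{\progpoint}{\cppv{}{x}}$ for $x \in \bvar$ --- recognising and exploiting this alignment is the heart of the argument. The rest is bookkeeping that still needs care: the constraint sets occurring in the induction must be exactly those produced inside the derivation of $\Gen{t_0}$ at $\progpoint$ (so they are subsets of $\Gen{t_0}$, hence satisfied by $\solution$), which is why the induction is on the constraint-generation derivation rather than a free quantification over constraint sets, and one repeatedly invokes the already-proved well-typedness to license the environments passed to the abstract interpretation.
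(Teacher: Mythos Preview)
The paper does not include a proof of this lemma in the text; it is stated without proof, and the surrounding section is among those the authors say were ``proven on paper'' but not reproduced. So there is no proof in the paper to compare against.

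Your approach is sound and is the natural one. The induction invariant you set up --- that the abstract environment agrees with $\solution$ on the variables in $\eqdfvar{\eqset}$ --- is exactly what is needed to thread the argument through, and you have correctly identified that the $\cont\to\cont$ clauses of the abstract interpretation can always be taken: for hooks because the constraint $\syntaxcontraint{\cppv{\progpoint}{\cppv{}{\fvar[1]}} \sqsubseteq \cppv{\progpoint'}{\ivar}}$ supplies the required ordering and the triple at $\progpoint'$ is in $\abstripleset$ by construction; for filters because the stipulated agreement between $\eqfilter{F}{}{}$ and $\abstrfilter{F}{}{}$ is precisely the $\sqsubseteq$ side condition; and for branches because, as you say, the shared program point forces all branches to assign the same $\solution$-values to the variables in $\bvar$. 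That last observation is indeed the crux, and you have it right. One small point of presentation: the induction is more cleanly phrased as structural induction on the skeleton body $\skel$ (since the constraint-generation interpretation is deterministic given the skeleton), but this is equivalent to what you wrote.
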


\paragraph{Discussion} The constraints we generate are \emph{path-insensitive}:
they do not capture the fact that when a filter does not hold, the rest of the
skeleton does not matter. Constraints can be path-sensitive by letting the state
of the interpretation be a pair consisting of a set \(\mathit{Stop}\) of
constraint sets representing pathways in the skeleton that are stopped, similar
to the \(\halt\) flag in the abstract interpretation, and another set
\(\mathit{Run}\) of constraint sets representing all the running paths. When a filter is
encountered, the \(\mathit{Run}\) sets are added to \(\mathit{Stop}\) with the
additional constraint that the filter returns \(\bot\). The usual constraints
for the filter are added to each set in \(\mathit{Run}\). In a nutshell, we duplicate
constraints for each filter: once when it does not hold, and once when it may
hold. At the end of the interpretation, the global constraint to be satisfied is
the disjunction of all constraint sets in \(\mathit{Run}\) and \(\mathit{Stop}\), each constraint
set interpreted as a conjunction of its atomic constraints.

\paragraph{Example}
Consider \(t_0 = \while{\neg\mpar{\jsid{x} = 0}}{\asnescape{\jsid{x}}{\jsid{x} - 1}}\).
Its executable subterms are \[\Gent{t_0} =
    \mset{\ppe, 1, \ppc{1}{1}, \ppc{1}{\ppc{1}{1}}, \ppc{1}{\ppc{1}{2}}, 2, \ppc{2}{2}, \ppc{2}{\ppc{2}{1}}, \ppc{2}{\ppc{2}{2}}}.\]
Note that the subterm \(\jsid{x}\) appears both as program points
\(\ppc{1}{\ppc{1}{1}}\) and \(\ppc{2}{\ppc{2}{1}}\) of \(t_0\).
For the different filters we generate symbolic constraints that will
reuse abstract filters:
\(\eqfilter{\mathtt{isBool}}{x, y}{\mset{\syntaxcontraint{y = \mathtt{isBool}\funu{x}}}}\).
A mapping \(\solution\) is then a solution of such a symbolic constraint
if \(\solution\funu{y} = \abstrfilter{\mathtt{isBool}}{\solution\funu{x}}{}\).

The definition of \(\Gen{t_0}\) generates a large number of constraints.
We focus on a selection of them: those generated by the initial program point \(\ppe\).
The associated skeleton is
\[
\Rule{While}{\while{\tvar[1]}{\tvar[2]}}
     {\left[
       \deriv{\ivar{}}{\tvar[1]}{\fvar[1]};
       \filter{\texttt{isBool}}{\fvar[1]}{\fvar[1']};
       \ldots
       \right].}
\]
The constraint generation then produces the constraints
\begin{mathpar}
\syntaxcontraint{\cppv{\ppe}{\ivar} : \stateT}, \and
\syntaxcontraint{\cppv{\ppe}{\tvar[1]} =
    \neg\mpar{\jsid{x} = 0}}, \and
\syntaxcontraint{\cppv{\ppe}{\ovar} : \stateT}, \and
\syntaxcontraint{\cppv{\ppe}{\tvar[2]} =
  \asnescape{\jsid{x}}{\jsid{x} - 1}}.
\end{mathpar}
as well as the constraints given by \(\eqgen[\rulen{N},\progpoint,\emptyset]{
       \deriv{\ivar{}}{\tvar[1]}{\fvar[1]};
       \filter{\texttt{isBool}}{\fvar[1]}{\fvar[1']};
       \ldots
}{}\).
The hook case links the variable \(\cppv{\ppe}{\ivar}\)
to the input of \(\tvar[1]\), which here represents \(\neg\mpar{\jsid{x} = 0}\):
\(\ppgen{\ppe}{\rulen{While}}{\tvar[1]} = 1\)
and we thus generate the two constraints 
\begin{align*}
  \syntaxcontraint{\cppv{\ppe}{\ivar}
    \sqsubseteq \cppv{1}{\ivar}}, 
\syntaxcontraint{\cppv{1}{\ovar} \sqsubseteq
  \cppv{\ppe}{\fvar[1]}}.
\end{align*}
The constraints on \(\cppv{1}{\ivar}\) and \(\cppv{1}{\ovar}\) are generated
when considering the program point~\(1\),
corresponding to the evaluation of \(\neg\mpar{\jsid{x} = 0}\)
(corresponding to the skeleton \(\rulen{Neg}\)).
As stated, the set of all generated constraints is large; it is provided in the
supplementary material on the companion website.

\section{Extending \While{} with exceptions, input/output, and a heap}
\label{sec:while2}

To further illustrate the use of skeletal semantics, we extend our \While{} language with exceptions, input/output, and a heap. We
first need to define new flow sorts: \(\inT\) for input streams, \(\outT\) for
output streams, \(\heapT\) for heaps, \(\locT\) for locations in the heap, \(\IOstateT\) for the combination of the
streams with a store and a heap, \(\valIOstateT\) for the further combination
with a value, and \(\excIOstateT\) for a \(\IOstateT\) extended to signal
whether an exception was raised.
We still have two program sorts (\(\exprS\) and \(\statS\)), but their input
flow sorts are \(\IOstateT\), and their output flow sorts are now \(\valIOstateT\) for
expressions and \(\excIOstateT\) for statements.
Figure \ref{fig:example:constructors:while2} lists the additional
constructors of our language.
The additional filters are defined in Figure
\ref{fig:example:typing:filters:while2}, the rules for expressions in Figure
\ref{fig:example:skeletal:semantics:while2:expr}, and the rules for statements
in Figure \ref{fig:example:skeletal:semantics:while2:stat}. To help reading the
rules, flow variables have names related to their sorts: \(\sigma\) for
\(\IOstateT\), \(w\) for \(\valIOstateT\), \(v\) for \(\valT\), \(n\) for
\(\intT\), \(i\) for \(\inT\), and so on.

\begin{figure}
  \centering
  \begin{equation*}
    \begin{array}[t]{|c|c|}
      \hline
      \text{\(c\)} & \text{Signature} \\
      \hline
      \ein & \exprS\\
      \eoutn & \exprS \rightarrow \statS \\
      \hline
    \end{array}
    \quad
    \begin{array}[t]{|c|c|}
      \hline
      \text{\(c\)} & \text{Signature} \\
      \hline
      \sthrow & \statS\\
      \stry{}{} & \mpar{\statS,\statS} \rightarrow \statS\\
      \hline
    \end{array}
    \quad
    \begin{array}[t]{|c|c|}
      \hline
      \text{\(c\)} & \text{Signature} \\
      \hline
      \erefn & \exprS \rightarrow \exprS\\
      ! & \exprS \rightarrow \exprS\\
      \leftarrow & \mpar{\exprS,\exprS} \rightarrow \statS\\
      \hline
    \end{array}
  \end{equation*}
  \caption{Additional Constructors for \While{}}
  \label{fig:example:constructors:while2}
\end{figure}

\begin{figure}
  \centering
\begin{equation*}
\begin{array}[t]{|c|c|}
  \hline
  f & \typsig{f}\\
  \hline
  \texttt{in} & \inT \rightarrow \mpar{\valT,\inT} \\
  \texttt{alloc} & \mpar{\heapT, \valT} \rightarrow \mpar{\heapT, \locT}\\
  \texttt{locVal} & \locT \rightarrow \valT \\
  \texttt{isLoc} & \valT \rightarrow \locT \\
  \texttt{get} & \mpar{\locT,\heapT} \rightarrow \valT\\
  \texttt{set} & \mpar{\locT,\heapT,\valT} \rightarrow \heapT\\
  \texttt{out} & (\outT,\valT) \rightarrow \outT\\
  \hline
\end{array}
\quad
\begin{array}[t]{|c|c|}
  \hline
  f & \typsig{f}\\
  \hline
  \texttt{mkSt} & \mpar{\inT,\outT,\stateT,\heapT} \rightarrow \IOstateT \\
  \texttt{splitSt} & \IOstateT \rightarrow \mpar{\inT,\outT,\stateT,\heapT}\\
  \texttt{mkValSt} & \mpar{\valT,\IOstateT} \rightarrow \valIOstateT \\
  \texttt{getValSt} & \valIOstateT \rightarrow \mpar{\valT,\IOstateT}\\
  \texttt{mkOK} & \IOstateT \rightarrow \excIOstateT\\
  \texttt{mkExc} & \IOstateT \rightarrow \excIOstateT\\
  \texttt{isOK} & \excIOstateT \rightarrow \IOstateT\\
  \texttt{isExc} & \excIOstateT \rightarrow \IOstateT\\
  \hline
\end{array}
\end{equation*}
  \caption{Additional filters}\label{fig:example:typing:filters:while2}
\end{figure}

\begin{figure}
  \begin{align*}
    \Rulea{Lit}{\const{\tvar}}
    {\left[\filter{\texttt{litInt}}{\tvar}{\fvar[n]{}};
    \filter{\texttt{intVal}}{\fvar[n]}{\fvar[v]};
    \filter{\texttt{mkValSt}}{\fvar[v]{},\ivar{}}{\ovar{}}
    \right]}
    \\
    \Rulea{Var}{\var{\tvar}}
    {\left[
    \begin{multlined}[][\arraycolsep]
      \filter{\texttt{splitSt}}{\ivar{}}{\mpar{\fvar[i]{},\fvar[o]{},\fvar[s]{},\fvar[h]{}}};
      \filter{\texttt{read}}{\tvar{},\fvar[s]{}}{\fvar[v]{}};\\
      \filter{\texttt{mkSt}}{\fvar[i]{},\fvar[o]{},\fvar[s]{},\fvar[h]{}}{\fvar[\sigma]{}};
      \filter{\texttt{mkValSt}}{\fvar[v]{},\fvar[\sigma]{}}{\ovar{}}
    \end{multlined}
    \right]}
    \\
    \Rulea{In}{\ein}
    {\left[
    \begin{multlined}[][\arraycolsep]
      \filter{\texttt{splitSt}}{\ivar{}}{\mpar{\fvar[i]{},\fvar[o]{},\fvar[s]{},\fvar[h]{}}};
      \filter{\texttt{in}}{\fvar[i]{}}{\mpar{\fvar[v]{},\fvar[i']{}}};\\
      \filter{\texttt{mkSt}}{\fvar[i']{},\fvar[o]{},\fvar[s]{},\fvar[h]{}}{\fvar[\sigma]{}};
      \filter{\texttt{mkValSt}}{\fvar[v]{},\fvar[\sigma]{}}{\ovar{}}
    \end{multlined}
    \right]}
    \\
    \Rulea{Alloc}{\eref{\tvar}}
    {\left[
    \begin{multlined}[][\arraycolsep]
      \deriv{\ivar{}}{\tvar{}}{\fvar[w]{}};
      \filter{\texttt{getValSt}}{\fvar[w]{}}{\mpar{\fvar[v]{},\fvar[\sigma]{}}};\\
      \filter{\texttt{splitSt}}{\fvar[\sigma]{}}{\mpar{\fvar[i]{},\fvar[o]{},\fvar[s]{},\fvar[h]{}}};
      \filter{\texttt{alloc}}{\fvar[h]{},\fvar[v]{}}{\mpar{\fvar[h']{},\fvar[l]{}}};\\
      \filter{\texttt{locVal}}{\fvar[l]}{\fvar[v']};
      \filter{\texttt{mkSt}}{\fvar[i]{},\fvar[o]{},\fvar[s]{},\fvar[h']{}}{\fvar[\sigma']{}};\\
      \filter{\texttt{mkValSt}}{\fvar[v']{},\fvar[\sigma']{}}{\ovar{}}
    \end{multlined}
    \right]}
    \\
    \Rulea{Acc}{\access{\tvar}}
    {\left[
    \begin{multlined}[][\arraycolsep]
      \deriv{\ivar{}}{\tvar{}}{\fvar[w]{}};
      \filter{\texttt{getValSt}}{\fvar[w]{}}{\mpar{\fvar[v]{},\fvar[\sigma]{}}};
      \filter{\texttt{isLoc}}{\fvar[v]{}}{\fvar[l]};\\
      \filter{\texttt{splitSt}}{\fvar[\sigma]{}}{\mpar{\fvar[i]{},\fvar[o]{},\fvar[s]{},\fvar[h]{}}};
      \filter{\texttt{get}}{\fvar[l]{},\fvar[h]{}}{\fvar[v']{}};\\
      \filter{\texttt{mkSt}}{\fvar[i]{},\fvar[o]{},\fvar[s]{},\fvar[h]{}}{\fvar[\sigma']{}};
      \filter{\texttt{mkValSt}}{\fvar[v']{},\fvar[\sigma']{}}{\ovar{}}
    \end{multlined}
    \right]}
    \\
      \Rulea{Add}{\tvar[1]{} + \tvar[2]{}}
      {\left[
      \begin{multlined}[][\arraycolsep]
        \deriv{\ivar{}}{\tvar[1]{}}{\fvar[w_{1}]{}};
        \filter{\texttt{getValSt}}{\fvar[w_{1}]{}}{\mpar{\fvar[v_{1}]{},\fvar[\sigma_{1}]{}}};
        \filter{\texttt{isInt}}{\fvar[v_{1}]{}}{\fvar[n_{1}]};\\
        \deriv{\fvar[\sigma_{1}]{}}{\tvar[2]{}}{\fvar[w_{2}]{}};
        \filter{\texttt{getValSt}}{\fvar[w_{2}]{}}{\mpar{\fvar[v_{2}]{},\fvar[\sigma_{2}]{}}};
        \filter{\texttt{isInt}}{\fvar[v_{2}]{}}{\fvar[n_{2}]{}};\\
        \filter{\texttt{add}}{\fvar[n_{1}]{}, \fvar[n_{2}]{}}{\fvar[n]{}};
        \filter{\texttt{intVal}}{\fvar[n]}{\fvar[v]};
        \filter{\texttt{mkValSt}}{\fvar[v]{},\fvar[\sigma_{2}]{}}{\ovar{}}
      \end{multlined}
      \right]}
    \\
      \Rulea{Eq}{\tvar[1]{} = \tvar[2]{}}
      {\left[
      \begin{multlined}[][\arraycolsep]
        \deriv{\ivar{}}{\tvar[1]{}}{\fvar[w_{1}]{}};
        \filter{\texttt{getValSt}}{\fvar[w_{1}]{}}{\mpar{\fvar[v_{1}]{},\fvar[\sigma_{1}]{}}};
        \filter{\texttt{isInt}}{\fvar[v_{1}]{}}{\fvar[n_{1}]};\\
        \deriv{\fvar[\sigma_{1}]{}}{\tvar[2]{}}{\fvar[w_{2}]{}};
        \filter{\texttt{getValSt}}{\fvar[w_{2}]{}}{\mpar{\fvar[v_{2}]{},\fvar[\sigma_{2}]{}}};
        \filter{\texttt{isInt}}{\fvar[v_{2}]{}}{\fvar[n_{2}]{}};\\
        \filter{\texttt{eq}}{\fvar[n_{1}]{}, \fvar[n_{2}]{}}{\fvar[b]{}};
        \filter{\texttt{boolVal}}{\fvar[b]}{\fvar[v]};
        \filter{\texttt{mkValSt}}{\fvar[v]{},\fvar[\sigma_{2}]{}}{\ovar{}}
      \end{multlined}
      \right]}
    \\
      \Rulea{Neg}{\neg{\tvar}}
      {\left[
    \begin{multlined}[][\arraycolsep]
      \deriv{\ivar{}}{\tvar}{\fvar[w]};
      \filter{\texttt{getValSt}}{\fvar[w]{}}{\mpar{\fvar[v]{},\fvar[\sigma]{}}};
      \filter{\texttt{isBool}}{\fvar[v]}{\fvar[b]};\\
      \filter{\texttt{neg}}{\fvar[b]}{\fvar[b']{}};
      \filter{\texttt{boolVal}}{\fvar[b']}{\fvar[v']};
      \filter{\texttt{mkValSt}}{\fvar[v']{},\fvar[\sigma]{}}{\ovar{}}
      \end{multlined}
    \right]}
  \end{align*}
  \caption{Skeletal semantics for extended \While{} (Expressions)}\label{fig:example:skeletal:semantics:while2:expr}
\end{figure}

\begin{figure}
  \begin{align*}
      \Rulea{Skip}{\sskip}
      {\left[ \filter{\texttt{mkOK}}{\ivar{}}{\ovar{}} \right]}
    \quad
      \Rule{Throw}{\sthrow}
      {\left[ \filter{\texttt{mkExc}}{\ivar{}}{\ovar{}} \right]}
    \\
      \Rulea{Asn}{\asnescape{\tvar[1]}{\tvar[2]}}
      {\left[
    \begin{multlined}[][\arraycolsep]
      \deriv{\ivar{}}{\tvar[2]}{\fvar[w]};
      \filter{\texttt{getValSt}}{\fvar[w]{}}{\mpar{\fvar[v]{},\fvar[\sigma]{}}};\\
      \filter{\texttt{splitSt}}{\fvar[\sigma]{}}{\mpar{\fvar[i]{},\fvar[o]{},\fvar[s]{},\fvar[h]{}}};
      \filter{\texttt{write}}{\tvar[1]{},\fvar[s]{},\fvar[v]}{\fvar[s']{}};\\
      \filter{\texttt{mkSt}}{\fvar[i]{},\fvar[o]{},\fvar[s']{},\fvar[h]}{\fvar[\sigma']};
      \filter{\texttt{mkOK}}{\fvar[\sigma']{}}{\ovar{}}
    \end{multlined}
    \right]}
    \\
      \Rulea{Set}{\heapwrite{\tvar[1]}{\tvar[2]}}
      {\left[
    \begin{multlined}[][\arraycolsep]
      \deriv{\ivar{}}{\tvar[1]}{\fvar[w_{1}]};
      \filter{\texttt{getValSt}}{\fvar[w_{1}]{}}{\mpar{\fvar[v_{1}]{},\fvar[\sigma]{}}};
      \filter{\texttt{isLoc}}{\fvar[v_{1}]}{\fvar[l]}\\
      \deriv{\fvar[\sigma]{}}{\tvar[2]}{\fvar[w_{2}]};
      \filter{\texttt{getValSt}}{\fvar[w_{2}]{}}{\mpar{\fvar[v_{2}]{},\fvar[\sigma']{}}};\\
      \filter{\texttt{splitSt}}{\fvar[\sigma']{}}{\mpar{\fvar[i]{},\fvar[o]{},\fvar[s]{},\fvar[h]{}}};
      \filter{\texttt{set}}{\fvar[l]{},\fvar[h]{},\fvar[v_{2}]}{\fvar[h']{}};\\
      \filter{\texttt{mkSt}}{\fvar[i]{},\fvar[o]{},\fvar[s]{},\fvar[h']}{\fvar[\sigma'']};
      \filter{\texttt{mkOK}}{\fvar[\sigma'']{}}{\ovar{}}
    \end{multlined}
    \right]}
    \\
      \Rulea{Out}{\eout{\tvar[1]}}
      {\left[
    \begin{multlined}[][\arraycolsep]
      \deriv{\ivar{}}{\tvar[1]}{\fvar[w]};
      \filter{\texttt{getValSt}}{\fvar[w]{}}{\mpar{\fvar[v]{},\fvar[\sigma]{}}};\\
      \filter{\texttt{splitSt}}{\fvar[\sigma]{}}{\mpar{\fvar[i]{},\fvar[o]{},\fvar[s]{},\fvar[h]{}}};
      \filter{\texttt{out}}{\fvar[o]{},\fvar[v]{}}{\fvar[o']{}};\\
      \filter{\texttt{mkSt}}{\fvar[i]{},\fvar[o']{},\fvar[s]{},\fvar[h]}{\fvar[\sigma']};
      \filter{\texttt{mkOK}}{\fvar[\sigma']{}}{\ovar{}}
    \end{multlined}
    \right]}
    \\
      \Rulea{Seq}{\seq{\tvar[1]}{\tvar[2]}}
      {\left[
    \deriv{\ivar{}}{\tvar[1]}{\fvar[e]{}};
    \branchesV[\mset{\ovar{}}]{
    \begin{aligned}
      &\filter{\texttt{isOK}}{\fvar[e]{}}{\fvar[\sigma]{}};
      \deriv{\fvar[\sigma]}{\tvar[2]}{\ovar{}} \\
      &\filter{\texttt{isExc}}{\fvar[e]{}}{\fvar[\sigma']{}}; \filter{\texttt{mkExc}}{\fvar[\sigma']}{\ovar}
    \end{aligned}
    }
    \right]}
    \\
      \Rulea{Try}{\stry{\tvar[1]}{\tvar[2]}}
      {\left[
    \deriv{\ivar{}}{\tvar[1]}{\fvar[e]{}};
    \branchesV[\mset{\ovar{}}]{
    \begin{aligned}
      &\filter{\texttt{isOK}}{\fvar[e]{}}{\fvar[\sigma]{}}; \filter{\texttt{mkOK}}{\fvar[\sigma]}{\ovar}\\
      &\filter{\texttt{isExc}}{\fvar[e]{}}{\fvar[\sigma']{}};
      \deriv{\fvar[\sigma']}{\tvar[2]}{\ovar{}}
    \end{aligned}
    }
    \right]}
    \\
      \Rulea{If}{\sif{\tvar[1]}{\tvar[2]}{\tvar[3]}}
      {\left[
    \begin{multlined}[][\arraycolsep]
      \deriv{\ivar{}}{\tvar[1]}{\fvar[w]};
      \filter{\texttt{getValSt}}{\fvar[w]{}}{\mpar{\fvar[v]{},\fvar[\sigma]{}}};\\
      \filter{\texttt{isBool}}{\fvar[v]}{\fvar[b]};
      \branchesV[\mset{\ovar{}}]{\begin{aligned}
          &\filter{\texttt{isTrue}}{\fvar[b]}{};
          \deriv{\fvar[\sigma]{}}{\tvar[2]}{\ovar{}}\\
          &\filter{\texttt{isFalse}}{\fvar[b]}{};
          \deriv{\fvar[\sigma]{}}{\tvar[3]}{\ovar{}}
        \end{aligned}}
    \end{multlined}
            \right]}
    \\
      \Rulea{While}{\while{\tvar[1]}{\tvar[2]}}
      {\left[
      \begin{multlined}[][\arraycolsep]
        \deriv{\ivar{}}{\tvar[1]}{\fvar[w]};
      \filter{\texttt{getValSt}}{\fvar[w]{}}{\mpar{\fvar[v]{},\fvar[\sigma]{}}};
        \filter{\texttt{isBool}}{\fvar[v]}{\fvar[b]};\\
        \branchesV[\mset{\ovar{}}]{\begin{aligned}
                &
                \begin{multlined}[][\arraycolsep]
                  \filter{\texttt{isTrue}}{\fvar[b]}{};
                  \deriv{\fvar[\sigma]{}}{\tvar[2]}{\fvar[e]};\\
                  \branchesV[\mset{\ovar{}}]{
                    \begin{aligned}
                      &\filter{\texttt{isOK}}{\fvar[e]}{\fvar[\sigma']};
                      \deriv{\fvar[\sigma']}{\while{\tvar[1]}{\tvar[2]}}{\ovar{}}\\
                      &\filter{\texttt{isExc}}{\fvar[e]}{\fvar[\sigma'']};
                      \filter{\texttt{mkExc}}{\fvar[\sigma'']}{\ovar{}}
                    \end{aligned}
                  }
                \end{multlined}
                \\
                &\filter{\texttt{isFalse}}{\fvar[b]}{};
                \filter{\texttt{mkOK}}{\fvar[\sigma]{}}{\ovar{}}
          \end{aligned}
        }
      \end{multlined}
                                   \right]}
  \end{align*}
  \caption{Skeletal semantics for \While{2} (Statements)}\label{fig:example:skeletal:semantics:while2:stat}
\end{figure}

\paragraph{Instantiation of concrete interpretation.} We instantiate the \(\inT\) and \(\outT\)
sorts with list of values, denoted by \(L\). We instantiate locations as
integers. A heap is a pair of an integer (the next free location) and a map
from integers to values. We instantiate the \(\IOstateT\) sort as a tuple of
\(\inT\), \(\outT\), \(\stateT\), and \(\heapT\), the \(\valIOstateT\) sort as
a pair of \(\valT\) and \(\IOstateT\), and the \(\excIOstateT\) sort as a pair
of a Boolean and \(\IOstateT\). The \(\valT\) sort is extended to include a case
for locations, as well as the \texttt{intVal}, \texttt{boolVal},
\texttt{isInt}, and \texttt{isBool} filters. The \texttt{locVal} filter
injects a location in the \(\valT\) type, and the \texttt{isLoc} filter applies
if the \(\valT\) argument is a location, which it then returns. The \texttt{in}
filter applies if the input list is not empty, it returns its head and its tail.
The \texttt{alloc} filters applied to \(\mpar{\mpar{n,m},v}\) returns the heap
\(\mpar{n+1,m+n \mapsto v}\). The \texttt{get} filter applies if the location is
in the heap, and it returns the corresponding value. The \texttt{set} filter
applies if the location is in the heap, and it return the heap updated with the
given value. The \texttt{out} filter always apply and adds the given value to
the output list. The \texttt{mkOK} filter (resp. the \texttt{mkExc} filter)
always applies and builds a pair of \textit{true} (resp. \textit{false}) and the
given state. The \texttt{isOK} filter (resp. the \texttt{isExc} filter) applies
if the Boolean is true (resp. is false), it then return the \(\IOstateT\)
component of the tuple. Other filters build or deconstruct tuples.

\paragraph{Instantiation of abstract interpretation.} To illustrate the
flexibility of our approach, we choose a coarse abstraction for the \(\inT\) and
\(\outT\) sorts (they are either \(\abot{}\) or a single abstract value), and a
precise abstraction of heaps: abstract heaps are modelled similar to concrete
heap as a pair \(\mpar{n,\abs{m}}\) of an integer and a mapping from integers to
abstract values. Locations are abstracted as sets of integers. Tuples are
abstracted as tuples of the abstraction of their components. The
tuple-manipulating abstract filters are straightforward, so we only detail the
other ones in Figure \ref{fig:abstract:while2}.

\begin{figure}
  \begin{align*}
    \gamma\funu{\abs{v}} &= \mset[L]{\forall v \in L. v \in \gamma\funu{\abs{v}}}
    & \gamma\funu{\mpar{\abs{a},\abs{b},\abs{c}}} &= \mset[\mpar{a,b,c}]{ a \in
                                                    \gamma\funu{a} \land b \in
                                                    \gamma\funu{b} \land c \in
                                                    \gamma\funu{c} } \\
    \gamma\funu{\abs{l}} &= \abs{l}
    & \gamma\funu{\mpar{n,\abs{m}}} &=
                                      \mset[\mpar{n,m}]{
                                      \begin{gathered}
                                        \dom{m} = \dom{\abs{m}} \\
                                        \forall i \in \dom{m}. m[i] \in
                                        \gamma\funu{\abs{m}[i]}
                                      \end{gathered}
    }
  \end{align*}

\begin{align*}
  \abs{\texttt{in}}\funu{\abs{v}}
&= \mpar{\abs{v},\abs{v}}
& \abs{\texttt{isOK}}\funu{\abs{b},\abs{s}}
&= \begin{cases} \abs{s}
  & \text{if } \abs{\mathit{true}} \sqsubseteq \abs{b}\\
  \abot{\IOstateT}
  &
  \text{otherwise}
\end{cases}\\
  \abs{\texttt{out}}\funu{\abs{v_{1}},\abs{v_{2}}}
 &= \abs{v_{1}} \sqcup \abs{v_{2}}
& \abs{\texttt{isExc}}\funu{\abs{b},\abs{s}}
  &= \begin{cases}
    \abs{s}
 & \text{if } \abs{\mathit{false}} \sqsubseteq \abs{b}\\
 \abot{\IOstateT}
 & \text{otherwise}
\end{cases}\\
  \abs{\texttt{locVal}}\funu{\abs{l}}
  &= \mpar{\abot{\intT},\abot{\boolT},\abs{l}}
  & \abs{\texttt{isval}}\funu{\abs{n},\abs{b},\abs{l}}
 &= \abs{l}\\
  \abs{\texttt{get}}\funu{\abs{l},\mpar{n,\abs{m}}}
 &= \bigsqcup_{l \in \abs{l}} \abs{m}[l]
& \abs{\texttt{alloc}}\funu{\mpar{n,\abs{m}},\abs{v}}
  &=
    \mpar{n+1,
    \abs{m}
    +
    n
    \mapsto \abs{v}}\\
  \abs{\texttt{set}}\funu{\abs{l},\mpar{n,\abs{m}},\abs{v}}
 &= \mathrlap{\mpar{n,\abs{m'}} \text{where} \begin{cases}
     \abs{m'}[l] = \abs{m}[l] \sqcup \abs{v} & \text{if } l \in \abs{l}\\
     \abs{m'}[l] =
     \abs{m}[l] &
     \text{otherwise}
   \end{cases}}
\end{align*}
\caption{Abstract Interpretation of Extended \While{}}\label{fig:abstract:while2}
\end{figure}

\begin{lemma}
  The abstract filters are consistent with the concrete filters.
\end{lemma}

\begin{lemma}
  The abstract semantics of Extended \While{} is correct.
\end{lemma}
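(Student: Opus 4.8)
The plan is to derive this from the language-independent correctness theorem of Section~\ref{subsec:concrete-abstract-consistent} (the one stating that $\asem$ is correct). That theorem's proof is entirely generic: it uses only Lemma~\ref{lem:csem} (so $\csem = \bigcup_n \cinterpf[n]{\emptyset}$), Lemma~\ref{lem:asem-wf} (well-formedness of $\asem$, itself generic via Lemma~\ref{lem:abstr-wf}), and Lemma~\ref{lem:abstr-concr-wf-consistent} (one step of $\cinterpf{}$ and $\ainterpf{}$ preserves well-formedness and consistency), the last resting on the universal consistency of the concrete and abstract interpretations (Lemma~\ref{lem:concr-abstr-consistent}). The only language-dependent inputs to this chain are: (i) each abstract domain for a flow sort is a partial order with a least element; (ii) a concretion function $\concr$ meeting the compatibility, sort-preservation and constructor conditions imposed in Section~\ref{subsec:concrete-abstract-consistent}; and (iii) the abstract filters are consistent with the concrete filters. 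For Extended \While{}, (iii) is precisely the lemma immediately preceding this one, so it remains only to discharge (i) and (ii) for the newly added flow sorts.

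For (i): $\inT$ and $\outT$ are abstracted by a flat domain of single abstract values with an adjoined bottom; $\locT$ by finite sets of integers ordered by inclusion, with bottom $\emptyset$; $\heapT$ by pairs $\mpar{n,\abs{m}}$ ordered with the integer component held fixed and the map component compared pointwise, with the empty heap as bottom; and $\IOstateT$, $\valIOstateT$, $\excIOstateT$ by componentwise products of the already-defined domains, whose least elements are the tuples of component least elements. For (ii): the clauses of Figure~\ref{fig:abstract:while2} build $\concr$ on the new sorts out of $\concr$ on $\intT$, $\boolT$, $\valT$, $\stateT$, which is already known to be $\sqsubseteq$-compatible and sort-preserving; monotonicity and sort-preservation on the new sorts then follow componentwise. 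Since Extended \While{} introduces no new base sorts, the constructor clause $\concr\funu{c\myvecp{\aterm}{n}} = \mset[c\myvecp{\term}{n}]{\term[i] \in \concr\funu{\aterm[i]}}$ and $\concr\funu{\term} = \mset{\term}$ on concrete base terms are inherited verbatim. With (i)--(iii) established, instantiating the correctness theorem yields that $\asem$ for Extended \While{}, the largest fixpoint of $\ainterpf{}$ over well-formed triple sets, is well-formed and consistent with $\csem$, i.e.\ correct by definition.

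The main obstacle is the bookkeeping for the compound state sorts. The abstract $\IOstateT$, $\valIOstateT$, $\excIOstateT$ and the abstract heap are several layers of nesting, so verifying that $\concr$ is monotone and sort-preserving on them, and --- for the preceding filter lemma on which this proof depends --- that the pairs $\texttt{mkSt}/\texttt{splitSt}$, $\texttt{mkValSt}/\texttt{getValSt}$ and $\texttt{mkOK}/\texttt{mkExc}/\texttt{isOK}/\texttt{isExc}$, together with the heap operations $\texttt{get}$, $\texttt{set}$ and $\texttt{alloc}$, are abstract-consistent (in particular, that whenever the concrete filter relates an input to an output the abstract one returns something other than $\bot$ whose concretisation contains that output) requires a somewhat tedious case analysis. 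One must in particular check that the deliberately coarse stream abstraction does not spoil the consistency of $\texttt{in}$ and $\texttt{out}$, and that the heap concretisation's domain-equality condition $\dom{m} = \dom{\abs{m}}$ is preserved by $\texttt{alloc}$, $\texttt{get}$ and $\texttt{set}$. None of this is conceptually hard; it is simply where the real verification effort goes.
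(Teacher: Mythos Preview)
Your proposal is correct and matches the paper's approach: the paper states this lemma without proof, treating it as an immediate instantiation of the language-independent correctness theorem for $\asem$ (Section~\ref{subsec:concrete-abstract-consistent}) once the preceding filter-consistency lemma and the routine domain/concretion requirements are in place. You have simply made explicit the bookkeeping (points (i) and (ii)) that the paper leaves implicit.
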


\section{Related work}


Ott~\cite{Sewell:10:OTT} is a formalism for describing language
semantics and type systems. Ott proposes a meta-language with a
humanly readable syntax for writing semantic definitions as inference
rules, and has facilities for translating these definitions into
executable interpreters and specifications in proof assistants such as
Coq and HOL.
Lem~\cite{Mulligan:14:Lem} offers a core functional language extended
with logical features from proof assistants for writing
semantic models.
Ott can be used to describe static type systems but neither Ott nor
Lem has been used to derive program analyses.   

Action Semantics~\cite{Mosses:ActionSemantics} was developed by Mosses
and Watt as a modular format for writing semantics.  Turi and
Plotkin~\cite{turi1997towards} propose a generic way of defining small
step operational semantics, presented in a category theoretic
framework. More recently, Churchill emph{et
  al.}~\cite{churchill2015reusable} addresses the issue of the
reusability of operational semantics.  Their approach is based on
structures called \emph{fundamental constructs}, or \emph{funcons},
which only specify the changed parts of the state for a given
construct.  For instance, the funcon for \(\sif{}{}{}\) does not
mention environments, but only the Boolean part of the value which is
needed.  Funcons can then be combined to build a programming language.
There is a connection between these funcons and our rules as they are
both meant to capture the whole behaviour of a given language
construct. One difference is that funcons have a certain degree of
sort polymorphism, menaning that \emph{e.g.}, conditional statements
and conditional expressions can be treated by the same
``if''-funcon. Skeletal semantics would treat each sort separately but
would re-use the filters, so as to avoid increasing the proof effort
required.  To the extent of our knowledge, the work on funcons has
been focused on building extendable concrete semantics, and has never
been used to build an abstract semantics.


Views~\cite{Dinsdale-Young:13:Views} has a concrete operational
semantics for control flow,
but is parameterised on the state model and basic
commands. It proposes a program logic for this language, which is
parameterised on the actions of the basic commands.
They prove a general soundness result
stating that it suffices to check soundness for each basic command.
This corresponds in our framework
to the fact that only simple properties on filters need to be checked. Similarly,
Keidel \emph{et al.} \cite{Keidel:2018:CSP:3243631.3236767} very recently proposed to capture
the similarity between a concrete an abstract interpreter using a shared
interpreter parameterised by \emph{arrows} that could be instantiated to concrete
and abstract versions, thus reducing the proof effort needed.

Iris~\cite{jung2017iris} is a concurrent separation logic framework. It is
parameterised by a small-step reduction relation and
it proposes a logic to reason about resources.
This logic is parameterised by a representation of resources
in the form of an algebraic structure called a ``camera''. 
Cameras come with local properties about resources that users have to check.
These local constraints yield the soundness of the Iris logic.
To be used in practice,
Iris requires its users to provides lemmas about weakest preconditions
for each language construct.
These lemmas are easy to find and prove in simple examples 
(such as a vanilla \While{}),
but they require a deep understanding about how one reasons
about the considered language.
Such lemmas can be much complex to express (let alone prove)
in complex languages such as JavaScript~\cite{gardner2012towards}.
We believe that our framework guides the proof effort
by local reasoning: at each step, the abstract interpretation
naturally considers every applicable branch.

The \K{} framework~\cite{rosu-2010-jlap} proposes a formalism for writing
operational semantics and for constructing program verifiers directly
on top of the semantic definitions, as opposed to using an intermediate
representation and/or a verification generator linked to a specific
program logic. The semantic rules are given as rewriting rules over
terms of semantic state.  The \K{} framework has been used to write
semantic definitions of several real-world languages, including C,
Java, and JavaScript.  The program verifiers are based on matching
logic~\cite{rosu-2017-lmcs}, a formalism for reasoning about patterns
and the set of terms that they match. A language-independent set of
proof rules defines a Reachability Logic which can reason about the
set of reachable states of a program. This has been instantiated to
obtain program verifiers reasoning about data structures of
heap-manipulating programs in C, Java, and
JavaScript~\cite{stefanescu-2016-oopsla}.

The \K{} framework has goals similar to ours: derive verifiers from
operational semantics, correct by construction.
A key difference is that the semantics of the \K{} specification tool
is complex and not clearly documented~\cite{li2018isak}.
In this work, we have focused on crystallising a general yet simple
rule format.
Our format enables a general definition of when a semantics is
well-defined and provides a generic correctness theorem for the
derived program verifiers that can be machine-checked in the Coq proof
assistant.



Schmidt initiated the abstract interpretation of big-step operational
semantics~\citep{Schmidt:95:Natural} by showing how to abstract
derivation trees (using co-induction to harness infinite derivations)
and derived classical data flow and control flow analyses as
abstract interpretations. 
Other systematic derivations of static analyses have taken small-step
operational semantics as starting point.
Schmidt~\citep{Schmidt:97:Smallstep} discusses the general principles for such
an approach and compares small-step and big-step operational semantics
as foundations for abstract
interpretation.
Cousot~\citep{Cousot:98:Marktoberdorf} shows how to derive static
analyses for an imperative language defined by a compositional
transition semantics using the principles of abstract
interpretation. Midtgaard and Jensen~\citep{MidtgaardJ:08} use a
similar approach for calculating control-flow analyses for functional
languages from operational semantics in the form of abstract machines.
Van Horn and
Might~\citep{VanHorn:10:Abstracting,VanHorn:11:Abstracting} show how a
series of analyses for higher-order functional languages can be
derived from operational semantics formulated as abstract
machines. The atomic operations of the machines are given an abstract
interpretation and it is shown that the ``abstract abstract machines''
can simulate all the transitions of the concrete abstract machine.  The
abstract machines used by Van Horn and Might can be expressed in our
rule format: the atomic operations correspond to our filters and the
simulation result 
corresponds to our consistency
result for concrete and abstract interpretations. The two works differ
slightly in scope in that we are interested in a general semantic rule
format and its meta-theory whereas Van Horn and Might are concerned
with giving a systematic derivation of advanced analyses for
higher-order languages with state.

Inspired by Schmidt, Bodin \emph{et al.}~\citep{cpp2015} identify a
rule format that can be systematically instantiated to both
concrete and abstract semantics, with a generic  consistency result.
Our work generalises their approach.  Their rule format is based on a
non-standard style of operational semantics, called pretty-big-step
operational semantics~\cite{chargueraud2013pretty}, which cuts up
standard big-step rules into many fine-grained rules. 
In our work, one skeleton describes the behaviour of one language construct.
Our skeletal semantics captures many forms of traditional operational
semantics, such as the traditional big-step semantics studied in this paper.

%


\section{Conclusions and Future Work}

We have introduced a new meta-language for capturing the behaviour of programming languages,
called {\em skeletal semantics}.  A {skeleton} provides a simple
way of describing the {complete} behaviour of a language construct
\emph{in a single definition}. We have given a language-independent,
generic definition of {\em interpretation} of a skeletal semantics,
systematically deriving semantic judgements from the skeletons. We 
have explored four such interpretations: a well-formedness
interpretation; a concrete interpretation; an abstract interpretation;
and a constraint generator for flow-sensitive analysis. 
A key advantage of skeletal
semantics is that we are able to establish general,
language-independent  consistency results,
which can then be instantiated to specific 
programming language by proving simple language-dependent  filter lemmas.

In this paper, we have focused on proving   the fundamental properties of
skeletal semantics, using the simple   \While{} language and its
extensions as illustrative examples.  
We have demonstrated that
 we can capture many language constructs 
including higher-order and object-oriented features. In future, we
would like to explore how our formalism scales to 
real-world languages such as OCaml and JavaScript. 
For instance, the specification of JavaScript~\cite{ECMA2018} is
written in a style where the whole
behaviour of each language construct is described in  a single
definition. It should  be comparatively 
straightforward to provide a specification of JavaScript using   a skeletal semantics.

A distinguishing feature of skeletal semantics is that interpretations
can be used to characterise several styles of semantics, independently
of the language considered. In this paper, we have focused on big-step
semantics.  In future, we plan to capture other forms of semantics 
such as small-step operational semantics, semantics for describing  concurrent,
distributed and interactive computation, and abstract machines, using
an approach similar to~\cite{DBLP:journals/corr/Uustalu13}.

We have interesting proof techniques that are worth
further exploration.  For example, we  have demonstrated  how to add an abstract rule
for state splitting. The proof technique used to validate
this abstract rule, namely that abstract interpretation is a greatest
fixpoint, is not specific to state splitting. We thus  want  to explore
other abstract rules validated by this greatest fixpoint. In
particular, we conjecture that we can use this approach to obtain a
frame rule for skeletal semantics, paving the way for the integration
of separation logic as an abstract interpretation.  It is also
possible to generate better (more precise) constraints than those
given in Section~\ref{SEC:CONSTRAINTS}, as well as constraints for
other analyses  such as control flow analysis. Based on the skeletal
semantics for the \(\lambda\) calculus, we have reproduced the constraint
generation for 0-CFA \cite{Palsberg:Toplas:Closureanalysis}. We are
currently studying how more advanced control flow analyses for other
languages can be expressed in our framework.

Finally, we have mechanised in Coq the definitions of skeletal
semantics and interpretations, and have proved the  general
consistency results. We
have formalised the well-formedness, concrete and abstract
interpretations, verifying that  the abstract interpretation for
the \While{} language is correct. We have also mechanised a skeletal
semantics for the \(\lambda\) calculus. We are currently studying how
to leverage this Coq mechanisation to build a certificate checker for
abstract analysis.


\newpage

\bibliography{biblio}

%

\end{document}
